\newtheorem{theorem}{Theorem}
\newtheorem{proposition}[theorem]{Proposition}
\newtheorem{lemma}[theorem]{Lemma}
\newtheorem{corollary}[theorem]{Corollary}
\theoremstyle{definition}
\newtheorem{definition}[theorem]{Definition}
\newtheorem{remark}[theorem]{Remark}
\theoremstyle{remark}
\newcommand{\cA}{\mathcal{A}}\newcommand{\cB}{\mathcal{B}}
\newcommand{\cC}{\mathcal{C}}
\newcommand{\cH}{\mathcal{H}}
\newcommand{\cI}{\mathcal{I}}
\newcommand{\cY}{\mathcal{Y}}
\newcommand{\bC}{\mathbb{C}}
\newcommand{\bF}{\mathbb{F}}
\newcommand{\bN}{\mathbb{N}}
\newcommand{\bZ}{\mathbb{Z}}
\newcommand{\1}{\mathds{1}}
\newcommand{\poly}{\operatorname{poly}}
\newcommand{\Cay}{\operatorname{Cay}}
\newcommand{\Dec}{\operatorname{Dec}}
\newcommand{\CSS}{\operatorname{CSS}}
\newcommand{\dis}{\operatorname{dis}}
\newcommand{\Ham}{\mathbf{H}}
\newcommand{\clustX}[1]{Y_X^{#1}}
\newcommand{\clustZ}[1]{Y_Z^{#1}}
\newcommand{\transX}[1]{\cY_X^{#1}}
\newcommand{\transZ}[1]{\cY_Z^{#1}}
\newcommand{\allones}{\mathbf{1}}
\newcommand{\Tan}{\operatorname{Tan}}
\newcommand{\nc}{\newcommand}
\nc{\on}{\operatorname}
\nc{\Spec}{\on{Spec}}
\nc{\Aut}{\textit{Aut}}
\nc{\id}{\textit{id}}
\nc{\chr}{\on{char}}
\nc{\im}{\on{im}}
\nc{\Hom}{\on{Hom}}
\nc{\lcm}{\on{lcm}}
\nc{\dual}[1]{\prescript{t}{}{#1}}
\nc{\transpose}[1]{{#1}^{\intercal}}
\nc{\Sym}{\on{Sym}}
\nc{\End}{\on{End}}
\nc{\stab}{\on{stab}}
\nc{\Li}{\on{Li}}
\nc{\spn}{\on{span}}
\nc{\sgn}{\on{sgn}}
\nc{\supp}{\on{supp}}
\nc{\Unif}{\on{Unif}}
\title{NLTS Hamiltonians and Strongly-Explicit SoS Lower Bounds \\  from Low-Rate Quantum LDPC Codes}
\author{Louis Golowich \\
  UC Berkeley \\
  \href{mailto:lgolowich@berkeley.edu}{\texttt{lgolowich@berkeley.edu}}
  \and
  Tali Kaufman \\
  Bar-Ilan University \\
  \href{mailto:kaufmant@mit.edu}{\texttt{kaufmant@mit.edu}}
}
\begin{document}

\pagenumbering{gobble}

\maketitle


\begin{abstract}
Recent constructions of the first asymptotically good quantum LDPC (qLDPC) codes led to two breakthroughs in complexity theory: the NLTS (No Low-Energy Trivial States) theorem (Anshu, Breuckmann, and Nirkhe, STOC'23), and explicit lower bounds against a linear number of levels of the Sum-of-Squares (SoS) hierarchy (Hopkins and Lin, FOCS'22).

In this work, we obtain improvements to both of these results using qLDPC codes of \textit{low rate}:
\begin{itemize}
\item Whereas Anshu et al.~only obtained NLTS Hamiltonians from qLDPC codes of linear dimension, we show the stronger result that qLDPC codes of arbitrarily small positive dimension yield NLTS Hamiltonians. 
\item The SoS lower bounds of Hopkins and Lin are only weakly explicit because they require running Gaussian elimination to find a nontrivial codeword, which takes polynomial time. We resolve this shortcoming by introducing a new method of planting a strongly explicit nontrivial codeword in linear-distance qLDPC codes, which in turn yields strongly explicit SoS lower bounds.
\end{itemize}
Our ``planted'' qLDPC codes may be of independent interest, as they provide a new way of ensuring a qLDPC code has positive dimension without resorting to parity check counting, and therefore provide more flexibility in the code construction.

\end{abstract}




\pagenumbering{arabic}

\section{Introduction}
Recent breakthrough constructions of asymptotically good quantum LDPC (qLDPC) codes \cite{panteleev_asymptotically_2022,leverrier_quantum_2022-1,dinur_good_2023} have led to major advances in complexity theory. Specifically, Anshu et al.~\cite{anshu_nlts_2023} applied these codes to prove the NLTS theorem, which provides perhaps the most significant progress to date towards the quantum PCP conjecture. Meanwhile, Hopkins and Lin~\cite{hopkins_explicit_2022-1} applied the same codes to obtain the first explicit lower bounds against a linear number of levels of the Sum-of-Squares semidefinite programming (SoS SDP) hierarchy, which is one of the most powerful algorithmic frameworks for approximating the satisfiability of constraint satisfaction problems (CSPs).

In this paper, we improve upon both of these complexity theoretic results. Along the way, we introduce a new method for ensuring a qLDPC code has positive dimension, which may be of independent interest. Our contributions are therefore threefold:
\begin{enumerate}
\item \textbf{NLTS Hamiltonians from low-rate codes:} The breakthrough construction of NLTS Hamiltonians of \cite{anshu_nlts_2023} from asymptotically good qLDPC codes relied on both the linear dimension and distance of the codes. A promising approach \cite{nirkhe_making_2023} for further progress towards qPCP is to construct more general NLTS Hamiltonians with additional properties. We make progress in this direction by constructing NLTS Hamiltonians from qLDPC codes of arbitrary positive dimension, thereby removing the linear-dimension requirement in \cite{anshu_nlts_2023}. Our result highlights the usefulness of local Hamiltonians with low-dimensional ground spaces for studying qPCP. Our proof leverages techniques of \cite{eldar_local_2017}, which conjecturally constructed NLTS Hamiltonians from linear-distance quantum locally testable codes of arbitrary positive dimension (which are not known to exist). However, we obtain the NLTS property without assuming local testability nor linear dimension. Instead, the key ingredient ensuring NLTS Hamiltonians is a small-set expansion property of the qLDPC codes.
\item \textbf{Planted quantum LDPC codes:} We show how to plant an explicit nontrivial codeword in a linear-distance qLDPC code, which may have otherwise had rate $0$. To the best of our knowledge, this construction yields the first linear-distance qLDPC codes for which nontrivial dimension is established without resorting to parity-check counting. It has been an open question in the literature to develop new such techniques for bounding dimension (see for instance Section~1.1 of \cite{dinur_new_2023}, and also \cite{dikstein_locally_2020}).
\item \textbf{Strongly explicit SoS lower bounds:} We apply our planted qLDPC codes to obtain the first \textit{strongly} explicit family of CSPs that cannot be refuted by a linear number of levels of the SoS hierarchy. This result strengthens the work of \cite{hopkins_explicit_2022-1}, which provided the first \textit{weakly} explicit construction of such an SoS lower bound using qLDPC codes. Our improvement stems from the fact that our planted codes have planted codeword given by the all-1s vector, which is strongly explicit.
\end{enumerate}

These results together show new ways to both construct and apply qLDPC codes of low rate. In the remainder of this section, after providing some background on qLDPC codes, we describe each of these results in more depth. We then discuss open questions that arise from our results.

\subsection{Background on qLDPC Codes}
This section provides some definitions we will need to state our results. The quantum codes we consider in this paper are quantum CSS codes. An $n$-qudit CSS code $\cC=\CSS(C_X,C_Z)$ of alphabet size (i.e.~local dimension) $q$ is defined by a pair of classical codes $C_X,C_Z\subseteq\bF_q^n$ such that $C_X^\perp\subseteq C_Z$. The associated quantum code is then given by $\cC=\spn\{\sum_{y'\in C_X^\perp}\ket{y+y'}:y\in C_Z\}$. This code has dimension $k=\dim(C_Z)-\dim(C_X^\perp)$ and distance $d=\min_{y\in(C_Z\setminus C_X^\perp)\cup(C_X\setminus C_Z^\perp)}|y|$, meaning it encodes a $k$-qudit message into an $n$-qudit code state, and the message can be recovered from any $n-(d-1)$ code qudits. We assume $C_X=\ker H_X,C_Z=\ker H_Z$ for associated parity check matrices $H_X\in\bF_q^{m_X\times n},H_Z\in\bF_q^{m_Z\times n}$. If every row and colum of $H_X$ and $H_Z$ has Hamming weight $\leq\ell$, we say that $\cC$ has locality $\ell$. A family of qLDPC codes is a family of codes with constant locality $\ell$ and growing block length $n$.

It was a longstanding open question to construct linear-distance qLDPC codes. This question was resolved by Panteleev and Kalachev \cite{panteleev_asymptotically_2022}, who obtained qLDPC codes of linear distance and linear dimension. Subsequent works \cite{leverrier_quantum_2022-1,dinur_good_2023} provided additional related constructions.

These codes in fact possess\footnote{\cite{hopkins_explicit_2022-1} were the first to consider small-set (co)boundary expansion for linear-distance qLDPC codes, and showed that the codes of \cite{leverrier_quantum_2022-1} possess this property. \cite{dinur_good_2023} later constructed additional good qLDPC codes for which they proved this expansion property. We explain at the end of Section~\ref{sec:ssexp} why the decoder of \cite{leverrier_efficient_2023,leverrier_decoding_2023} implies that the codes of \cite{panteleev_asymptotically_2022} also possess this expansion property.} the following stronger notion of distance, which guarantees that all low-weight errors have syndromes whose weight is linear in the error weight (as opposed to just having nonzero syndromes). Below, for a code $C$, we denote $|y|_C=\min_{y'\in C}|y+y'|$.

\begin{definition}[Small-set (co)boundary expansion; restatement of Definition~\ref{def:ssexp}]
  \label{def:ssexpinf}
  Let $\cC=\CSS(C_X=\ker H_X,C_Z=\ker H_Z)$ be a CSS code given by parity check matrices $H_X\in\bF_q^{m_X\times n}$ and $H_Z\in\bF_q^{m_Z\times n}$. For $c_1,c_2>0$, we say that $\cC$ has \textbf{$(c_1,c_2)$-small-set boundary expansion} if it holds for every $y\in\bF_q^n$ with $|y|\leq c_1n$ that
  \begin{equation*}
    \frac{|H_Zy|}{m_Z} \geq c_2\frac{|y|_{C_X^\perp}}{n}.
  \end{equation*}
  Similarly, $\cC$ has \textbf{$(c_1,c_2)$-small-set coboundary expansion} if it holds for every $y\in\bF_q^n$ with $|y|\leq c_1n$ that
  \begin{equation*}
    \frac{|H_Xy|}{m_X} \geq c_2\frac{|y|_{C_Z^\perp}}{n}.
  \end{equation*}
\end{definition}

This notion of small-set (co)boundary expansion underlies both the NLTS Hamiltonians of \cite{anshu_nlts_2023} and the SoS lower bounds of \cite{hopkins_explicit_2022-1}. Note that a code with $(c_1,c_2)$-small set boundary and coboundary expansion by definition has distance $\geq c_1n$.

\subsection{NLTS Hamiltonians from Low-Rate qLDPC Codes}
The quantum PCP (qPCP) conjecture, which states that it is QMA-hard to compute a constant-factor approximation to the ground energy of a local Hamiltonian, is a major open question in quantum complexity theory that has remained largely elusive. Perhaps the most significant progress towards this conjecture was the NLTS theorem, which was recently proven by Anshu, Breuckmann, and Nirkhe \cite{anshu_nlts_2023} using an application of asymptotically good qLDPC codes. This result provides a family of local Hamiltonians that have ``no low-energy trivial states'' (NLTS), where a trivial state is one computed by a constant-depth circuit. The NLTS theorem therefore provides local Hamiltonians exhibiting a weaker form of hardness of approximation than required by qPCP, and is indeed a necessary consequence of the qPCP conjecture under the widely believed assumption that $\text{NP}\neq\text{QMA}$.

Anshu et al.~\cite{anshu_nlts_2023} constructed their NLTS Hamiltonians using the asymptotically good quantum Tanner codes of \cite{leverrier_quantum_2022-1}. In particular, their proof of NLTS relied on the codes having both linear distance and dimension. It was an open question whether such linear dimension was necessary for NLTS. This question is motivated by the suggestion \cite{nirkhe_making_2023} that constructing more general families of NLTS Hamiltonians may lead to further progress towards the qPCP conjecture. Furthermore, some earlier partial progress towards NLTS used codes of smaller dimension \cite{eldar_local_2017}, which again raises the question of whether linear dimension is necessary. Our main result on NLTS resolves this question, as we obtain NLTS Hamiltonians from qLDPC codes of arbitrary positive dimension.

NLTS Hamiltonians are formally defined as follows. Recall that a family of Hamiltonians is \textit{$\ell$-local} if every $\Ham$ in the family can be expressed as a sum of Hamiltonians, each of which act nontrivially on $\leq\ell$ qubits. If $\ell=O(1)$ we say the family is \textit{local}. We also say that a state $\rho$ is an \textit{$\epsilon$-approximate ground state} of a Hamiltonian $\Ham\succeq 0$ if $\Tr(\rho\Ham)\leq\epsilon$.

\begin{definition}[NLTS Hamiltonians]
  A family of local Hamiltonians $(\Ham_n)_{n\rightarrow\infty}$ with $0\preceq\Ham_n\preceq I$ is \textbf{NLTS} if there exists $\epsilon>0$ such that the minimum depth of any quantum circuit computing an $\epsilon$-approximate ground state of $\Ham_n$ approaches $\infty$ as $n\rightarrow\infty$.
\end{definition}

Recall that for a CSS code $\cC=\CSS(C_X=\ker H_X,C_Z=\ker H_Z)$, the associated code Hamiltonian is given by
\begin{equation*}
  \Ham = \frac12(\Ham_X+\Ham_Z)
\end{equation*}
for
\begin{align*}
  \Ham_X &= \frac{1}{m_X}\sum_{y\in\mathrm{rows}(H_X)}\frac{I-X^y}{2} \\
  \Ham_Z &= \frac{1}{m_Z}\sum_{y\in\mathrm{rows}(H_Z)}\frac{I-Z^y}{2},
\end{align*}
where $X$ and $Z$ denote the respective Pauli operators. Thus in particular the ground space of $\Ham$ is precisely the code space $\cC=\spn\{\sum_{y'\in C_X^\perp}\ket{y+y'}:y\in C_Z\}$.

Anshu et al.~\cite{anshu_nlts_2023} showed that for every family of qLDPC codes with linear dimension and constant small-set boundary and coboundary expansion, the associated code Hamiltonians are NLTS. Thus for instance the quantum Tanner codes of \cite{leverrier_quantum_2022-1} yield NLTS code Hamiltonians.

Our result below improves upon this result of \cite{anshu_nlts_2023} by removing the linear dimension requirement.

\begin{theorem}[NLTS from low-rate codes; informal statement of Corollary~\ref{cor:nlts}]
  \label{thm:nltsinf}
  Let $(\cC^{(n)})_{n\rightarrow\infty}$ be an infinite family of qLDPC codes over the alphabet\footnote{For simplicity we restrict attention to the binary alphabet $\bF_2$ in our proof of Theorem~\ref{thm:nltsinf}, though we suspect the result should extend to arbitrary alphabets $\bF_q$.} $\bF_2$ of block length $n$ and positive dimension that have $(c_1,c_2)$-small set boundary and coboundary expansion for some constants $c_1,c_2>0$. Then the family of associated code Hamiltonians $(\Ham^{(n)})_{n\rightarrow\infty}$ is NLTS.
\end{theorem}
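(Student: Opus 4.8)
The plan is to prove the contrapositive: fix a constant $\epsilon>0$ and show that any family of quantum circuits producing $\epsilon$-approximate ground states of $\Ham^{(n)}$ must have depth growing with $n$. So suppose $\rho$ is produced by a depth-$t$ circuit on the $n$ code qubits plus a few ancillas, with $\Tr(\rho\Ham^{(n)})\le\epsilon$; since $\Ham=\frac12(\Ham_X+\Ham_Z)$ with $\Ham_X,\Ham_Z\succeq0$, both $\Tr(\rho\Ham_X)$ and $\Tr(\rho\Ham_Z)$ are at most $2\epsilon$. Using positive dimension, I would fix two distinct cosets of $C_X^\perp$ inside $C_Z$ (a single logical ``$Z$-bit''), and dually two cosets of $C_Z^\perp$ inside $C_X$, and aim for a contradiction with $t=O(1)$.

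\emph{Step 1 (clustered structure of low-energy states, following \cite{anshu_nlts_2023}).} Measuring $\rho$ in the computational basis gives a distribution $q$ on $\bF_2^n$ with $\mathbb{E}_{z\sim q}[|H_Zz|/m_Z]\le2\epsilon$, so by Markov most of the mass of $q$ lies on strings violating an $O(\sqrt\epsilon)$-fraction of the $Z$-checks. Here I would invoke small-set boundary expansion (Definition~\ref{def:ssexpinf}) to upgrade this syndrome bound to a distance bound: a typical $z\sim q$ is within Hamming distance $o(n)$ of $C_Z$, and since a $(c_1,c_2)$-expanding code has distance $\ge c_1n$, any two cosets of $C_X^\perp$ inside $C_Z$ differ in $\ge c_1n$ coordinates, so a typical $z$ syndrome-decodes to a well-defined such coset. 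Grouping $q$ by the decoded coset, and using that small $\Ham_X$-energy forces $\rho$ to be nearly invariant (on average over rows of $H_X$) under the corresponding $X$-shifts, this should place $\rho$ within small trace distance of an incoherent mixture $\sum_{x\in C_Z/C_X^\perp}p_x\rho_x$ with $\rho_x$ supported on the $o(n)$-neighborhood of coset $x$; the same argument in the Hadamard basis, using small-set coboundary expansion, gives the dual decomposition over $C_X/C_Z^\perp$.

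\emph{Step 2 (shallow circuits cannot carry a globally-hidden logical bit, following \cite{eldar_local_2017}).} The clusters above are carried by affine subspaces that are pairwise $\ge c_1n$ apart, so the ``which cluster'' bit is globally determined but, by error correction against fewer than $c_1n$ erasures, is invisible on every set of $<c_1n$ qubits. On the other hand a depth-$t$ circuit state has reduced density matrices that factorize across any family of qubit subsets with pairwise disjoint circuit lightcones, each lightcone of size $2^{O(t)}$. I would then partition the qubits into many blocks, each larger than $2^{O(t)}$ but smaller than $c_1n$, and play the shallow-circuit locality against the global cluster structure exactly as in \cite{eldar_local_2017}: blockwise, $\rho$ looks like a mixture of code-state marginals that are insensitive to the logical bit, whereas globally $\rho$ must reproduce the nontrivial distribution $(p_x)_x$ over the two chosen clusters (and, via the dual decomposition, over the conjugate logical variable). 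The role of locally-testable-code soundness in \cite{eldar_local_2017} would be played here by small-set (co)boundary expansion, which is exactly what makes ``few violated checks on average'' imply ``blockwise close to the code'' in the strong form needed to glue the blocks. Pushing this through should force $2^{O(t)}\gtrsim c_1n$, i.e.\ $t=\Omega(\log n)$, contradicting $t=O(1)$.

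\emph{Main obstacle.} I expect Step~1 to be a fairly routine adaptation of \cite{anshu_nlts_2023} to a single logical qubit. The crux is Step~2: \cite{anshu_nlts_2023} can exploit that linear dimension yields $2^{\Omega(n)}$ roughly-balanced clusters, whence the depth bound follows from an entropy/counting argument, but with only positive dimension there may be a single logical qubit, so the depth bound has to come purely from the coexistence of two locally-indistinguishable yet globally-distinct low-energy components. Making this work requires importing the lightcone machinery of \cite{eldar_local_2017} and, above all, checking that small-set (co)boundary expansion really is a strong enough stand-in for their (conjectural) local-testability hypothesis -- a nontrivial point, since $C_Z$ and $C_X$ themselves have only constant minimum distance, so there is no naive ``a block is a good code'' argument to fall back on.
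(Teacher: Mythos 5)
Your Step~1 is essentially right and tracks the paper: low-energy states measured in the $Z$ (resp.\ $X$) basis are supported on strings that partition into clusters at pairwise distance $\geq c_2 n$ and indexed by cosets of $C_X^\perp$ (resp.\ $C_Z^\perp$), and the clustering is exactly what small-set (co)boundary expansion buys (the paper's Definition~\ref{def:clustering} and Lemma~\ref{lem:clust}). But Step~2 has a genuine gap, and it is precisely at the point you flag as the ``main obstacle.'' A lightcone/block argument on its own does not rule out an $\epsilon$-approximate ground state whose $Z$-basis mass sits \emph{entirely inside a single cluster}. Indeed, the true code state $\ket{y+C_X^\perp}$ for a fixed $y\in C_Z$ is concentrated on one $Z$-cluster, so ``the global distribution $(p_x)$ over clusters is nontrivial'' is simply false in general. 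Block-lightcone disjointness tells you that marginals can't see a cluster label, but if only one cluster is occupied there is no label to see, and the contradiction you are reaching for evaporates.

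What actually carries the argument with $k=O(1)$ is the Hofmann uncertainty principle (Lemma~\ref{lem:uncertainty}): since $k>0$, there exist anticommuting logical Paulis $\bar X = X^{\bar c_Z}$, $\bar Z = Z^{\bar c_X}$, and for any state at least one of $|\Tr(\bar X\cdot)|,\,|\Tr(\bar Z\cdot)|$ is at most $1/2+1/(2\sqrt2)$. To make logical expectations meaningful on an approximate ground state, the paper builds a \emph{cluster-decoding} channel $\Dec^1$ (Lemmas~\ref{lem:clust} and~\ref{lem:decclust}), which sends every element of a given $Z$-cluster into a single coset of $C_X^\perp$ inside $C_Z$ --- crucially, even though that coset may be far from the cluster --- so that $\Tr(\bar Z\,\Dec_Z^1(\psi'))=D_Z^{\psi'}(S_Z^0)-D_Z^{\psi'}(S_Z^1)$ for the two cluster-classes $S_Z^0,S_Z^1$ at distance $\geq c_2 n$. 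Boundedness of one of the two logical expectations then forces well-spreadness in at least one of the $X$ or $Z$ measurement bases (Proposition~\ref{prop:purespread}), and only at that point does the standard lightcone bound (Lemma~\ref{lem:circbound}) enter, once, to turn spreadness into a depth lower bound. Your outline has the spreadness-to-depth step and the clustering, but is missing both the decoding map and the anticommuting-observable mechanism that together produce spreadness in the single-logical-qubit regime; without them, the ``two locally indistinguishable yet globally distinct components'' you want need not both exist.
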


Our proof of Theorem~\ref{thm:nltsinf} follows the general framework of \cite{eldar_local_2017,anshu_nlts_2023} in showing circuit lower bounds for code Hamiltonians. Specifically, Eldar and Harrow \cite{eldar_local_2017} showed that in order to show the code Hamiltonians $\Ham$ are NLTS, it suffices to show that every distribution obtained by measuring an approximate ground state of $\Ham$ in either the $X$ or $Z$ basis is \textit{well spread}. Here a distribution $D$ over $\bF_2^n$ is well spread if there exist sets $S_0,S_1\subseteq\bF_2^n$ separated by a linear Hamming distance $\dis(S_0,S_1)\geq\Omega(n)$ such that $D$ assigns constant probability $D(S_0),D(S_1)\geq\Omega(1)$ to both sets.

Both \cite{eldar_local_2017,anshu_nlts_2023} show this well-spreadness property for code Hamiltonians by combining a distance/expansion property of the code with an uncertainty principle. However, the two works different use assumptions on the code as well as different uncertainty principles:
\begin{itemize}
\item \cite{eldar_local_2017} assumes the code is locally testable and of linear distance, which implies the approximate ground states have a certain linear structure. They then use an uncertainty principle (see Lemma~\ref{lem:uncertainty}) that is able to leverage this linear structure and prove well-spreadness regardless of the code dimension.
\item \cite{anshu_nlts_2023} assumes the code has small-set boundary and coboundary expansion, which is weaker than local testability and therefore yields less structure in the approximate ground states. They then use a different uncertainty principle with which they are still able to prove well-spreadness, but only for codes of linear dimension.
\end{itemize}
Because linear-distance quantum locally testable codes are not known to exist, the NLTS Hamiltonians of \cite{eldar_local_2017} remain conjectural.

We prove Theorem~\ref{thm:nltsinf} by combining these two approaches: we make the weaker assumption that our code has small-set boundary and coboundary expansion, but show that the approximate ground states still have enough linear structure to apply the uncertainty principle in Lemma~\ref{lem:uncertainty}. We then conclude that the code Hamiltonians are NLTS regardless of the code dimension.

At the core of our argument is the application of a ``decoding'' procedure for approximate ground states of codes with small-set (co)boundary expansion, which is unintuitive in the sense that far-apart approximate ground states may decode to the the same true ground state. However, we are able to show that in some sense, the low-energy space of the code Hamiltonian acts similarly enough to a true code space that the argument still goes through.

\subsection{Planted Quantum LDPC Codes}
\label{sec:plantinf}
This section presents our result on planting a nontrivial codeword in qLDPC codes.

The recent breakthrough constructions of linear-distance qLDPC codes (\cite{panteleev_asymptotically_2022}, followed by \cite{leverrier_quantum_2022-1,dinur_good_2023}) all bound the code dimension by counting parity checks. Specifically, these works use the fact that if $\cC=\CSS(C_X=\ker H_X,C_Z=\ker H_Z)$ for $H_X\in\bF_q^{m_X\times n},H_Z\in\bF_q^{m_Z\times n}$, then $\cC$ has dimension $k\geq n-m_X-m_Z$. However, this bound may not be tight if there are redundant parity checks in $H_X,H_Z$. Indeed, it has been an open question in the coding theory literature to provide new ways of ensuring that LDPC codes have positive dimension; for instance, this question was of central importance in the code constructions of \cite{dikstein_locally_2020,dinur_new_2023}.

Our result below makes progress on this question, by showing how to plant a nontrivial codeword in the linear-distance quantum Tanner codes of \cite{leverrier_quantum_2022-1}. In fact, we show that like the codes of \cite{leverrier_quantum_2022-1} our planted codes possess small-set (co)boundary expansion.

\begin{theorem}[Planted quantum Tanner codes; restatement of Theorem~\ref{thm:plantedresult}]
  \label{thm:plantedresultinf}
  For every finite field $\bF_q$, there exist constants $c_1,c_2>0$ such that there is a strongly explicit infinite family $(\cC^{(n)})_{n\rightarrow\infty}$ of quantum LDPC CSS codes for which every $\cC^{(n)}=\CSS(C_X^{(n)},C_Z^{(n)})$ with $C_X^{(n)},C_Z^{(n)}\subseteq\bF_q^n$ has the following properties:
  \begin{enumerate}
  \item $\cC^{(n)}$ has $(c_1,c_2)$-small-set boundary and coboundary expansion, and therefore has distance $\geq c_1n$.
  \item The all-1s vector $\allones\in\bF_q^n$ lies in $C_X^{(n)}\setminus {C_Z^{(n)}}^\perp$ and in $C_Z^{(n)}\setminus {C_X^{(n)}}^\perp$.
  \end{enumerate}
\end{theorem}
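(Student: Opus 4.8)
The plan is to recall the structure of the quantum Tanner codes of \cite{leverrier_quantum_2022-1}, which are built from a Cayley graph on a group $G$ with two sets of generators, together with a pair of local classical ``small'' codes on the $\Delta$ coordinates indexed by the generators. The codewords $C_X, C_Z$ live on the edge set of a left–right Cayley complex, and membership is a local tensor-code condition imposed at every vertex. My first step is to identify a candidate planted codeword: following the statement, I would take the all-$1$s vector $\allones$ on all $n$ edges and ask what constraints on the local codes guarantee $\allones \in C_X \cap C_Z$. Since the local check at each vertex asks that the restriction of a word to the $\Delta$-dimensional ``view'' lie in the tensor product of the two small codes (or its dual, depending on the $X$ vs.\ $Z$ side), the all-$1$s edge vector restricts to the all-$1$s vector on each view; hence $\allones$ is a codeword precisely when the all-$1$s vector of length $\Delta$ belongs to both small codes and to both of their duals. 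So the key design choice is: pick the small codes $C_0, C_1 \subseteq \bF_q^\Delta$ so that $\allones \in C_0 \cap C_0^\perp$ and $\allones \in C_1 \cap C_1^\perp$ — i.e., each small code contains the all-$1$s word and is self-orthogonal on it. This is easy to arrange: e.g.\ take $\Delta$ divisible by the characteristic (or more robustly, choose $C_i$ to be a self-orthogonal code of the right rate containing $\allones$, which exists for suitable parameters since $\langle \allones, \allones\rangle = \Delta$ can be made $0$ in $\bF_q$). I would need to check that the rate/distance constraints on the small codes required by the Leverrier–Zémor analysis (and by the small-set expansion proof) are compatible with this extra linear constraint; since those constraints only ask that $C_i$ and $C_i^\perp$ have rate bounded away from $0$ and $1$ and distance bounded below, imposing one extra coordinate relation changes nothing asymptotically.

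Second, I would verify the small-set (co)boundary expansion (property~1). The cleanest route is to observe that the Leverrier–Zémor / Dinur et al.\ proof of small-set expansion for quantum Tanner codes is \emph{robust to the choice of local codes} as long as they satisfy the stated rate and distance (and robustness/product-expansion) hypotheses: the expansion is derived from the spectral expansion of the Cayley graph together with local-code distance, not from any counting of parity checks. So once the small codes are chosen self-orthogonal-containing-$\allones$ while retaining the needed distance, the same argument yields constants $c_1, c_2 > 0$ with $(c_1,c_2)$-small-set boundary and coboundary expansion, and hence distance $\geq c_1 n$. The only subtlety is making sure the generating sets and group family can be chosen so that the Cayley graph is a good (Ramanujan-ish) expander \emph{and} $\Delta$ meets the divisibility/self-orthogonality requirement simultaneously; this should follow by the usual freedom in picking $G$ (e.g.\ $\PSL_2$ or a Ramanujan complex) with generator count a parameter we control.

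Third, property~2 asks for the stronger ``nontriviality'' statement that $\allones \in C_X \setminus C_Z^\perp$ and $\allones \in C_Z \setminus C_X^\perp$, i.e.\ the planted codeword is \emph{not} a stabilizer. Here I would use the distance bound from property~1: by the linear-distance guarantee, any vector in $C_Z^\perp$ or $C_X^\perp$ of weight less than $c_1 n$ is $0$, but more to the point the small-set expansion implies that a nonzero vector lying in, say, $C_X \cap C_Z^\perp$ would have to have weight $\geq c_1 n$ — so this does not immediately rule out $\allones$, which has full weight $n$. Instead I would argue directly: $\allones \in C_X^\perp$ would mean the all-$1$s word is in the row span of $H_X$, equivalently (dualizing the Tanner construction) the all-$1$s edge-vector arises from a degree-$1$ cochain supported on vertices, and one checks this forces a relation on the small \emph{dual} codes that contradicts the distance of $C_i$ chosen above; alternatively, and more cleanly, I would show $\langle \allones, w\rangle \neq 0$ for some explicit $w \in C_Z$, which certifies $\allones \notin C_Z^\perp$, and symmetrically. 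Picking $w$ to be a low-weight codeword of $C_Z$ (which exists and can be taken with weight not divisible by the characteristic, by a parity/counting argument on the local views), we get $\langle \allones, w \rangle = |w| \neq 0$ in $\bF_q$, hence $\allones \notin C_Z^\perp$; the same for $C_X^\perp$.

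I expect the main obstacle to be the third step — ensuring $\allones$ is genuinely non-stabilizer, i.e.\ $\allones \notin C_X^\perp \cup C_Z^\perp$ — because this is exactly the place where the usual parity-check-counting argument for positive dimension is being replaced, and one must instead exploit fine structure of the Tanner construction (the precise form of $H_X, H_Z$ as vertex-localized tensor-code checks) to rule out the all-$1$s word being a combination of checks. Concretely the risk is that with a careless choice of small codes, $\allones$ \emph{does} lie in a dual code and the planting is vacuous; avoiding this requires coordinating the self-orthogonality constraint ($\allones \in C_i^\perp$, needed for $\allones$ to be a codeword on both sides) against the non-triviality constraint (a low-weight witness $w$ with $|w|\not\equiv 0$), and checking these are jointly satisfiable for infinitely many block lengths. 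Everything else — the expansion, the explicitness (the construction is strongly explicit because $G$, the generators, and the fixed small codes are), and the distance — should follow by importing the cited results essentially verbatim.
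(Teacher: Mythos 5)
Your proposal has the right general template (plant $\allones$ via constraints on the inner codes, then argue it survives to a nontrivial logical operator), but there are several concrete errors that make the plan fail, and you miss the single most important ingredient of the paper's argument.

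First, your description of the local check is wrong in a way that leads you to impose the wrong conditions. For a quantum Tanner code the local view at each vertex is a $\Delta\times\Delta$ grid of squares (size $\Delta^2$, not $\Delta$), and the local condition on $C_X$ is membership in the \emph{dual tensor code} $(C_A\otimes C_B)^\perp$, while for $C_Z$ it is membership in $(C_A^\perp\otimes C_B^\perp)^\perp$. The all-1s $\Delta\times\Delta$ matrix lies in $(C_A\otimes C_B)^\perp$ iff $\allones_\Delta\in C_A^\perp$ or $\allones_\Delta\in C_B^\perp$ (one suffices), and lies in $(C_A^\perp\otimes C_B^\perp)^\perp$ iff $\allones_\Delta\in C_A$ or $\allones_\Delta\in C_B$. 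The paper imposes exactly $\allones\in C_A$ and $\allones\in C_B^\perp$. You instead demand $\allones\in C_i\cap C_i^\perp$ for \emph{both} inner codes. Besides being unnecessary, this forces $\langle\allones,\allones\rangle=\Delta=0$ in $\bF_q$, i.e.\ $\chr(\bF_q)\mid\Delta$, and that is precisely what must be \emph{avoided}: since $n=|G||A||B|=|G|\Delta^2$, it implies $\chr(\bF_q)\mid n$, which destroys the nontriviality argument below.

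Second, you entirely miss the coprimality condition $\gcd(n,q)=1$, which is the crux of the paper's proof of property~2 and also its main technical difficulty. The paper's nontriviality argument is a clean arity/parity mismatch: because $\allones\in C_A$, every codeword of $C_A^\perp$ has components summing to $0$, hence every local generator of $C_Z^\perp$ (a codeword of $C_A^\perp\otimes C_B^\perp$ in a vertex neighborhood) has components summing to $0$, hence every element of $C_Z^\perp$ sums to $0$; but $\sum_i\allones_i=n\neq 0$ in $\bF_q$ provided $\gcd(n,q)=1$, so $\allones\notin C_Z^\perp$ (and symmetrically for $C_X^\perp$ using $\allones\in C_B^\perp$). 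Arranging $\gcd(n,q)=1$ is nontrivial: $n$ is a multiple of $|G|$, and the standard Ramanujan Cayley graph families have $|G|$ divisible by small primes, so the paper substitutes the Lubotzky--Weiss Cayley expanders over groups of order $p^{3m}$ and then amplifies their expansion to almost-Ramanujan via Jeronimo et al.\ (Theorem~\ref{thm:lwexp}). This is a whole section of the paper that your plan doesn't anticipate; without it the construction simply does not exist over $\bF_2$ (or any field whose characteristic divides $|G|\Delta^2$).

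Third, your alternative nontriviality argument does not work. You propose to find a low-weight $w\in C_Z$ with $\langle\allones,w\rangle\neq 0$, writing $\langle\allones,w\rangle=|w|$. But $\langle\allones,w\rangle=\sum_iw_i$, which equals the Hamming weight only over $\bF_2$. More seriously, for a low-weight $w\in C_Z$ to exist at all we'd need $w\in C_X^\perp$ (otherwise $|w|\geq d=\Omega(n)$ by the distance bound); but $C_X^\perp$ is generated by local codewords of $C_A\otimes C_B$, all of whose component sums vanish when $\allones\in C_B^\perp$, so $\langle\allones,w\rangle=0$ for every such $w$. So any low-weight witness you could choose fails automatically.

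Finally, you wave at ``the expansion proof is robust to the extra constraint'' without addressing the fact that the needed product-expansion property for the planted inner codes (analogous to Corollary~\ref{cor:randomprod} but conditioned on $\allones\in C_A,C_B^\perp$) is not obviously inherited. The paper devotes Proposition~\ref{prop:randomprodplanted} and Section~\ref{sec:prodexproof} to re-running the Kalachev--Panteleev argument under this conditioning; this is a real step, not a formality.
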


Theorem~\ref{thm:nltsinf} implies that the code Hamiltonians of our planted quantum Tanner codes over the binary alphabet $\bF_2$ in Theorem~\ref{thm:plantedresultinf} are NLTS. In Section~\ref{sec:sosinf} below, we present another complexity-theoretic application of these codes, namely to SoS lower bounds, which crucially relies on the their planted nature.


Our construction of planted quantum Tanner codes is motivated by a more basic classical analogue. Recall that a classical Tanner code is specified by a $\Delta$-regular graph $\Gamma$ and an inner code $C_{\text{in}}\subseteq\bF_q^\Delta$, where the code components correspond to edges of the graph, and the parity checks impose the constraint that the local view of each vertex is a codeword in $C_{\text{in}}$.

The standard method for ensuring a classical Tanner code $C$ has positive rate is to require $C_{\text{in}}$ to have sufficiently large rate $>1/2$, and then to bound the number of resulting linear constraints on $C$ from the parity checks. However, we may alternatively simply require that $C_{\text{in}}$ contain the all-1s vector $\allones\in\bF_q^\Delta$, so that $C$ then must contain the global all-1s vector $\allones\in\bF_q^n$. If $C$ contains no other nontrivial codewords, then it is a repetition code, which is typically unintersting classically.

However, we construct a quantum analogue of this construction, which is more nuanced, and has interesting complexity theoretic implications. Indeed, whereas classically it is easy to achieve linear distance and positive dimension by taking a repetition code, to the best of our knowledge the only known quantum LDPC codes of linear distance and positive dimension are the recent constructions of \cite{panteleev_asymptotically_2022,leverrier_quantum_2022-1,dinur_good_2023}, which can in fact achieve linear dimension.

Recall that a quantum Tanner code $\cC=\CSS(C_X,C_Z)$ \cite{leverrier_quantum_2022-1} is constructed by imposing constraints from a \textit{pair} of classical codes $C_A,C_B\subseteq\bF_q^\Delta$ on a \textit{square Cayley complex} $(V,E,Q)$, which is a graph $(V,E)$ with the additional high-dimensional structure of faces, or squares, in $Q$; the qudits of the code correspond to the $n=|Q|$ faces in $Q$.

To prove Theorem~\ref{thm:plantedresultinf}, we show that if we require the local all-1s vector $\allones\in\bF_q^\Delta$ to lie in $C_A$ and in $C_B^\perp$, and $q$ is relatively prime with $n$, then the global all-1s vector $\allones\in\bF_q^n$ lies in $C_Z\setminus C_X^\perp$ and $C_X\setminus C_Z^\perp$, so in particular $\cC=\CSS(C_X,C_Z)$ has dimension $\geq 1$.

The proof that $\allones\in C_A,C_B^\perp$ implies $\allones\in C_X,C_Z$ is immediate, as in the classical case. However, we prove that $\allones\notin C_X^\perp,C_Z^\perp$ using a parity (or more generally, arity) mismatch: we argue that $C_X^\perp$ and $C_Z^\perp$ are spanned by vectors whose components sum to $0\in\bF_q$, whereas the components of $\allones\in\bF_q^n$ do not sum to $0$ by the assumption that $q,n$ are relatively prime, so that the characteristic $p$ of $\bF_q$ does not divide $n$.

The requirement that $q,n$ are relatively prime requires some care to enforce. As $n=|Q|$ equals the number of faces in an expanding square Cayley complex, it must be a multiple of the order of a group on which there exist constant-degree Cayley expanders (see Section~\ref{sec:qtan}, and in particular Section~\ref{sec:dis}, for background on Cayley graphs and expansion). Therefore if we for instance focus on the $q=2$ case, we need families of Cayley expanders over groups of odd order. However, many well-known Cayley expanders, such as the Ramanujan graphs of \cite{lubotzky_ramanujan_1988} and \cite{morgenstern_existence_1994}, exclusively use groups of even order. We therefore instead use the Cayley expanders given in Example~3.4 of \cite{lubotzky_groups_1993}, for which the number of vertices is a power of any desired prime. While these graphs have constant degree and constant expansion, we amplify the expansion to be almost-Ramanujan using the techniques of \cite{jeronimo_almost_2022}.\footnote{This expansion amplification may be stronger than necessary, but for consistency with prior works and simplicity of presentation, it is convenient for us to have almost-Ramanujan expansion.}

We still must show that the resulting planted quantum Tanner codes have good small-set (co)boundary expansion and therefore good distance. By the results of \cite{leverrier_quantum_2022-1}, it suffices to show that the inner codes $(C_A,C_B)$ can be chosen to possess a property called \textit{product-expansion} (Definition~\ref{def:prodexp}). This property was shown for random inner codes by \cite{kalachev_two-sided_2023,dinur_good_2023}; we extend the proof of \cite{kalachev_two-sided_2023} for our case of planted inner codes where $\allones\in C_A,C_B^\perp$. As these inner codes are constant-sized as $n\rightarrow\infty$, the randomized construction can be made strongly explicit by a brute force search.

An interesting consequence of our result is that we can construct planted quantum Tanner codes $\cC$ of positive dimension $k>0$ with inner codes $C_A,C_B$ of any desired respective rates $R_A,R_B\in(0,1)$; for instance, we can take $R_A=R_B$. In contrast, the prior technique of bounding $k$ by counting parity checks only implies that $k\geq-(1-2R_A)(1-2R_B)\cdot n$, which never gives a meaningful bound when $R_A=R_B$. Thus our construction allows instantiations in new parameter regimes.

We also remark that while we only show how to plant a nontrivial codeword in the qLDPC codes of \cite{leverrier_quantum_2022-1}, our techniques also apply to the codes of \cite{panteleev_asymptotically_2022}; to avoid redundancy we do not spell out the details.

\subsection{Strongly Explicit SoS Lower Bounds}
\label{sec:sosinf}

The Sum-of-Squares semidefinite programming hierarchy is one of the most powerful algorithmic frameworks for approximating the satisfiability of CSPs (see \cite{fleming_semialgebraic_2019} for a survey). However, almost all of the known hard instances (i.e.~lower bounds) for this hierachy are given by randomized constructions. Hopkins and Lin \cite{hopkins_explicit_2022-1}, building on the techniques of Dinur et al.~\cite{dinur_explicit_2021}, constructed the first explicit unsatisfiable CSPs that cannot be refuted by a linear number of levels of the SoS SDP hierarchy. In contrast, explicit lower bounds prior to their work applied to at best a logarithmic number of levels of the SoS hierarchy.

Hopkins and Lin \cite{hopkins_explicit_2022-1} proved their result by showing that hard instances for SoS can be obtained from a family of qLDPC codes with small-set boundary and coboundary expansion. Explicit such qLDPC codes, such as the quantum Tanner codes of \cite{leverrier_quantum_2022-1}, then yield the desired explicit hard CSPs.

\begin{remark}
  The SoS lower bounds of \cite{hopkins_explicit_2022-1} marked the first complexity theoretic application of linear-distance qLDPC codes; the subsequent proof of the NLTS theorem \cite{anshu_nlts_2023} provided a second notable application. Such applications were perhaps surprising given that the construction of asymptotically good qLDPC codes, first obtained by \cite{panteleev_asymptotically_2022} and subsequently extended and modified by \cite{leverrier_quantum_2022-1,dinur_good_2023}, was originally motivated in large part by applications to quantum error correction.
\end{remark}

However, the explicitness of the CSP construction in \cite{hopkins_explicit_2022-1} was weak in the sense of Definition~\ref{def:weakvstrong} below. One of the major questions left open by their work was to make this construction strongly explicit \cite{hopkins_personal_2023}. We apply our construction of planted quantum Tanner codes in Theorem~\ref{thm:plantedresultinf} to resolve this problem.

\begin{definition}[Weak vs.~strong explicitness]
  \label{def:weakvstrong}
  Let $X=(x_n)_{n\in\bN}$ be an infinite family of objects such that each $x_n$ can be represented by a bitstring $x_n\in\{0,1\}^{a_n}$ of length $a_n$, where $a_n\rightarrow\infty$ as $n\rightarrow\infty$. We say that $X$ is:
  \begin{itemize}
  \item \textbf{weakly explicit} (or simply ``explicit'') if there exist a $\poly(a_n)$-time algorithm $A(n)$ that outputs $x_n$
  \item \textbf{strongly explicit} if there exists a $\poly(\log n,\log a_n)$-time algorithm $A(n,i)$ that outputs the $i$th bit of $x_n$ for $i\in[a_n]$.
  \end{itemize}
\end{definition}

We specifically say that a family of matrices is weakly (resp.~strongly) explicit if for each $n\times m$ matrix in the family, there is a $\poly(n,m)$ (resp.~$\poly(\log n,\log m)$) time algorithm to compute the $j$th nonzero entry of the $i$th row, as well as the $j$th nonzero entry of the $i$th column.

Then a family of graphs is weakly (resp.~strongly) explicit if the associated adjacency matrices are weakly (resp.~strongly) explicit. Similarly, a CSS code $\cC=\CSS(C_X=\ker H_X,C_Z=\ker H_Z)$ is weakly (resp.~strongly) explicit if the matrices $H_X,H_Z$ are weakly (resp.~strongly) explicit. 

As another relevant example, consider a family of CSPs given by $\ell$-LIN instances, which are defined by $n$ linear constraints on $m$ variables over a fixed finite field $\bF_q$, such that each linear equation has $\leq\ell=O(1)$ nonzero coefficients. A family of such $\ell$-LIN instances is weakly (resp.~strongly) explicit if the $i$th linear equation can be computed in time $\poly(n,m)$ (resp.~$\poly(\log n,\log m)$).

Given a qLDPC code $\cC=\CSS(C_X=\ker H_X,C_Z=\ker H_Z)$ of locality $\ell$ and an arbitrary element $\beta\in C_X\setminus C_Z^\perp$, Hopkins and Lin~\cite{hopkins_explicit_2022-1} considered the associated $\ell$-LIN instance $\cI_{\cC,\beta}$ with $m=m_Z$ variables $y_1,\dots,y_m$ and $n$ linear constraints over $\bF_q$ given by the system of equations $H_Z^\top y=\beta$ for $y=(y_1,\dots,y_m)$. They showed that if $\cC$ has $(\Omega(1),\Omega(1))$-small-set boundary and coboundary expansion, then at most $1-\Omega(1)$ fraction of the constraints in $\cI_{\cC,\beta}$ can be satisfied, but $\cI_{\cC,\beta}$ is hard to refute for $\Omega(n)$ levels of SoS. Furthermore, they presented a reduction to reduce the size of the constraints, thereby providing similarly unsatisfiable and hard instances of 3-LIN over $\bF_2$, that is, of 3-XOR.

However, even if $\cC$ comes from a strongly explicit family of qLDPC codes, the associated $\ell$-LIN instance $\cI_{\cC,\beta}$ is only weakly explicit in general, as one must perform Gaussian elimination to compute some $\beta\in C_X\setminus C_Z^\perp$, which takes $\poly(n,m)$ time.

Because our planted quantum Tanner codes in Theorem~\ref{thm:plantedresultinf} by construction have $\allones\in C_X\setminus C_Z^\perp$, they resolve this issue, and hence yield the following result.

\begin{theorem}[Strongly explicit SoS lower bounds; restatement of Corollary~\ref{cor:sosresult} and Corollary~\ref{cor:3xor}]
  \label{thm:sosresultinf}
  The $\ell$-LIN instances $\cI_{\cC,\allones}$ for planted quantum Tanner codes $\cC$ over any fixed prime-sized alphabet $\bF_p$ provide a family of strongly explicit instances such that each $\cI_{\cC,\allones}$:
  \begin{enumerate}
  \item has $\Theta(n)$ variables and constraints,
  \item has satisfiability $\leq(1-\Omega(1))$,
  \item cannot be refuted by $cn$ levels of the SoS hierarchy for a sufficiently small constant $c>0$.
  \end{enumerate}
  Furthermore, there exists a strongly explicit family of 3-XOR (i.e.~3-LIN over $\bF_2$) instances that also satisfies the three properties above.
\end{theorem}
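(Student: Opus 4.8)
The plan is to feed our planted quantum Tanner codes from Theorem~\ref{thm:plantedresultinf} into the $\ell$-LIN construction and SoS lower bound of Hopkins and Lin~\cite{hopkins_explicit_2022-1}; the only genuinely new ingredient is that the planted codeword $\allones$ can play the role of the arbitrary nontrivial codeword $\beta\in C_X\setminus C_Z^\perp$ used in \cite{hopkins_explicit_2022-1}. Fix a prime $p$ and let $(\cC^{(n)})_{n\to\infty}$, with $\cC^{(n)}=\CSS(C_X=\ker H_X,\,C_Z=\ker H_Z)$, be the strongly explicit family of planted quantum Tanner codes over $\bF_p$ from Theorem~\ref{thm:plantedresultinf}: each $\cC^{(n)}$ has locality $\ell=O(1)$, has $(c_1,c_2)$-small-set boundary and coboundary expansion (hence distance $\geq c_1 n$), and has $\allones\in C_X\setminus C_Z^\perp$. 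I would then take $\cI_{\cC^{(n)},\allones}$ to be exactly the instance of \cite{hopkins_explicit_2022-1}: $m=m_Z$ variables, $n$ constraints, given by the linear system $H_Z^\top y=\allones$. Since each column of $H_Z$ has Hamming weight $\leq\ell$, each constraint involves $\leq\ell$ variables, so this is an $\ell$-LIN instance over $\bF_p$.

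Properties (1)--(3) then follow readily. For (1), the square Cayley complex underlying a quantum Tanner code is built from a constant-degree graph, so $m_Z=\Theta(n)$, giving $\Theta(n)$ variables and exactly $n$ constraints. For (2), the image of $y\mapsto H_Z^\top y$ is $\im(H_Z^\top)=C_Z^\perp\subseteq C_X$, so for every assignment $y$ the vector $H_Z^\top y-\allones$ lies in $C_X$ (as $\allones\in C_X$) but not in $C_Z^\perp$ (as $\allones\notin C_Z^\perp$); hence its weight, which equals the number of violated constraints, is at least the distance $c_1 n$, and the satisfiability is at most $1-c_1$. For (3), $\cC^{(n)}$ meets the hypotheses of the SoS lower bound of \cite{hopkins_explicit_2022-1} verbatim---a qLDPC code with $(\Omega(1),\Omega(1))$-small-set boundary and coboundary expansion together with a nontrivial $\beta\in C_X\setminus C_Z^\perp$, here $\beta=\allones$---so $\cI_{\cC^{(n)},\allones}$ cannot be refuted by $cn$ levels of SoS for a sufficiently small constant $c>0$.

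It remains to check strong explicitness, which is the crux of the improvement over \cite{hopkins_explicit_2022-1}. The $i$-th constraint of $\cI_{\cC^{(n)},\allones}$ is determined by the $i$-th row of $H_Z^\top$---equivalently, the $i$-th column of $H_Z$---together with the right-hand-side value $1$. Since the planted quantum Tanner codes are strongly explicit (Theorem~\ref{thm:plantedresultinf}), the nonzero entries of any column of $H_Z$ are computable in time $\poly(\log n,\log m)$, and the right-hand side is trivial; hence $(\cI_{\cC^{(n)},\allones})_{n\to\infty}$ is a strongly explicit family. This is exactly where planting is used: \cite{hopkins_explicit_2022-1} had to run Gaussian elimination to obtain some $\beta\in C_X\setminus C_Z^\perp$, costing $\poly(n,m)$ time and yielding only weak explicitness, whereas $\beta=\allones$ is available for free.

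For the final 3-XOR statement I would instantiate the construction with $p=2$, obtaining a strongly explicit family of $\ell$-XOR instances satisfying (1)--(3), and then apply the arity-reduction of \cite{hopkins_explicit_2022-1} that replaces each $\ell$-XOR constraint by a constant-length chain of $3$-XOR constraints on $O(1)$ fresh auxiliary variables. Because this gadget acts locally and uniformly on each constraint, the $j$-th $3$-XOR constraint is a simple function of the index of the originating $\ell$-XOR constraint and its position in the chain, so strong explicitness is preserved; and by \cite{hopkins_explicit_2022-1} the reduction alters the numbers of variables and constraints, the satisfiability, and the SoS degree bound each only by constant factors, so (1)--(3) still hold. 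I expect the main obstacle to be bookkeeping rather than depth: one must verify that every reduction of \cite{hopkins_explicit_2022-1}---both the passage from a code to $\cI_{\cC,\beta}$ and the arity reduction---is ``gadget-like'' enough to be evaluated bit-by-bit in polylogarithmic time, and that all parameter losses stay within constant factors. The one new mathematical input, that $\allones$ may be used in place of an arbitrary nontrivial codeword, follows immediately from Theorem~\ref{thm:plantedresultinf}.
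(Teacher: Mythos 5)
Your proposal is correct and follows essentially the same route as the paper: take the planted quantum Tanner codes of Theorem~\ref{thm:plantedresultinf}, feed them into the Hopkins--Lin $\ell$-LIN construction (Theorem~\ref{thm:exptosos} in the paper) with $\beta=\allones$ in place of a Gaussian-elimination-derived codeword, observe strong explicitness (Lemma~\ref{lem:secsp}), and then apply the Hopkins--Lin arity reduction (Proposition~\ref{prop:lto3}) over $\bF_2$ to get 3-XOR. The only cosmetic difference is that you re-derive the soundness bound (2) directly from $\im(H_Z^\top)=C_Z^\perp$ and the code distance rather than quoting it from the cited theorem; the paper also notes en route that the $\bF_p$ extension of the Hopkins--Lin SoS bound relies on Tulsiani's generalization of Schoenebeck, a detail worth mentioning if you want the $p>2$ case to be fully self-justified.
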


We remark that \cite{hopkins_explicit_2022-1} actually restricted attention to the binary alphabet $q=2$ case, though their SoS lower bounds for $\ell$-LIN extend to larger prime alphabets. We suspect that their reduction to 3-LIN similarly extends to larger alphabets, though for conciseness we do not check the details.


\subsection{Open Questions}
\label{sec:openquestions}
Our results raise the following open questions:
\begin{itemize}
\item Can our construction of NLTS Hamiltonians from low-rate qLDPC codes lead to more progress towards qPCP or hardness of approximation? For instance, perhaps the fact that low-rate codes, which correspond to Hamiltonians with low-dimensional ground spaces, suffice for NLTS will be helpful in constructing Hamiltonians with stronger hardness of approximation guarantees.
\item Our results highlight the usefulness of low-rate qLDPC codes, and suggest that for complexity theoretic applications there is often little benefit to having high rate. However, to the best of our knowledge, our planted quantum Tanner codes provide the only known ``inherently'' low-rate qLDPC codes, and they still have high rate in some parameter regimes. In contrast, there are many interesting classical low-rate LDPC codes such as Hadamard and Reed-Muller codes, which have properties not shared by any high-rate codes. In the quantum case, can similar stronger properties be obtained by allowing for low rate in qLDPC codes?
\end{itemize}

\section{Notation}
\label{sec:notation}
For a string $y\in\bF_q^n$, we denote the Hamming weight by $|y|=|\{i\in[n]:y_i\neq 0\}|$. For subsets $S,T\subseteq\bF_q^n$, we denote the Hamming distance by $\dis(S,T)=\min_{s\in S,t\in T}|s-t|$.

Unless explicitly stated otherwise, by a ``code'' we mean a linear subspace $C\subseteq\bF_q^n$. The code $C$ has block length $n$, dimension $k=\dim_{\bF_q}(C)$, and distance $d=\min_{y\in C\setminus\{0\}}|y|$, which can be summarized by saying it is a $[n,k,d]_q$ code. The dual code is $C^\perp=\{x\in\bF_q^n:x\cdot y=0\;\forall y\in C\}$.

For codes $C_i\subseteq\bF_q^{n_i}$ for $i=1,2$, the tensor code $C_1\otimes C_2\subseteq\bF_q^{n_1\times n_2}$ consists of all $n_1\times n_2$ matrices where every column lies in $C_1$ and every row lies in $C_2$. The dual of the tensor code is $(C_1\otimes C_2)^\perp=C_1^\perp\otimes\bF_q^{n_2}+\bF_q^{n_1}\otimes C_2^\perp$.

Given a $\Delta$-regular graph $\Gamma$ with $n$ edges and an inner code $C_{\text{in}}\subseteq\bF_q^\Delta$, we denote the associated classical Tanner code by $C=\Tan(\Gamma,C_{\text{in}})\subseteq\bF_q^n$, which is constructed as follows. We associate the set of all edges in $\Gamma$ with the set $[n]$, and we associate the set of edges incident to each vertex $v$ in $\Gamma$ with the set $[\Delta]$. Then we define $C$ to be the set of all edge labelings $y\in\bF_q^n=\bF_q^{E(\Gamma)}$ such that the labels of edges incident to each $v\in\Gamma$ form a codeword in $C_{\text{in}}$.

For a pure quantum state $\ket{\psi}$, we denote the density matrix by $\psi=\ket{\psi}\bra{\psi}$. For a set $S\subseteq\bF_q^n$, we let $\ket{S}=|S|^{-1/2}\sum_{s\in S}\ket{s}$ denote the uniform superposition over elements of $S$.

The quantum codes we consider in this paper are CSS codes, which are defined as follows. For classical codes $C_X,C_Z\subseteq\bF_q^n$ such that $C_X^\perp\subseteq C_Z$, the associated quantum CSS code $\cC=\CSS(C_X,C_Z)$ is defined by $\cC=\spn\{\ket{y+C_X^\perp}:y\in C_Z\}\subseteq(\bC^q)^{\otimes n}$. This code has block length $n$, dimension $k=\log_q\dim_{\bC}(\cC)=\dim_{\bF_q}(C_Z)-\dim_{\bF_q}(C_X^\perp)$, and distance $d=\min_{y\in (C_Z\setminus C_X^\perp)\cup(C_X\setminus C_Z^\perp)}|y|$, which can be summarized by saying that $\cC$ is a $[[n,k,d]]_q$ code.

If $C_X=\ker H_X$ and $C_Z=\ker H_Z$ for parity check matrices $H_X,H_Z$ in which each row and column has Hamming weight $\leq\ell$, we say that $\cC$ is a CSS code with check weight, or locality, $\leq\ell$. A family of codes with constant locality $\ell=O(1)$ as $n\rightarrow\infty$ is said to be LDPC. The family of codes is (strongly) explicit if the associated families of parity check matrices $H_X,H_Z$ are (strongly) explicit.

\section{Review of Quantum Tanner Codes}
\label{sec:qtan}
In this section we review the construction and relevant properties of the asymptotically good quantum LDPC codes of Leverrier and Z\'{e}mor \cite{leverrier_quantum_2022-1,leverrier_decoding_2023}, which are called quantum Tanner codes. Although \cite{leverrier_quantum_2022-1,leverrier_decoding_2023} present the construction over binary alphabets, we consider arbitrary finite field alphabets; all their results and proofs extend to this more general case with just some `$+$' signs changed to `$-$' signs for fields of characteristic $\neq 2$.

Recall that a classical Tanner code is constructed by imposing constraints from an inner code on a graph (see Section~\ref{sec:notation}). In contrast, a quantum Tanner code $\cC$ is constructed by imposing constraints from \textit{two} inner codes on a higher-dimensional object called a \textit{square Cayley complex}. In particular, $\cC=\CSS(C_X,C_Z)$, where both $C_X,C_Z$ are classical Tanner codes on graphs obtained from a square Cayley complex, with distinct inner codes.

\subsection{Construction}
\label{sec:qTanconstruct}
We now desribe the construction of a quantum Tanner code $\cC=\CSS(C_X,C_Z)$. We first need to define a square Cayley complex. Recall that for a group $G$ and a subset $A\subseteq G$, the Cayley graph $\Cay(G,A)$ has vertex set $G$ and edge set $\{(g,ag):g\in G,\;a\in A\}$. As described below, a square Cayley complex is a sort of 2-dimensional generalization of a Cayley graph.

A square Cayley complex consists a tuple $(V,E,Q)$ of vertices, edges, and faces (or ``squares'') that is specified by a group $G$ and two generating sets $A,B\subseteq G$ as follows. We typically take $|A|=|B|=\Delta=O(1)$ as $|G|=\Theta(n)\rightarrow\infty$, and assume that $A=A^{-1}$ and $B=B^{-1}$ are closed under inversion. The complex then has vertex set $V=G\times\{0,1\}^2$, edge set $E=E_A\sqcup E_B$ for
\begin{align*}
  E_A &= \{(g,i0),(ag,i1):g\in G,\;i\in\{0,1\},\;a\in A\} \\
  E_B &= \{(g,0j),(gb,1j):g\in G,\;j\in\{0,1\},\;b\in B\},
\end{align*}
and face set
\begin{align*}
  Q &= \{(g,00),(ag,01),(gb,10),(agb,11):g\in G,\;a\in A,\;b\in B\}.
\end{align*}

For $i,j\in\{0,1\}$, let $V_{ij}=G\times(i,j)$. Define bipartite graphs $\Gamma_0=(V_{00}\sqcup V_{11},Q)$ and $\Gamma_1=(V_{01}\sqcup V_{10},Q)$ whose edges are given by pairs of vertices that form a diagonal in a square in $Q$; for instance, $\Gamma_0$ has an edge between $v\in V_{00}$ and $v'\in V_{11}$ if $v,v'$ share a face in $Q$. Observe that both $\Gamma_0$ and $\Gamma_1$ have a unique edge associated to each square in $Q$. Furthermore, the $\Gamma_i$-edges incident to a vertex $v$ correspond to the squares in $Q$ that contain $v$; we let $Q(v)$ denote the set of these squares. But by definition $Q(v)$ consists of an $A\times B$ grid of squares. For instance, for $v=(g,00)\in V_{00}$ then
\begin{equation*}
  Q(v) = \{(g,00),(ag,01),(gb,10),(agb,11):a\in A,\;b\in B\}.
\end{equation*}

Therefore given a square Cayley complex $(V,E,Q)$ of degree $\Delta=|A|=|B|$ along with classical codes $C_A\subseteq\bF_q^A=\bF_q^\Delta$ and $C_B\subseteq\bF_q^B=\bF_q^\Delta$, we may define a quantum Tanner code $\cC=\CSS(C_X,C_Z)$ by
\begin{align*}
  C_X &= \Tan(\Gamma_0,(C_A\otimes C_B)^\perp) \\
  C_Z &= \Tan(\Gamma_1,(C_A^\perp\otimes C_B^\perp)^\perp).
\end{align*}
That is, $C_X$ and $C_Z$ are classical Tanner codes on the graphs $\Gamma_0$ and $\Gamma_1$ respectively, where the inner codes are given by dual tensor codes. Because $E(\Gamma_0)\cong E(\Gamma_1)\cong Q$, both $C_X$ and $C_Z$ are subspaces of $\bF_q^Q$.

\subsection{Locality and Dimension}
\label{sec:locdim}
We now describe some basic properties of $\cC$. By definition $\cC$ has block length $n=|Q|$. Parity checks for $C_X$ are given by tensor codewords in $C_A\otimes C_B$ supported in the neighborhood $Q(v_0)$ of any $v_0\in V_{00}\sqcup V_{11}$. Similarly, parity checks for $C_Z$ are given by tensor codewords in $C_A^\perp\otimes C_B^\perp$ supported in the neighborhood $Q(v_1)$ of any $v_1\in V_{01}\sqcup V_{10}$. Because any such $Q(v_0),Q(v_1)$ are either disjoint or intersect in a single row or column, the parity checks for $C_X$ and $C_Z$ are orthogonal, so $C_X^\perp\subseteq C_Z$. Furthermore, as $|Q(v_0)|=|Q(v_1)|=\Delta^2$, the quantum Tanner code $\cC$ is LDPC with locality $\Delta^2=O(1)$ as $n=|Q|\rightarrow\infty$.

Counting parity checks to bound the number of linear constraints on $C_X,C_Z$ implies that $\cC$ has dimension $k\geq-(1-2R_A)(1-2R_B)\cdot n$, where $R_A=\dim(C_A)/\Delta$ and $R_B=\dim(C_B)/\Delta$ denote the rate of $C_A$ and $C_B$ respectively.

\subsection{Distance}
\label{sec:dis}
To present the distance bound for quantum Tanner codes, we need the following definition.

\begin{definition}[Product-expansion]
  \label{def:prodexp}
  A pair of codes $C_1,C_2\subseteq\bF_q^n$ is \textbf{$\rho$-product-expanding} if every $x\in(C_1^\perp\otimes C_2^\perp)^\perp=C_1\otimes\bF_q^n+\bF_q^n\otimes C_2$ can be decomposed as $x=c+r$ for some $c\in C_1\otimes\bF_q^n$ and $r\in\bF_q^n\otimes C_2$ satisfying
  \begin{equation*}
    |x| \geq \rho n(|c|_{\text{col}}+|r|_{\text{row}}),
  \end{equation*}
  where $|c|_{\text{col}}$ denotes the number of nonzero columns in $c$ and $|r|_{\text{row}}$ denotes the number of nonzero rows in $r$.
\end{definition}

It is immediate that product-expansion yields a bound on the distances of the associated codes:

\begin{lemma}[Well known]
  \label{lem:exptodis}
  If the pair $C_1,C_2\subseteq\bF_q^n$ is $\rho$-product expanding, then $C_1$ and $C_2$ have distance $\geq\rho n$.
\end{lemma}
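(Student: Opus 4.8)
The plan is to apply the product-expansion hypothesis to rank-one tensors built from a single codeword, which immediately forces that codeword to have large weight.

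First I would bound the distance of $C_1$. Fix an arbitrary nonzero $c_1\in C_1$, and let $x\in\bF_q^{n\times n}$ be the matrix whose first column equals $c_1$ and whose remaining columns are zero (i.e.\ $x=c_1\otimes e_1$ for the first standard basis vector $e_1\in\bF_q^n$). Every column of $x$ lies in $C_1$, so $x\in C_1\otimes\bF_q^n\subseteq C_1\otimes\bF_q^n+\bF_q^n\otimes C_2=(C_1^\perp\otimes C_2^\perp)^\perp$. Hence by $\rho$-product-expansion there is a decomposition $x=c+r$ with $c\in C_1\otimes\bF_q^n$, $r\in\bF_q^n\otimes C_2$, and $|x|\geq\rho n(|c|_{\text{col}}+|r|_{\text{row}})$. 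Now observe two things: $|x|=|c_1|$ since the only nonzero entries of $x$ are those of $c_1$ in the first column; and $|c|_{\text{col}}+|r|_{\text{row}}\geq 1$, because $x\neq 0$ forces at least one of $c,r$ to be nonzero, contributing at least one nonzero column or row respectively. Combining these gives $|c_1|=|x|\geq\rho n$. Since $c_1$ was an arbitrary nonzero codeword, $C_1$ has distance $\geq\rho n$.

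The bound for $C_2$ follows by the symmetric argument: for a nonzero $c_2\in C_2$, take $x$ to be the matrix whose first row equals $c_2$ and whose other rows are zero, so that $x\in\bF_q^n\otimes C_2\subseteq(C_1^\perp\otimes C_2^\perp)^\perp$ and $|x|=|c_2|$, and the same reasoning yields $|c_2|\geq\rho n$.

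There is essentially no obstacle here; the only point needing (minor) care is to pick the tensor $x$ so that its total Hamming weight coincides with the underlying codeword weight while the quantity $|c|_{\text{col}}+|r|_{\text{row}}$ appearing on the right-hand side of the product-expansion inequality is nonzero — which holds automatically since $x\neq 0$.
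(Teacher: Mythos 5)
Your proof is correct and takes essentially the same approach as the paper: embed a nonzero codeword as a single column (resp.\ row) of a rank-one tensor and apply the product-expansion inequality, noting that the right-hand side is at least $\rho n$ since the decomposition cannot be entirely zero. The paper states this more tersely; your version just spells out the step that $|c|_{\text{col}}+|r|_{\text{row}}\geq 1$.
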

\begin{proof}
  Let $x\in C_1\otimes\bF_q^n$ have its first column be a minimum-weight nonzero codeword of $C_1$, and have all other columns be $0$. Then $\rho$-product-expansion implies that $C_1$ has distance $|x|\geq\rho n$. A similar argument holds for $C_2$.
\end{proof}

The following result bounding the product expansion of random pairs of codes was shown independently by \cite{kalachev_two-sided_2023} and \cite{dinur_good_2023}, though only the former explicitly considered non-binary alphabets.

\begin{proposition}[\cite{kalachev_two-sided_2023}]
  \label{prop:randomprod}
  Fix any finite field $\bF_q$. For every fixed $\epsilon>0$, there exists a constant $\rho=\rho(\epsilon)>0$ and a function $\delta(n)=\delta(n;\epsilon)\rightarrow 0$ as $n\rightarrow\infty$ such that the following holds. For every pair of integers $k_1,k_2\in(\epsilon n,(1-\epsilon)n)$, if $C_i\subseteq\bF_q^n$ for $i=1,2$ is drawn uniformly at random from the set of linear codes of dimension $k_i$, then with probability $\geq 1-\delta(n)$ the pair $(C_1,C_2)$ will be $\rho$-product-expanding.
\end{proposition}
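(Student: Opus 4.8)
The plan is to establish product-expansion of a uniformly random pair of codes by a union bound over "bad events," following the standard first-moment argument for random linear codes. First I would reformulate the statement: a pair $(C_1,C_2)$ fails to be $\rho$-product-expanding precisely when there exists $x\in C_1\otimes\bF_q^n+\bF_q^n\otimes C_2$ such that \emph{every} decomposition $x=c+r$ with $c\in C_1\otimes\bF_q^n$, $r\in\bF_q^n\otimes C_2$ has $|c|_{\text{col}}+|r|_{\text{row}}>|x|/(\rho n)$. The key observation is that it suffices to control, for each pair of subsets $S,T\subseteq[n]$ with $|S|=s$, $|T|=t$, the probability that there is a nonzero $x$ supported on the union of the column-block indexed by $S$ and the row-block indexed by $T$ (i.e.\ $x_{ij}=0$ unless $i\in S$ or $j\in T$) lying in the relevant space; if $|x|$ is too small relative to $s+t$, this is a bad event. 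So the reduction is: bound $\Pr[\exists x \in (C_1\otimes\bF_q^n+\bF_q^n\otimes C_2)\cap W_{S,T},\ x\neq 0,\ |x|<\rho n(s+t)]$ where $W_{S,T}$ is the coordinate subspace of matrices vanishing outside the $S$-columns and $T$-rows, and then union-bound over all $S,T$.

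Next I would carry out the counting. For fixed $S,T$, any $x\in (C_1\otimes\bF_q^n+\bF_q^n\otimes C_2)\cap W_{S,T}$ can be written $x=c+r$ where $c$ has nonzero columns only in $S$ and $r$ has nonzero rows only in $T$ — one shows this by a cleaning argument: start from an arbitrary decomposition and move the "overflow" columns of $c$ (those outside $S$) into $r$, using that outside $S\cup T$ the entry is zero so the column of $c$ there lies in... actually more carefully, restrict attention to rows in $T$: outside $T$, $x$ is a sum of $c$-columns supported in $S$, and the column constraint $C_1$ forces the whole column into $C_1$; the complementary part gets absorbed. I would make this precise and conclude that the number of such $x$ with $|x|\le w$ is at most the number of codewords in a tensor-type code of "effective dimension" roughly $k_1 t + k_2 s$ (columns in $S$ range over $C_1$, giving $q^{k_1 t}$-ish freedom on the $T$-rows, etc.), so the count of low-weight such $x$ is at most (number of support patterns of size $\le w$ inside the $O((s+t)n)$ allowed coordinates) times $q^{-(\text{codimension})}$. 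Combining with the union bound over $\binom ns\binom nt$ choices of $(S,T)$, the failure probability is at most
\begin{equation*}
\sum_{s,t} \binom ns \binom nt \binom{(s+t)n}{\le \rho n(s+t)} q^{\rho n(s+t)} q^{-(\text{min codimension of } (C_1\otimes\bF_q^n+\bF_q^n\otimes C_2)\cap W_{S,T} \text{ as a subspace of } W_{S,T})}.
\end{equation*}
The codimension here is where the hypothesis $k_1,k_2\in(\epsilon n,(1-\epsilon)n)$ enters: it guarantees both $C_i$ and $C_i^\perp$ have dimension $\ge\epsilon n$, which forces the codimension to grow like $\epsilon\cdot(s+t)n$ up to lower-order terms (this is essentially the statement that a random code and its dual are both "generic"). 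Then one picks $\rho=\rho(\epsilon)$ small enough that the entropy terms $\binom ns$, $\binom{(s+t)n}{\le\rho n(s+t)}$ and $q^{\rho n(s+t)}$ are dominated by $q^{-\Omega(\epsilon(s+t)n)}$, and the whole sum is $\le\delta(n)\to 0$.

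The main obstacle I anticipate is the cleaning/decomposition lemma — making rigorous that membership in $(C_1\otimes\bF_q^n+\bF_q^n\otimes C_2)$ together with support in $W_{S,T}$ implies existence of a decomposition $x=c+r$ with $c$ supported (by columns) in $S$ and $r$ supported (by rows) in $T$ — and correctly bookkeeping the resulting parameter count, since the "effective dimension" of the space of such $x$ is not simply $k_1 t+k_2 s$ (there is overlap on the $S\times T$ block and one must subtract the dimension of $(C_1|_{\ldots}\otimes\cdots)\cap(\cdots)$). This is exactly the delicate computation in \cite{kalachev_two-sided_2023}, and the right way to handle it is to parametrize $x$ directly: fix a basis so that the $T$-rows of $c$ are determined by $|S|\cdot k_1$... rather, I would follow Kalachev's bookkeeping, which organizes the count by first choosing the support and then bounding the number of codewords on that support by a dimension count that uses $\dim C_1=k_1$ on rows in $T$ and $\dim C_2=k_2$ on columns in $S$ while being careful about the $S\times T$ intersection block; the key inequality reduces to $k_1,k_2$ and their duals all being $\Theta(n)$. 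Once that lemma is in hand the union bound is routine, and the extension from \cite{kalachev_two-sided_2023}'s binary statement to general $\bF_q$ costs nothing beyond replacing $2$ by $q$ in the entropy estimates, as claimed.
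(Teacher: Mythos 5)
The paper does not reprove this proposition; it cites \cite{kalachev_two-sided_2023} (KP), but Section~5.3 records KP's proof structure via Lemma~\ref{lem:kpproof}: one defines two bad events --- $E_1$, the existence of a low-weight element $x\in C_1\otimes\bF_q^n+\bF_q^n\otimes C_2$ of \emph{high rank}, and $E_2$, the failure of Property~$(\ast)$ (control of $\dim(C_i\cap V)$ for all $\alpha$-sparse subspaces $V$) --- shows both are unlikely, and then proves a purely deterministic implication ``$\bar E_1\cap\bar E_2\Rightarrow \rho$-product-expansion.'' The organizing quantity in KP is the rank of the witness $x$, not its row/column support pattern.

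Your union bound over column/row supports $(S,T)$ is a different organization, and as written it has two genuine gaps. First, the ``cleaning lemma'' --- that every $x\in(C_1\otimes\bF_q^n+\bF_q^n\otimes C_2)\cap W_{S,T}$ admits a decomposition $x=c+r$ with $c$ supported in columns $S$ and $r$ in rows $T$ --- is false in general. A direct dimension count shows the ``clean'' decomposition space sits inside $(C_1\otimes\bF_q^n+\bF_q^n\otimes C_2)\cap W_{S,T}$ with codimension exactly $a\beta+b\alpha$, where $\alpha=\dim(C_1\cap\bF_q^T)$, $a=\dim(C_1^\perp\cap\bF_q^{\bar T})$, and $\beta,b$ are defined analogously for $C_2,S$. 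Since $a,b$ are generically $\Theta(n)$, the cleaning lemma fails as soon as $t\geq d(C_1)$ or $s\geq d(C_2)$; but the regime of large $s,t$ is precisely where the bulk of the union bound's mass lives (it is what KP's event $E_1$ is designed to handle). Second, your first-moment expression conflates two different quantities: the codimension of $(\text{sum code})\cap W_{S,T}$ inside $W_{S,T}$ is a deterministic property of $C_1,C_2$ and does not equal $-\log_q\Pr[x\in \text{sum code}]$ for a fixed $x$. The latter probability is governed by $\rank(x)$, since $x\in(C_1^\perp\otimes C_2^\perp)^\perp$ iff $u^\top x\in C_2$ for all $u\in C_1^\perp$, and the image $\{u^\top x:u\in C_1^\perp\}$ has dimension $\leq\rank(x)$. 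A rank-$1$ matrix $u\otimes v$ can sit inside a $W_{S,T}$ with $s+t$ large yet be in the sum code with probability only $q^{-\Theta(n)}$, which does not overcome the $\binom ns\binom nt$ union cost. Moreover $(\text{sum code})\cap W_{S,T}$ is not a uniformly random subspace of $W_{S,T}$ of its codimension, so treating it as such in a weight-enumerator bound is not justified. Repairing both issues essentially forces you to stratify by rank and argue about sparse subspaces intersecting $C_i$ --- i.e., to reproduce KP's $E_1$/Property~$(\ast)$ dichotomy --- so the approach is not really an alternative so much as an incomplete sketch of the same underlying analysis.
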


Applying Proposition~\ref{prop:randomprod} with a union bound over $(C_1,C_2)$ and $(C_1^\perp,C_2^\perp)$ immediately yields the following corollary.

\begin{corollary}[\cite{kalachev_two-sided_2023}]
  \label{cor:randomprod}
  Defining all variables as in Proposition~\ref{prop:randomprod}, then with probability $\geq 1-2\delta(n)$ both $(C_1,C_2)$ and $(C_1^\perp,C_2^\perp)$ will be $\rho$-product-expanding.
\end{corollary}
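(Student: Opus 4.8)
The plan is a direct union bound, so the argument is short. First I would record the elementary fact that the duality map $C\mapsto C^\perp$ is a bijection from the set of $k$-dimensional linear subspaces of $\bF_q^n$ onto the set of $(n-k)$-dimensional linear subspaces of $\bF_q^n$ (it is injective since $(C^\perp)^\perp=C$, and it lands in, and surjects onto, the $(n-k)$-dimensional subspaces by the rank–nullity count $\dim C+\dim C^\perp=n$). Since a bijection between finite sets carries the uniform distribution to the uniform distribution, it follows that if $C_i\subseteq\bF_q^n$ is drawn uniformly at random from the linear codes of dimension $k_i$, then $C_i^\perp$ is distributed uniformly among the linear codes of dimension $n-k_i$.

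Next I would check that Proposition~\ref{prop:randomprod} applies verbatim to the dual pair. Because $k_1,k_2\in(\epsilon n,(1-\epsilon)n)$, we also have $n-k_1,\,n-k_2\in(\epsilon n,(1-\epsilon)n)$, so the hypotheses of Proposition~\ref{prop:randomprod} are met with the \emph{same} parameter $\epsilon$, hence with the same constant $\rho=\rho(\epsilon)$ and the same vanishing function $\delta(n)=\delta(n;\epsilon)$. Therefore the pair $(C_1^\perp,C_2^\perp)$ fails to be $\rho$-product-expanding with probability at most $\delta(n)$, exactly as $(C_1,C_2)$ does. A union bound over these two failure events then shows that with probability at least $1-2\delta(n)$ both $(C_1,C_2)$ and $(C_1^\perp,C_2^\perp)$ are $\rho$-product-expanding, which is the claim.

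There is no real obstacle here; the statement is essentially a bookkeeping consequence of Proposition~\ref{prop:randomprod}. The only point requiring a moment's care is the distributional identity for the dual code, and even that is immediate from the bijectivity of $C\mapsto C^\perp$ on the relevant finite sets of subspaces. I would keep the write-up to just a couple of sentences invoking these two observations and the union bound.
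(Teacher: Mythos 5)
Your proof is correct and matches the paper's (which simply says to apply Proposition~\ref{prop:randomprod} with a union bound over the two pairs); you just spell out the distributional identity for the dual code that the paper leaves implicit.
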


The distance bound for quantum Tanner codes will also rely on the Cayley graphs $\Cay(G,A)$ and $\Cay(G,B)$ having sufficiently good expansion.

\begin{definition}
  For a regular graph $\Gamma$ of degree $\Delta(\Gamma)$, the \textbf{(unnormalized) spectral expansion} $\lambda(\Gamma)$ is the second largest absolute value of an eigenvalue of the adjacency matrix of $\Gamma$. If $\lambda(\Gamma)\leq 2\sqrt{\Delta(\Gamma)-1}$, then $\Gamma$ is \textbf{Ramanujan}. Meanwhile, if an infinite family of regular graphs $\Gamma$ all satisfy $\lambda(\Gamma)\leq\Delta(\Gamma)^{1/2+o(1)}$, then the family is \textbf{almost Ramanujan}. Here $o(1)$ denotes any function of $\Delta$ that approaches $0$ as $\Delta\rightarrow\infty$.
\end{definition}

Constructions of Ramanujan Cayley graphs have for instance been given by \cite{lubotzky_ramanujan_1988} and \cite{morgenstern_existence_1994}; the latter construction in particular is strongly explicit:

\begin{theorem}[\cite{morgenstern_existence_1994}]
  \label{thm:morgram}
  For every prime power $q\geq 3$, there exists a strongly explicit family of $(q+1)$-regular Ramanujan Cayley graphs $(\Gamma_m)_{m\in\bN}$ with the number of vertices given by
  \begin{equation*}
    |V(\Gamma_m)| = \begin{cases}
      q^{2m}(q^{4m}-1),&q\equiv 0\pmod{2}\\
      q^{2m}(q^{4m}-1)/2,&q\equiv 1\pmod{2}.\\
    \end{cases}
  \end{equation*}
\end{theorem}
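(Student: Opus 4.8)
The plan is to reproduce Morgenstern's arithmetic construction --- the function-field analogue of the Lubotzky--Phillips--Sarnak Ramanujan graphs \cite{lubotzky_ramanujan_1988} --- working over the rational function field $K=\bF_q(t)$ in place of $\bQ$.

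First I would fix the degree-one place $\infty$ of $K$, the valuation at $1/t$, whose completion $K_\infty=\bF_q(\!(1/t)\!)$ has residue field $\bF_q$, so that the Bruhat--Tits tree of $\PGL_2(K_\infty)$ is the infinite $(q+1)$-regular tree $T$. Next I would choose a quaternion division algebra $D$ over $K$ that splits at $\infty$ and is ramified at a suitable nonempty set of finite places: for odd $q$ Morgenstern uses an explicit algebra built from a fixed non-square of $\bF_q$, and for even $q$ a separable substitute. Since $D$ is a division algebra that splits at $\infty$, the units of a maximal $\{\infty\}$-order in $D$, modulo scalars, form a cocompact lattice $\Gamma\subseteq\PGL_2(K_\infty)$, which one checks acts transitively on the vertices of $T$ for Morgenstern's explicit $D$. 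Passing to a deep enough principal congruence subgroup $\Gamma(\fp)$ kills all torsion in $\Gamma$, so $\Gamma(\fp)$ acts freely on $T$, the quotient $X_\fp:=\Gamma(\fp)\backslash T$ is a finite $(q+1)$-regular graph, and by strong approximation $\Gamma/\Gamma(\fp)$ is $\PGL_2$ or $\operatorname{PSL}_2$ over the residue field $\bF_q[t]/\fp\cong\bF_{q^{\deg\fp}}$, acting freely and transitively on $V(X_\fp)$. This identifies $X_\fp$ with a Cayley graph $\Cay(\Gamma/\Gamma(\fp),S)$, where $S$ is the set of $q+1$ images of elements of $\Gamma$ that carry a fixed base vertex of $T$ to its neighbours. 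Restricting to the primes $\fp_m$ of degree $2m$ permitted by the ramification of $D$, and checking that Morgenstern's generators lie in $\SL_2$ (so the relevant group is $\operatorname{PSL}_2$ rather than $\PGL_2$), gives $|V(X_{\fp_m})|=|\operatorname{PSL}_2(\bF_{q^{2m}})|$, which is $q^{2m}(q^{4m}-1)$ for even $q$ and $q^{2m}(q^{4m}-1)/2$ for odd $q$. Strong explicitness is then immediate: Morgenstern gives the $q+1$ elements of $S$ in closed form, as solutions of a norm equation over $\bF_{q^{2m}}$ --- the analogue of the four-squares parametrisation of the LPS generators --- so a neighbour of a vertex $g\in\operatorname{PSL}_2(\bF_{q^{2m}})$ is just the matrix product $gs_j$, computable in time $\poly(m,\log q)$, which is polylogarithmic in $|V(X_{\fp_m})|$, once a defining irreducible polynomial for $\bF_{q^{2m}}$ is fixed.

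The substantive step is the Ramanujan bound. The adjacency operator of $X_\fp$ is the Hecke operator at $\infty$ acting on automorphic forms for $D^\times$ of level $\fp$; by the Jacquet--Langlands correspondence these match cuspidal automorphic forms for $\GL_2/K$ of ``Drinfeld type'' (equivalently, harmonic cochains on $T$), and every eigenvalue of the adjacency operator besides the trivial one $q+1$ is a Hecke eigenvalue $\lambda$ of such a form. For these weight-two forms the Ramanujan--Petersson bound $|\lambda|\le 2\sqrt q$ follows from Weil's Riemann hypothesis for curves over finite fields: via an Eichler--Shimura-type congruence relation, $\lambda=\alpha+\bar\alpha$ is a sum of conjugate Frobenius eigenvalues on the first $\ell$-adic cohomology of a Drinfeld modular curve over $\bF_q$, with $\alpha\bar\alpha=q$, hence $|\lambda|\le 2\sqrt q$. (Alternatively one may invoke Drinfeld's proof of the global Langlands correspondence for $\GL_2$ over function fields.) Now $2\sqrt q=2\sqrt{\Delta-1}$ with $\Delta=q+1$, and since $\operatorname{PSL}_2(\bF_{q^{2m}})$ is simple (as $q\ge 3$ and $m\ge 1$) it has no index-two subgroup, so $X_{\fp_m}$ is non-bipartite and $q+1$ is its unique eigenvalue of absolute value $q+1$; therefore $\lambda(X_{\fp_m})\le 2\sqrt{\Delta-1}$, i.e.\ the graphs $X_{\fp_m}$ are Ramanujan.

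I expect the main obstacle to be exactly this eigenvalue bound: it is the one ingredient that is not elementary, resting on Weil's Riemann hypothesis for curves applied to Drinfeld modular curves through the Eichler--Shimura relation, rather than on a hands-on estimate. Everything else is standard or purely computational: constructing the quaternion lattice and invoking strong approximation to identify the congruence quotients is routine in the theory of arithmetic groups over function fields; checking the Cayley structure and the vertex count reduces to tracking torsion in $\Gamma$ and computing $|\operatorname{PSL}_2(\bF_{q^{2m}})|$; and verifying the explicit generator matrices together with their norm-equation parametrisation is, for each fixed $q$, a finite check. A minor additional point, needed to make the family strongly and not merely weakly explicit, is that a defining polynomial for $\bF_{q^{2m}}$ and the generator list can be produced in time $\poly(m)$.
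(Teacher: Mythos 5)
The paper does not prove Theorem~\ref{thm:morgram}; it is imported verbatim from \cite{morgenstern_existence_1994} and used only as a black box (and in fact is set aside in favour of Theorem~\ref{thm:lwexp}, since the vertex counts here are always even and divisible by small primes). Your sketch is a faithful high-level account of Morgenstern's actual construction --- the function-field analogue of LPS via a quaternion lattice acting on the Bruhat--Tits tree of $\PGL_2(\bF_q(\!(1/t)\!))$, strong approximation to identify the congruence quotients with Cayley graphs of $\operatorname{PSL}_2(\bF_{q^{2m}})$, and the Ramanujan--Petersson bound via Jacquet--Langlands together with Drinfeld's work and Weil's Riemann hypothesis for curves --- and the vertex counts you derive match the statement, so there is nothing in the paper's treatment to compare against.
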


The graphs in Theorem~\ref{thm:morgram} can be used to instantiate strongly explicit linear-distance quantum Tanner codes by \cite{leverrier_quantum_2022-1,leverrier_decoding_2023}. However, they are not quite sufficient for our purposes. Specifically, using these graphs, our planted quantum Tanner codes and the resulting strongly explicit SoS lower bounds described in Section~\ref{sec:plant} would hold only for alphabets $\bF_q$ of characteristic $p\geq 7$. This restricton on the field size arises because the graphs in Theorem~\ref{thm:morgram} have $|V(\Gamma_m)|$ divisible by $2$, $3$, and $5$, but our results require Cayley expanders $\Gamma$ for which $q$ is relatively prime with $|V(\Gamma)|$.

We therefore instead use the Cayley expanders given by Example~3.4 in \cite{lubotzky_groups_1993}, for which the number of vertices is guaranteed to be a power of any desired prime. \cite{lubotzky_groups_1993} showed that these graphs have constant degree $\Delta$ and constant spectral expansion $<\Delta$. By amplifying the expansion to be near-Ramanujan by applying Theorem~1.2 in \cite{jeronimo_almost_2022}, we obtain the following result, which we formally prove in Section~\ref{sec:expconst}.

\begin{restatable}[Follows from \cite{lubotzky_groups_1993,jeronimo_almost_2022}]{theorem}{lwexp}
  \label{thm:lwexp}
  For every prime $p$, there is an infinite set $\mathbf{\Delta}\subseteq\bN$ for which there exists a strongly explicit family of almost-Ramanujan Cayley graphs $(\Gamma_{m,\Delta})_{m\in\bN,\Delta\in\mathbf{\Delta}}$, where $\Gamma_{m,\Delta}$ has $|V(\Gamma_{m,\Delta})|=p^{3m}$ vertices and has degree $\Delta$. Furthermore, we may choose $\mathbf{\Delta}$ such that for every $\Delta\in\mathbf{\Delta}$, either $\Delta+1\in\mathbf{\Delta}$ or $\Delta-1\in\mathbf{\Delta}$.
\end{restatable}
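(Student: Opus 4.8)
The plan is to combine the two ingredients named in the statement: the constant-degree Cayley expanders of Example~3.4 in \cite{lubotzky_groups_1993}, whose vertex counts are prescribed prime powers, and the spectral-expansion amplification of \cite{jeronimo_almost_2022}. First I would recall the relevant construction from \cite{lubotzky_groups_1993}: for each prime $p$ there is a family of groups $G_m$ with $|G_m| = p^{3m}$ (these arise as quotients of an arithmetic group, e.g.\ congruence quotients of $\mathrm{SL}_2$ over a suitable ring, or the explicit presentation given there) together with a fixed generating set $A_0$ of constant size $\Delta_0 = |A_0|$, closed under inversion, such that the Cayley graphs $\Cay(G_m, A_0)$ form an infinite family with spectral expansion bounded away from the degree: $\lambda(\Cay(G_m,A_0)) \leq (1-\eta)\Delta_0$ for some fixed $\eta>0$. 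Crucially this family is strongly explicit — group elements are represented by $O(m\log p) = O(\log|G_m|)$ bits, and both the group operation and the adjacency list (via multiplication by the constant-size set $A_0$) are computable in time $\poly(\log|G_m|)$.

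Next I would apply Theorem~1.2 of \cite{jeronimo_almost_2022}, which takes a strongly explicit family of constant-degree graphs with spectral expansion bounded below the degree and produces, for every sufficiently large integer $t$, a strongly explicit family of graphs on the \emph{same vertex sets} with degree roughly $\Delta_0^{t}$ (or some explicit function of $\Delta_0$ and $t$) and normalized second eigenvalue $O(t^{-1/2})$ times the degree — in particular, degree-to-the-power-$(1/2+o(1))$ spectral expansion as the degree grows, i.e.\ almost-Ramanujan. Moreover the amplified graph remains a Cayley graph on $G_m$: the amplification is realized by replacing the generating set $A_0$ with a new generating set built from bounded-length words (or more precisely from the derandomized-walk operator of \cite{jeronimo_almost_2022}), which is again symmetric. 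This determines the set $\mathbf{\Delta} \subseteq \bN$ of achievable degrees as (a subset of) the values the amplification construction outputs; since $|V(\Gamma_{m,\Delta})| = |G_m| = p^{3m}$ does not depend on $\Delta$, the vertex-count requirement is automatic. Strong explicitness of $\Gamma_{m,\Delta}$ follows by composing the strong explicitness of \cite{lubotzky_groups_1993} with that of the \cite{jeronimo_almost_2022} transformation, both of which manipulate objects of size $\poly(\log p^{3m})$.

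Finally, for the consecutive-degree condition — that $\mathbf{\Delta}$ can be chosen so that for every $\Delta\in\mathbf{\Delta}$ either $\Delta+1$ or $\Delta-1$ also lies in $\mathbf{\Delta}$ — I would note that the amplification of \cite{jeronimo_almost_2022} has enough flexibility in its degree parameter that one can realize two degrees differing by $1$ while preserving the almost-Ramanujan bound (for instance by adjoining a single extra generator and its inverse, or a single loop, to pass from degree $\Delta$ to $\Delta+1$; such a rank-one perturbation of the normalized adjacency operator shifts eigenvalues by $O(1/\Delta)$, which is absorbed into the $o(1)$ term). One then defines $\mathbf{\Delta}$ to consist of such consecutive pairs. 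The main obstacle I anticipate is precisely this last point: checking that the degree grid produced by the black-box application of \cite{jeronimo_almost_2022} can be thinned or perturbed to contain consecutive integers without destroying either the strong explicitness or the $\Delta^{1/2+o(1)}$ eigenvalue bound — in particular verifying that adding one generator does not break the Cayley structure or the symmetry $A=A^{-1}$, which forces one to add either a self-inverse element or a pair, and hence requires a small amount of care about the parity of the degree increment. (This consecutive-degree flexibility is used later to match the degree to the block-length constraints of the square Cayley complex, so it must be genuinely available, not merely generic.)
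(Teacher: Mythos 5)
Your proposal follows essentially the same route as the paper: start from the Cayley expanders of Example~3.4 in \cite{lubotzky_groups_1993}, whose groups $G(p)/G(p^{m+1})$ have order $p^{3m}$; amplify to almost-Ramanujan via Theorem~1.2 of \cite{jeronimo_almost_2022}, which preserves the Cayley structure and the vertex set; and then augment the degree by one to obtain consecutive pairs in $\mathbf{\Delta}$. The parity concern you raise at the end is real, and the paper's resolution is exactly the self-inverse option you mention: adjoin the identity of $G_m$ as an additional Cayley generator, which adds a single self-loop at each vertex, shifts the unnormalized adjacency spectrum by exactly $+1$, and hence perturbs the spectral expansion by at most $1$ — comfortably within the $\Delta^{1/2+o(1)}$ bound. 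The one step you assert rather than verify is strong explicitness of the \cite{lubotzky_groups_1993} base graphs: the group $G(p)/G(p^{m+1})$ is defined as a quotient of a kernel inside $SL_2(\bZ)$, which does not \emph{a priori} come with efficient encodings or group operations, and the paper devotes Lemmas on surjectivity of $SL_2(\bZ)\to SL_2(\bZ/t\bZ)$, an isomorphism $G(p)/G(p^{m+1})\cong\ker(SL_2(\bZ/p^{m+1}\bZ)\to SL_2(\bZ/p\bZ))$, and an explicit bijection with $(\bZ/p^m\bZ)^3$ to establish that group elements admit $\poly(m\log p)$-time representations and multiplication. That verification should be spelled out rather than taken for granted.
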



We are now ready the present the distance bound for quantum Tanner codes.

\begin{theorem}[\cite{leverrier_quantum_2022-1,leverrier_decoding_2023}]
  \label{thm:qTandis}
  For every fixed $\rho>0$, the following holds for all sufficiently large $\Delta$. Let $\cC$ be a quantum Tanner code for which:
  \begin{enumerate}
  \item\label{it:ramanujan} $\Cay(G,A),\Cay(G,B)$ are almost-Ramanujan graphs of degree $\Delta$.
  \item\label{it:innerprod} $(C_A,C_B),(C_A^\perp,C_B^\perp)$ are $\rho$-product-expanding.
  \end{enumerate}
  Then $\cC$ has distance $d\geq cn$ for a constant $c>0$ depending only on $\rho,\Delta$.
\end{theorem}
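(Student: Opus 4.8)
The plan is to adapt the distance argument of Leverrier and Z\'emor \cite{leverrier_quantum_2022-1,leverrier_decoding_2023}; as remarked at the start of this section, passing from $\bF_2$ to a general finite field $\bF_q$ changes nothing beyond some signs. Recall that $d=\min|y|$ over $y\in(C_Z\setminus C_X^\perp)\cup(C_X\setminus C_Z^\perp)$. The square Cayley complex carries an automorphism that swaps $\Gamma_0\leftrightarrow\Gamma_1$ while preserving the $E_A/E_B$ splitting, under which the quantum Tanner code with inner codes $(C_A,C_B)$ is carried to the one with inner codes $(C_A^\perp,C_B^\perp)$; since hypotheses \ref{it:ramanujan}--\ref{it:innerprod} are invariant under this replacement, it suffices to lower bound $|z|$ for $z\in C_Z\setminus C_X^\perp$. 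I would prove the contrapositive: if $z\in C_Z$ has $|z|<cn$ then $z\in C_X^\perp$, i.e.\ $z$ is a sum of local generators of $C_X^\perp$, one codeword of $C_A\otimes C_B$ supported on $Q(v_0)$ for each $v_0\in V_{00}\sqcup V_{11}$.

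First I would extract the local structure of $z$. Since $z\in C_Z=\Tan(\Gamma_1,(C_A^\perp\otimes C_B^\perp)^\perp)$, each local view $z|_{Q(v_1)}$ with $v_1\in V_{01}\sqcup V_{10}$ lies in $(C_A^\perp\otimes C_B^\perp)^\perp=C_A\otimes\bF_q^\Delta+\bF_q^\Delta\otimes C_B$, so by $\rho$-product-expansion of $(C_A,C_B)$ (Definition~\ref{def:prodexp}) it splits as $z|_{Q(v_1)}=c_{v_1}+r_{v_1}$ with $c_{v_1}$ having columns in $C_A$, $r_{v_1}$ having rows in $C_B$, and $|c_{v_1}|_{\mathrm{col}}+|r_{v_1}|_{\mathrm{row}}\leq\bigl|z|_{Q(v_1)}\bigr|/(\rho\Delta)$. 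Summing over all $v_1$, and using that each face lies in exactly two $\Gamma_1$-neighborhoods, the total number of ``flagged'' rows and columns is $O(|z|/\Delta)$. These flags live naturally on the edges of the bipartite graphs underlying the complex, which inherit the expansion of $\Cay(G,A)$ and $\Cay(G,B)$, so by the expander mixing lemma and hypothesis~\ref{it:ramanujan}, all but an $O(|z|/\Delta)$-fraction of vertices $v_0\in V_{00}\sqcup V_{11}$ touch no flag at all.

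Next I would run the ``cleaning'' step. For a vertex $v_0$ touching no flag, the local decompositions at the $\Gamma_1$-vertices meeting the faces of $Q(v_0)$ are mutually consistent across the two diagonals of each face, and combining this with the fact that $C_A$ and $C_B$ have positive distance (Lemma~\ref{lem:exptodis}) forces $z|_{Q(v_0)}$ itself to be a codeword of $C_A\otimes C_B$, hence a local generator of $C_X^\perp$. I would then set up an iterative decoder: locate such a vertex, subtract the corresponding generator (which keeps $z\in C_Z$ and changes only $O(\Delta^2)$ coordinates), and recurse, controlling a potential function so that, when $|z|<cn$ with $c$ small relative to $\rho$ and the expansion of the $\Cay(G,\cdot)$, the process terminates at $0$. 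This exhibits $z$ as a sum of local generators, so $z\in C_X^\perp$, completing the reduction.

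The main obstacle is the quantitative coupling between scales in the cleaning step: product-expansion of $(C_A,C_B)$ acts at the scale $\Delta^2$ of a single vertex neighborhood, whereas the only global control comes from the degree-$\Delta$ Cayley graphs, so one must ensure that the ``dirty'' fraction and the error introduced by each subtraction stay a small constant multiple of $|z|/n$ and that the iteration does not diverge. This is precisely where $\Delta$ must be taken sufficiently large and the expansion near-Ramanujan: the mixing estimate carries a factor $\lambda(\Cay(G,\cdot))/\Delta\leq\Delta^{-1/2+o(1)}$ that has to beat the $O(\Delta)$ locality overhead, which is why the resulting constant $c$ depends only on $\rho$ and $\Delta$. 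Every step of this argument is insensitive to $\chr(\bF_q)$ — the sole field-dependent ingredient is product-expansion, assumed in hypothesis~\ref{it:innerprod} — so the bound holds over an arbitrary $\bF_q$ just as in \cite{leverrier_quantum_2022-1,leverrier_decoding_2023}.
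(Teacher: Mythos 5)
Your proposal takes the same route as the paper: the paper does not reprove this result but cites Leverrier--Z\'emor and describes the parameter modifications needed to accommodate almost-Ramanujan graphs (redefining $\alpha=\delta^2/256$ to $\alpha=\delta^2/\Delta^{1/100}$ in Section~3 of \cite{leverrier_decoding_2023}) and arbitrary alphabets. You instead attempt to re-derive the argument from scratch, which is admirable but introduces some imprecisions that do not match the actual structure of \cite{leverrier_decoding_2023}.

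The central gap is in your ``cleaning'' step. You assert that if $v_0\in V_{00}\sqcup V_{11}$ ``touches no flag,'' then the local view $z|_{Q(v_0)}$ is automatically a codeword of $C_A\otimes C_B$. This is too strong and not what the distance proof establishes. Even when the nearby $\Gamma_1$-decompositions $z|_{Q(v_1)}=c_{v_1}+r_{v_1}$ have all their nonzero columns and rows away from $Q(v_0)$, the grid $z|_{Q(v_0)}$ need not lie in $C_A\otimes C_B$; what one can hope for is that $z|_{Q(v_0)}$ can be brought \emph{closer} to $C_A\otimes C_B$, i.e.~there is a local generator of $C_X^\perp$ supported on $Q(v_0)$ whose addition strictly reduces $|z|$. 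The actual argument works with a minimum-weight representative $\tilde z\in z+C_X^\perp$ and a notion of \emph{exceptional} $\Gamma_1$-vertices (those whose local product-expansion decomposition has large residue), proving via the expander mixing lemma on $\Cay(G,A),\Cay(G,B)$ that too few vertices are exceptional relative to $|\tilde z|$, and then deriving a contradiction to minimality by exhibiting a weight-decreasing local move. Your phrasing collapses ``minimality contradiction on the coset representative'' into ``iterative decoder terminating at $0$,'' which is morally similar but requires you to actually construct a monotone potential and prove the decrements stay aligned across iterations — you name this as the main obstacle but don't resolve it, and the direct assertion that the flag-free local view already lies in $C_A\otimes C_B$ would, if true, make the potential trivial, so the two halves of your sketch are in tension.

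A smaller issue: you invoke the expander mixing lemma to conclude that all but an $O(|z|/\Delta)$-fraction of $\Gamma_0$-vertices touch no flag, but the bound you quote follows from a straight count of flagged edges (each flag touches $O(1)$ vertices) and needs no mixing; the mixing lemma enters later, to show that flagged edges cannot concentrate in a way that starves the cleaning step of non-exceptional vertices. Since the paper itself explicitly delegates this theorem to \cite{leverrier_decoding_2023} and only spells out the $\alpha$-reparametrization, the cleanest write-up here is the paper's: state the result, cite the source, and carefully track where the almost-Ramanujan bound $\lambda\leq\Delta^{1/2+o(1)}$ replaces $2\sqrt{\Delta-1}$ in Lemmas~10--11 of \cite{leverrier_decoding_2023} (picking up extra $\Delta^{1/50+o(1)}$ and $\Delta^{o(1)}$ factors), rather than attempting a full re-derivation whose key local lemma you have stated incorrectly.
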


Recall that by Lemma~\ref{lem:exptodis}, Condition~\ref{it:innerprod} in Theorem~\ref{thm:qTandis} implies that $C_A,C_B,C_A^\perp,C_B^\perp$ have distance $\geq\rho\Delta$.

Condition~\ref{it:ramanujan} in Theorem~\ref{thm:qTandis} can be met using the strongly explicit Ramanujan graphs in Theorem~\ref{thm:morgram}, or using the strongly explicit almost-Ramanujan graphs in Theorem~\ref{thm:lwexp}. If $C_A,C_B\subseteq\bF_q^\Delta$ are chosen to be random codes of some fixed rates $0<R_A,R_B<1$ for any sufficiently large constant $\Delta$, then
Condition~\ref{it:innerprod} is met by Corollary~\ref{cor:randomprod}. Because $\Delta$ is a constant as $n\rightarrow\infty$, we may find $C_A,C_B$ in constant time by a brute force search, so the overall construction of $\cC$ is strongly explicit.

While \cite{leverrier_quantum_2022-1,leverrier_decoding_2023} only proved Theorem~\ref{thm:qTandis} in the case where $\Cay(G,A),\Cay(G,A)$ are Ramanujan graphs of degree $\Delta$, their proof generalizes flawlessly to allow for almost-Ramanujan graphs. Specifically, the proof of linear distance in Section~3 of \cite{leverrier_decoding_2023} defines ``exceptional vertices'' using a parameter $\alpha=\delta^2/256$, where $\delta$ denotes the relative distance of the inner codes. If we for instance redefine $\alpha=\delta^2/\Delta^{1/100}$, we can essentially carry through their exact same analysis assuming $\Cay(G,A),\Cay(G,B)$ are almost-Ramanujan. An additional factor of $\Delta^{1/50+o(1)}$ arises in the bound in their Lemma~10, and an additional $\Delta^{o(1)}$ factor arises in the bound in their Lemma~11, but the proof of linear still goes through with these slightly worse parameters.

We suspect that almost-Ramanujan expansion $\lambda\leq\Delta^{1/2+o(1)}$ is still stronger than necessary, and it may in fact be sufficient to have spectral expansion $\lambda$ equal to some small constant fraction of $\Delta$. However, it is less transparent how to modify the proof of \cite{leverrier_decoding_2023} for this case, so for conciseness we do not pursue this direction.

\subsection{Small-Set (Co)boundary Expansion}
\label{sec:ssexp}
For our applications of quantum Tanner codes, we will need a stronger notion than distance, called \textit{small-set (co)boundary expansion}, which was first formally stated in the context of quantum codes by Hopkins and Lin \cite{hopkins_explicit_2022-1}. Below, for a code $C$, we denote $|y|_C=\min_{y'\in C}|y+y'|$.

\begin{definition}[Small-set (co)boundary expansion]
  \label{def:ssexp}
  Let $\cC=\CSS(C_X=\ker H_X,C_Z=\ker H_Z)$ be a CSS code given by parity check matrices $H_X\in\bF_q^{m_X\times n}$ and $H_Z\in\bF_q^{m_Z\times n}$. For $c_1,c_2>0$, we say that $\cC$ has \textbf{$(c_1,c_2)$-small-set boundary expansion} if it holds for every $y\in\bF_q^n$ with $|y|\leq c_1n$ that
  \begin{equation*}
    \frac{|H_Zy|}{m_Z} \geq c_2\frac{|y|_{C_X^\perp}}{n}.
  \end{equation*}
  Similarly, $\cC$ has \textbf{$(c_1,c_2)$-small-set coboundary expansion} if it holds for every $y\in\bF_q^n$ with $|y|\leq c_1n$ that
  \begin{equation*}
    \frac{|H_Xy|}{m_X} \geq c_2\frac{|y|_{C_Z^\perp}}{n}.
  \end{equation*}
\end{definition}

Small-set (co)boundary expansion immediately implies a bound on the distance of the code:

\begin{lemma}[Well known]
  \label{lem:ssexptodis}
  If $\cC=\CSS(C_X,C_Z)$ of block length $n$ has $(c_1,c_2)$-small set boundary and coboundary expansion for $c_1,c_2>0$, then $\cC$ has distance $\geq c_1n$.
\end{lemma}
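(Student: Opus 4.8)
The statement to prove is Lemma~\ref{lem:ssexptodis}: if $\cC = \CSS(C_X, C_Z)$ of block length $n$ has $(c_1,c_2)$-small-set boundary and coboundary expansion, then $\cC$ has distance $\geq c_1 n$. Recall that the distance is $d = \min_{y \in (C_Z\setminus C_X^\perp)\cup(C_X\setminus C_Z^\perp)} |y|$, so the plan is to show that any $y$ achieving this minimum must have $|y| > c_1 n$ (or, equivalently, that no nonzero coset representative of weight $\leq c_1 n$ exists). I would argue by contraposition: suppose $y$ is a vector with $|y| \leq c_1 n$; I claim $y \notin (C_Z\setminus C_X^\perp)\cup(C_X\setminus C_Z^\perp)$.

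\textbf{Key steps.} First, handle the case $y \in C_Z$ using boundary expansion. Since $y \in C_Z = \ker H_Z$, we have $H_Z y = 0$, so $|H_Z y| = 0$. Applying the $(c_1,c_2)$-small-set boundary expansion inequality (valid because $|y| \leq c_1 n$) gives
\begin{equation*}
  0 = \frac{|H_Z y|}{m_Z} \geq c_2 \frac{|y|_{C_X^\perp}}{n},
\end{equation*}
and since $c_2 > 0$ this forces $|y|_{C_X^\perp} = 0$, i.e.\ $\min_{y' \in C_X^\perp} |y + y'| = 0$, which means $y \in C_X^\perp$. Hence $y \notin C_Z \setminus C_X^\perp$. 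Second, handle the symmetric case $y \in C_X$ using coboundary expansion: since $y \in C_X = \ker H_X$, we get $|H_X y| = 0$, and the $(c_1,c_2)$-small-set coboundary expansion inequality yields $|y|_{C_Z^\perp} = 0$, so $y \in C_Z^\perp$, hence $y \notin C_X \setminus C_Z^\perp$. Combining the two cases, no $y$ with $|y| \leq c_1 n$ lies in $(C_Z \setminus C_X^\perp) \cup (C_X \setminus C_Z^\perp)$, so every element of that set has weight $> c_1 n$; since weights are integers this gives $d \geq c_1 n$ (and in any case $d \geq c_1 n$ follows, as the minimum is over a set all of whose elements have weight exceeding, hence at least, $c_1 n$ — more carefully, any $y$ in the set has $|y| > c_1 n \geq c_1 n$, so $d \geq c_1 n$; if one wants to be pedantic about the strict inequality, note the lemma only claims $d \geq c_1 n$).

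\textbf{Main obstacle.} There is essentially no obstacle here — this is a direct unwinding of the definitions, and the word ``immediately'' in the lemma statement is accurate. The only point requiring a modicum of care is the translation between $|H_Z y| = 0 \iff y \in \ker H_Z$ and between $|y|_{C_X^\perp} = 0 \iff y \in C_X^\perp$, both of which are immediate from the definitions of $\ker$ and of $|\cdot|_C = \min_{y' \in C}|y + y'|$. One should also note that the hypothesis $|y| \leq c_1 n$ is exactly what licenses invoking the small-set expansion inequalities, so the argument genuinely needs both the boundary and coboundary versions to cover both pieces of the distance-defining union. I would write this up in three or four lines.
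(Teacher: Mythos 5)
Your proof is correct and takes essentially the same route as the paper's: both are direct applications of the definition of small-set (co)boundary expansion, with yours phrased as ``$y\in C_Z$ and $|y|\leq c_1n$ implies $y\in C_X^\perp$'' and the paper's as the contrapositive ``$y\notin C_X^\perp$ and $|y|\leq c_1n$ implies $y\notin C_Z$.''
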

\begin{proof}
  For every $y\in \bF_q^n\setminus C_X^\perp$ with $|y|\leq c_1n$, then $|y|_{C_X^\perp}>0$, so small-set boundary expansion implies that $|H_Zy|\geq c_2m_Z|y|_{C_X^\perp}/n>0$ and thus $y\notin C_Z$. An analogous argument shows that $C_X\setminus C_Z^\perp$ has no elements of weight $\leq c_1n$.
\end{proof}

It was originally observed that quantum Tanner codes have small set (co)boundary expansion in \cite{hopkins_explicit_2022-1}. We remark that another proof is given implicitly by the decoder of Leverrier and Z\'{e}mor \cite{leverrier_decoding_2023}. Specifically, there exists a constant $c_1>0$ such that for any errors $e_X,e_Z\in\bF_q^n$ of sufficiently low weight $|e_X|,|e_Z|\leq c_1n$, the decoder of \cite{leverrier_decoding_2023} takes as input the syndromes $s_X=H_Xe_X$ and $s_Z=H_Ze_Z$, and outputs some $e_X'\in e_X+C_Z^\perp$ and $e_Z'\in e_Z+C_X^\perp$ such that $|e_X'|\leq O(|s_X|),\;|e_Z'|\leq O(|s_Z|)$. It follows that $|e_X|_{C_Z^\perp}\leq|e_X'|\leq O(|s_X|)$ and $|e_Z|_{C_X^\perp}\leq|e_Z'|\leq O(|s_Z|)$, which are precisely the conditions required by small-set coboundary and boundary expansion, respectively. The result below formally summarizes this small-set (co)boundary expansion.

\begin{theorem}[\cite{hopkins_explicit_2022-1,leverrier_decoding_2023}]
  \label{thm:qTanexp}
  For every fixed $\rho>0$, the following holds for all sufficiently large $\Delta$. Let $\cC$ be a quantum Tanner code that satisfies Conditions~\ref{it:ramanujan}, \ref{it:innerprod} in the statement of Theorem~\ref{thm:qTandis}. Then $\cC$ has $(c_1,c_2)$-small-set boundary and coboundary expansion for constants $c_1,c_2>0$ depending only on the values of $\rho,\Delta$.
\end{theorem}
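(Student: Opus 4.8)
The plan is to derive small-set (co)boundary expansion directly from the single-shot / linear-time decoder of Leverrier and Z\'{e}mor \cite{leverrier_decoding_2023}, exactly as sketched in the paragraph preceding the statement. Recall that under Conditions~\ref{it:ramanujan} and~\ref{it:innerprod}, their analysis (the mismatch/flip decoder, analyzed via the ``exceptional vertices'' argument) produces a constant $c_1>0$ and a constant $c_2'=O(1)$ such that for every $e\in\bF_q^n$ with $|e|\leq c_1 n$, on input the $Z$-syndrome $s_Z=H_Z e$ the decoder outputs some $e'\in e+C_X^\perp$ with $|e'|\leq c_2'|s_Z|$; the symmetric statement holds for the $X$-syndrome. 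First I would restate this decoder guarantee precisely in the notation of this paper, being careful that the decoder's guarantee is stated per-coset (it corrects any error whose coset representative has weight $\leq c_1 n$, which is what we need, and in particular it applies to the minimum-weight representative).

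Next I would translate the decoder guarantee into the (co)boundary-expansion inequalities. Fix $y\in\bF_q^n$ with $|y|\le c_1 n$. Apply the $Z$-decoder to $s_Z=H_Z y$; it returns $e'\in y+C_X^\perp$ with $|e'|\le c_2'|H_Z y|$. Since $e'\in y+C_X^\perp$, by definition $|y|_{C_X^\perp}\le|e'|\le c_2'|H_Z y|$, and therefore
\begin{equation*}
  \frac{|H_Z y|}{m_Z}\ \ge\ \frac{1}{c_2' m_Z}\,|y|_{C_X^\perp}\ =\ \frac{n}{c_2' m_Z}\cdot\frac{|y|_{C_X^\perp}}{n}.
\end{equation*}
Because $\cC$ is LDPC with locality $\Delta^2$, we have $m_Z=\Theta(n)$ (each check has weight $\le\Delta^2$ and the checks span a constant fraction of coordinates), so $n/(c_2' m_Z)$ is bounded below by a positive constant $c_2$ depending only on $\rho,\Delta$. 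This gives $(c_1,c_2)$-small-set boundary expansion. Running the $X$-decoder on $s_X=H_X y$ in place of the $Z$-decoder and using $C_Z^\perp$ in place of $C_X^\perp$ gives small-set coboundary expansion by the identical computation. Finally, Lemma~\ref{lem:ssexptodis} immediately upgrades this to distance $\ge c_1 n$, though that is already implied by Theorem~\ref{thm:qTandis}.

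The main obstacle is not conceptual but bookkeeping: one must confirm that the decoder of \cite{leverrier_decoding_2023} genuinely achieves the \emph{linear} relation $|e'| = O(|s|)$ (as opposed to merely outputting \emph{some} correct coset representative), and that the implicit constants depend only on $\rho$ and $\Delta$ and not on $n$. This is precisely the content of their linear-distance/decoding analysis under the almost-Ramanujan relaxation discussed after Theorem~\ref{thm:qTandis}; I would cite their Lemmas~10--11 (with the $\alpha=\delta^2/\Delta^{1/100}$ modification) and note that the extra $\Delta^{O(1)+o(1)}$ factors there only affect the constant $c_2$, not its positivity. An alternative, essentially equivalent route—following \cite{hopkins_explicit_2022-1} directly rather than via the decoder—would establish the inequality combinatorially from product-expansion of $(C_A,C_B)$ plus expansion of the Cayley graphs, using the ``robustness'' of the dual tensor checks; but since the decoder already packages exactly this argument, invoking it is the cleanest path.
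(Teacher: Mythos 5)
Your proposal is correct and follows essentially the same route as the paper's own (informal) justification: the paper does not give a formal proof of Theorem~\ref{thm:qTanexp} but instead cites \cite{hopkins_explicit_2022-1,leverrier_decoding_2023} and sketches exactly the decoder-to-expansion translation you spell out. One small inaccuracy worth fixing: the almost-Ramanujan modification to the \emph{decoding} analysis in \cite{leverrier_decoding_2023} lives in their Section~5, with parameter $\alpha=\delta^2\epsilon^2/2^{10}$ redefined to $\delta^2\epsilon^2/\Delta^{1/100}$, and the extra $\Delta^{1/50+o(1)}$ and $\Delta^{o(1)}$ factors appear in their Lemmas~15 and~16; Lemmas~10--11 with $\alpha=\delta^2/\Delta^{1/100}$ belong to the distance proof in their Section~3, which is the analogous but distinct modification used for Theorem~\ref{thm:qTandis}. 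With that citation corrected, your per-coset remark, the inequality chain $|y|_{C_X^\perp}\le|e'|\le c_2'|H_Zy|$, and the observation that $m_Z=\Theta(n)$ (since $m_Z=2(1-R_A)(1-R_B)\Delta^2|G|$ while $n=\Delta^2|G|$) together give exactly the statement, matching the paper.
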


Note that Theorem~\ref{thm:qTanexp} implies Theorem~\ref{thm:qTandis} by Lemma~\ref{lem:ssexptodis}.

As described in Section~\ref{sec:dis}, \cite{leverrier_decoding_2023} assume the Cayley graphs $\Cay(G,A),\Cay(G,B)$ are Ramanujan, whereas we make the slightly weaker assumption that they are almost-Ramanujan. However, the decoding proof in \cite{leverrier_decoding_2023} is similar to the distance proof, and the extension to almost-Ramanujanness is nearly identical. Specifically, while Section~5 of \cite{leverrier_decoding_2023} defines a parameter $\alpha=\delta^2\epsilon^2/2^{10}$, we redefine this parameter to be $\delta^2\epsilon^2/\Delta^{1/100}$, and carry through the rest of the proof essentially as before. An additional factor of $\Delta^{1/50+o(1)}$ arises in the bound in their Lemma~15, and an additional factor of $\Delta^{o(1)}$ arises in the bound in their Lemma~16, but otherwise the decoding analysis goes through as before, with these slightly worse parameters.

Leverrier and Z\'{e}mor \cite{leverrier_efficient_2023} show how their decoder can also be used to decode the asymptotically good qLDPC codes of Panteleev and Kalachev \cite{panteleev_asymptotically_2022}; again in this case the decoder outputs an error whose weight is linear in the syndrome weight. Therefore a similar result as Theorem~\ref{thm:qTanexp} holds for the codes of \cite{panteleev_asymptotically_2022} as well.

\section{NLTS Hamiltonians from Codes of Arbitrary Dimension}
\label{sec:nlts}
In this section, we show that quantum LDPC codes with linear distance and an appropriate clustering property yield NLTS Hamiltonians, regardless of the code dimension. This result improves upon the prior construction of NLTS Hamiltonians of \cite{anshu_nlts_2023}, which required the stronger assumption that the code dimension be linearly large. For simplicity in this section, we restrict attention to binary alphabets, though we expect the results to generalize naturally to qudits for more general alphabet sizes.

\subsection{Setup of the Local Hamiltonian}
We let $\cC=\CSS(C_X,C_Z)=\spn\{\ket{y+C_X^\perp}:y\in C_Z\}$ be an $[[n,k,d]]_2$ quantum LDPC CSS code, where all parity checks have weight $\leq\ell$. We assume $\cC$ belongs to a family of such codes with constant relative distance $d/n=\Omega(1)$ and constant locality $\ell=O(1)$ as the block length $n\rightarrow\infty$. We will also assume that $\cC$ satisfies the clustering property described in Definition~\ref{def:clustering} below. The main novel aspect of our proof is that it holds for any nonzero code dimension $k>0$. In contrast, the prior NLTS proof \cite{anshu_nlts_2023} assumed the rate $k/n$ is constant as $n\rightarrow\infty$.

Recent good qLDPC codes, such as the quantum Tanner codes of \cite{leverrier_quantum_2022-1}, satisfy all of the conditions above, and have constant rate. Our planted quantum Tanner codes in Theorem~\ref{thm:plantedresult} also satisfy these conditions, but in some parameter regimes have dimension 1 (i.e.~inverse linear rate). Hence our planted quantum Tanner codes provide examples of qLDPC codes of subconstant rate where our NLTS result applies.

Denote the parity check matrices of $C_X$ and $C_Z$ by $H_X\in\bF_2^{m_X\times n}$ and $H_Z\in\bF_2^{m_Z\times n}$ respectively, so that $C_X=\ker H_X$ and $C_Z=\ker H_Z$. By assumption all rows of $H_X,H_Z$ have $\leq\ell$ nonzero entries. Also define $G_X^\epsilon=\{y\in\bF_2^n:|H_Xy|\leq\epsilon m_Z\}$, and define $G_Z^\epsilon$ analogously. We assume $\cC$ satisfies the following clustering property for $G_X^\epsilon$ and $G_Z^\epsilon$, which is stated as Property~1 in \cite{anshu_nlts_2023}. Below, we denote $|y|_C=\min_{y'\in C}|y+y'|$.

\begin{definition}[Clustering property \cite{anshu_nlts_2023}]
  \label{def:clustering}
  For constants $c_1,c_2,\epsilon_0>0$, we say that $\cC=\CSS(C_X,C_Z)$ exhibits \textbf{$(c_1,c_2,\epsilon_0)$-clustering} if for all $0<\epsilon<\epsilon_0$, the following hold:
  \begin{enumerate}
  \item Every $y\in G_X^\epsilon$ satisfies either $|y|_{C_Z^\perp}\leq c_1\epsilon n$ or $|y|_{C_Z^\perp}\geq c_2n$.
  \item Every $y\in G_Z^\epsilon$ satisfies either $|y|_{C_X^\perp}\leq c_1\epsilon n$ or $|y|_{C_X^\perp}\geq c_2n$.
  \end{enumerate}
\end{definition}

This clustering property follows from small-set (co)boundary expansion (Definition~\ref{def:ssexp}), as is shown below.

\begin{lemma}
  If $\cC$ has $(c_1',c_2')$-small-set boundary and coboundary expansion, then $\cC$ has $(c_1,c_2,\epsilon_0)$-clustering for $c_1=1/c_2'$, $c_2=c_1'$, $\epsilon_0=1$.
\end{lemma}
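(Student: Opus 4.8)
The plan is to reduce each clustering dichotomy directly to the corresponding small-set expansion inequality by replacing the given vector with a minimum-weight coset representative. Consider Item~1; Item~2 will follow by swapping the roles of $X$ and $Z$. Fix $0<\epsilon<1$ and let $y\in G_X^\epsilon$, so that $|H_Xy|\le\epsilon m_X$. The crucial point is that since $\cC$ is CSS we have $C_Z^\perp\subseteq C_X=\ker H_X$, so for any $z\in C_Z^\perp$ the vector $y+z$ has the same syndrome $H_X(y+z)=H_Xy$ and lies in the same coset, hence $|y+z|_{C_Z^\perp}=|y|_{C_Z^\perp}$. I would therefore pass to a minimum-weight representative $\bar y\in y+C_Z^\perp$, which satisfies $|\bar y|=|y|_{C_Z^\perp}$ and $|H_X\bar y|=|H_Xy|\le\epsilon m_X$.

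Next I would case on whether $\bar y$ is light enough to invoke the expansion hypothesis. If $|\bar y|\le c_1'n$, then $(c_1',c_2')$-small-set coboundary expansion applied to $\bar y$ gives $|H_X\bar y|/m_X\ge c_2'|\bar y|_{C_Z^\perp}/n$, that is, $c_2'|y|_{C_Z^\perp}/n\le\epsilon$, so $|y|_{C_Z^\perp}\le\epsilon n/c_2'=c_1\epsilon n$. If instead $|\bar y|>c_1'n$, then $|y|_{C_Z^\perp}=|\bar y|>c_1'n=c_2n$. In either case the required dichotomy holds, and the argument works for every $\epsilon$, so $\epsilon_0=1$ (indeed any value) suffices.

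Item~2 is handled identically after interchanging $X\leftrightarrow Z$: using the CSS containment $C_X^\perp\subseteq C_Z=\ker H_Z$ I would pass to a minimum-weight representative of $y+C_X^\perp$ for $y\in G_Z^\epsilon$, and then apply small-set \emph{boundary} expansion in the light case to conclude $|y|_{C_X^\perp}\le c_1\epsilon n$, with the heavy case giving $|y|_{C_X^\perp}>c_2n$ as before. There is essentially no obstacle here; the only things to keep straight are that boundary expansion controls the distance to $C_X^\perp$ while coboundary expansion controls the distance to $C_Z^\perp$, and that it is precisely the CSS orthogonality relation that legitimizes the minimum-weight-representative substitution.
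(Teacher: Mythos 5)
Your proof is correct and takes essentially the same approach as the paper: pass to the minimum-weight coset representative (using the CSS containment to preserve the syndrome), then apply small-set (co)boundary expansion in the light case. The only cosmetic difference is that you case on whether $|\bar y|\le c_1'n$, whereas the paper directly assumes the negation of the heavy alternative $|y|_{C_X^\perp}\ge c_2 n$ (which is the same thing, since $c_2=c_1'$) and derives the light conclusion.
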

\begin{proof}
  Assume that $y\in G_Z^\epsilon$ satisfies $|y|_{C_X^\perp}\leq c_2n=c_1'n$. Let $y'$ be the minimum-weight element of $y+C_X^\perp$, so that $H_Zy'=H_Zy$ and $|y'|=|y|_{C_X^\perp}$. Then the small-set boundary expansion implies that $|H_Zy'|/m_Z\geq c_2'\cdot|y'|/n$, so $|y|_{C_X^\perp}=|y'|\leq(n/c_2'm_Z)|H_Zy'|\leq \epsilon n/c_2'=c_1\epsilon n$. Thus we have shown the desired clustering for $G_Z^\epsilon$; an analogous argument applies to $G_X^\epsilon$.
\end{proof}

For the remainder of Section~\ref{sec:nlts}, we assume that $\cC$ satisfies $(c_1,c_2,\epsilon_0)$-clustering for some constants $c_1,c_2,\epsilon_0>0$ as $n\rightarrow\infty$.

Following \cite{anshu_nlts_2023}, we define our $\ell$-local Hamiltonian $\Ham$ to be the code Hamiltonian
\begin{equation*}
  \Ham = \frac12(\Ham_X+\Ham_Z)
\end{equation*}
for
\begin{align*}
  \Ham_X &= \frac{1}{m_X}\sum_{y\in\mathrm{rows}(H_X)}\frac{I-X^y}{2} \\
  \Ham_Z &= \frac{1}{m_Z}\sum_{y\in\mathrm{rows}(H_Z)}\frac{I-Z^y}{2}.
\end{align*}
Thus in particular the ground space of $\Ham$ is precisely the code space $\cC=\spn\{\ket{y+C_X^\perp}:y\in C_Z\}$

While our general proof will follow that of \cite{anshu_nlts_2023}, our use of an uncertainty principle instead follows the earlier work of \cite{eldar_local_2017}. Below we state the uncertainty principle we will use, which appears implicitly in \cite{eldar_local_2017}.

\begin{lemma}[Uncertainty principle \cite{hofmann_violation_2003,eldar_local_2017}]
  \label{lem:uncertainty}
  Let $A,B$ be Hermitian observables with $AB+BA=0$ and $A^2=B^2=I$. Then for every (possibly mixed) state $\rho$, at least one of the inequalities $|\Tr(A\rho)|\leq 1/2+1/2\sqrt{2}$ or $|\Tr(B\rho)|\leq 1/2+1/2\sqrt{2}$ holds.
\end{lemma}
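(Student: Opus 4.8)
The plan is to reduce this two-observable statement to the standard fact that the ``Bloch vector'' of three pairwise-anticommuting involutions lies in the unit ball. First I would introduce the augmented operator $D := iAB$ and check that $A$, $B$, $D$ form a triple of pairwise-anticommuting Hermitian involutions. Indeed, $D$ is Hermitian since $D^\dagger = -iB^\dagger A^\dagger = -iBA = iAB = D$ (using $BA = -AB$); it squares to the identity since $D^2 = (iAB)^2 = -ABAB = -A(BA)B = -A(-AB)B = A^2B^2 = I$; and it anticommutes with $A$ since $AD + DA = iA^2B + iABA = iB - iB = 0$ (using $ABA = -A^2B = -B$), and similarly $BD + DB = 0$. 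These are all routine sign-chasing computations from the hypotheses $AB+BA=0$ and $A^2=B^2=I$.

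Next I would use these relations to control operator norms of real linear combinations. For any $\alpha,\beta,\gamma\in\bR$ the operator $N := \alpha A + \beta B + \gamma D$ is Hermitian, and when one expands $N^2$ all six cross terms vanish by the pairwise anticommutation, leaving $N^2 = (\alpha^2+\beta^2+\gamma^2)\,I$. Hence every eigenvalue of $N$ equals $\pm\sqrt{\alpha^2+\beta^2+\gamma^2}$, so the operator norm is $\|N\| = \sqrt{\alpha^2+\beta^2+\gamma^2}$, and therefore $|\Tr(N\rho)| \le \|N\|$ for every density matrix $\rho$. Writing $a = \Tr(A\rho)$, $b = \Tr(B\rho)$, $c = \Tr(D\rho)$ — all real, since $A,B,D$ are Hermitian — and specializing to $(\alpha,\beta,\gamma) = (a,b,c)$ gives $a^2+b^2+c^2 = \Tr(N\rho) \le \sqrt{a^2+b^2+c^2}$, hence $a^2 + b^2 + c^2 \le 1$ and in particular $a^2 + b^2 \le 1$.

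Finally I would close with a one-line arithmetic check: since $\bigl(\tfrac12 + \tfrac{1}{2\sqrt2}\bigr)^2 = \tfrac38 + \tfrac{1}{2\sqrt2} > \tfrac12$, having both $|a| > \tfrac12 + \tfrac{1}{2\sqrt2}$ and $|b| > \tfrac12 + \tfrac{1}{2\sqrt2}$ would force $a^2 + b^2 > 1$, contradicting the bound just obtained; hence at least one of $|\Tr(A\rho)|$ and $|\Tr(B\rho)|$ is at most $\tfrac12 + \tfrac{1}{2\sqrt2}$. There is no serious obstacle here — the only place needing care is keeping the signs straight when verifying that $D = iAB$ is a Hermitian involution anticommuting with both $A$ and $B$ (equivalently, that $A,B,iAB$ restrict to a copy of the Pauli matrices on each irreducible summand, which is the conceptual reason the Bloch-ball bound applies).
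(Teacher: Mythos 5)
Your proof is correct, and it takes a genuinely different route from the paper. The paper invokes a pure-state variance uncertainty relation (Hofmann 2003, quoted as Lemma~\ref{lem:uncertaintypure}: $\Delta A^2 + \Delta B^2 \ge 1$, equivalently $\langle A\rangle^2+\langle B\rangle^2\le 1$ for pure states), and then extends to mixed $\rho$ by decomposing into an ensemble, partitioning the pure components according to which of the two inequalities holds, and applying the triangle inequality to the half of the ensemble where at least one holds. That convexity step is lossy: it degrades the pure-state threshold $1/\sqrt{2}$ to $1/2 + 1/(2\sqrt{2})$, which is why the odd-looking constant appears in the statement. Your argument instead introduces $D = iAB$, verifies $A,B,D$ are pairwise-anticommuting Hermitian involutions, and shows directly that the ``Bloch vector'' $(\Tr(A\rho),\Tr(B\rho),\Tr(D\rho))$ has Euclidean norm $\le 1$ for \emph{any} density matrix $\rho$, by computing $\|\alpha A+\beta B+\gamma D\| = \sqrt{\alpha^2+\beta^2+\gamma^2}$ and self-testing with $(\alpha,\beta,\gamma)$ equal to the Bloch vector itself. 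This proves the stronger, clean inequality $\Tr(A\rho)^2 + \Tr(B\rho)^2 \le 1$ for mixed states, from which the stated bound (indeed, the sharper one with threshold $1/\sqrt{2}$) follows immediately. In short: you bypass the pure-state lemma and the lossy convexity step entirely, getting a tighter conclusion with a self-contained operator-norm argument; the paper's route is shorter to write given that Lemma~\ref{lem:uncertaintypure} is already cited, but pays for it in the constant.
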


For completeness, we present a proof of Lemma~\ref{lem:uncertainty} from the following result for pure states, which is given as Lemma~37 in \cite{eldar_local_2017}, but previously shown by \cite{hofmann_violation_2003}.

\begin{lemma}[\cite{hofmann_violation_2003}]
  \label{lem:uncertaintypure}
  Let $A,B$ be Hermitian observables with $AB+BA=0$ and $A^2=B^2=I$, and let $\ket{\psi}$ be a pure state. Letting $\Delta A^2=\bra{\psi}A^2\ket{\psi}-\bra{\psi}A\ket{\psi}^2$, then $\Delta A^2+\Delta B^2\geq 1$.
\end{lemma}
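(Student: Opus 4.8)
The plan is to first observe that the hypotheses $A^2=B^2=I$ collapse the two variances to a single simple quantity. Indeed $\bra{\psi}A^2\ket{\psi}=\bra{\psi}B^2\ket{\psi}=1$, so $\Delta A^2=1-\bra{\psi}A\ket{\psi}^2$ and $\Delta B^2=1-\bra{\psi}B\ket{\psi}^2$, and hence
\[
  \Delta A^2+\Delta B^2 = 2-\bra{\psi}A\ket{\psi}^2-\bra{\psi}B\ket{\psi}^2 .
\]
Thus the lemma is equivalent to the bound $\bra{\psi}A\ket{\psi}^2+\bra{\psi}B\ket{\psi}^2\leq 1$ (note both expectations are real since $A,B$ are Hermitian), and the remaining work is to prove this bound.

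The key step is to exploit the anticommutation relation by rotating $A$ and $B$ into each other. For a real parameter $\theta$, define the Hermitian observable $C_\theta=\cos\theta\cdot A+\sin\theta\cdot B$. Expanding the square and using $A^2=B^2=I$ together with $AB+BA=0$ gives
\[
  C_\theta^2 = \cos^2\theta\cdot A^2+\sin^2\theta\cdot B^2+\cos\theta\sin\theta\,(AB+BA) = I .
\]
So $C_\theta$ is a Hermitian involution; every eigenvalue of $C_\theta$ is $\pm 1$, whence $\|C_\theta\|_{\mathrm{op}}\leq 1$ and in particular $|\bra{\psi}C_\theta\ket{\psi}|\leq 1$ for the given state.

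Finally I would optimize over $\theta$. If $\bra{\psi}A\ket{\psi}=\bra{\psi}B\ket{\psi}=0$ the desired inequality is immediate, so assume $r:=\sqrt{\bra{\psi}A\ket{\psi}^2+\bra{\psi}B\ket{\psi}^2}>0$ and choose $\theta$ with $\cos\theta=\bra{\psi}A\ket{\psi}/r$ and $\sin\theta=\bra{\psi}B\ket{\psi}/r$. Then $\bra{\psi}C_\theta\ket{\psi}=\cos\theta\,\bra{\psi}A\ket{\psi}+\sin\theta\,\bra{\psi}B\ket{\psi}=r$, so combining with $|\bra{\psi}C_\theta\ket{\psi}|\leq 1$ gives $r\leq 1$, i.e.\ $\bra{\psi}A\ket{\psi}^2+\bra{\psi}B\ket{\psi}^2\leq 1$. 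Substituting into the displayed expression for $\Delta A^2+\Delta B^2$ yields $\Delta A^2+\Delta B^2\geq 1$, as claimed. (Equivalently, the last step is just the observation that $\theta\mapsto\cos\theta\,\bra{\psi}A\ket{\psi}+\sin\theta\,\bra{\psi}B\ket{\psi}$ attains maximum value $r$ over $\theta\in\mathbb{R}$, which must be at most $1$.)

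There is no serious obstacle here; the single idea that makes everything work is the rotation $C_\theta=\cos\theta\,A+\sin\theta\,B$, whose square is forced to equal $I$ precisely by the anticommutation hypothesis, after which the bound $\bra{\psi}A\ket{\psi}^2+\bra{\psi}B\ket{\psi}^2\leq 1$ follows by aligning $\theta$ with the vector of expectation values. Everything else is elementary bookkeeping.
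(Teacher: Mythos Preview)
Your proof is correct. However, the paper does not actually prove Lemma~\ref{lem:uncertaintypure}; it merely cites it from \cite{hofmann_violation_2003} (and notes it appears as Lemma~37 in \cite{eldar_local_2017}), and then uses it as a black box to derive Lemma~\ref{lem:uncertainty} for mixed states. So there is no paper proof to compare against.

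For what it is worth, your argument via the rotated observable $C_\theta=\cos\theta\,A+\sin\theta\,B$ is a clean and standard way to establish this anticommutator uncertainty relation: the anticommutation $AB+BA=0$ together with $A^2=B^2=I$ forces $C_\theta^2=I$, whence $|\bra{\psi}C_\theta\ket{\psi}|\leq 1$, and optimizing over $\theta$ yields $\bra{\psi}A\ket{\psi}^2+\bra{\psi}B\ket{\psi}^2\leq 1$. This is exactly the inequality the paper invokes in the first line of its proof of Lemma~\ref{lem:uncertainty}.
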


\begin{proof}[Proof of Lemma~\ref{lem:uncertainty}]
  Let $\rho=\sum_\psi p_\psi\ket{\psi}$ be a decomposition of $\rho$ into an ensemble of pure states $\ket{\psi}$. By Lemma~\ref{lem:uncertaintypure}, it holds for each $\psi$ that $\bra{\psi}A\ket{\psi}^2+\bra{\psi}B\ket{\psi}^2\leq 1$, so either $|\bra{\psi}A\ket{\psi}|\leq 1/\sqrt{2}$ or $|\bra{\psi}B\ket{\psi}|<1/\sqrt{2}$. Partition the pure states into sets $\cA,\cB$ so that the former inequality holds for $\psi\in\cA$ and the latter for $\psi\in\cB$. Then either $p_{\cA}:=\sum_{\psi\in\cA}p_\psi\geq 1/2$ or $1-p_{\cA}=p_{\cB}:=\sum_{\psi\in\cB}p_\psi\geq 1/2$. Assume that $p_{\cA}\geq 1/2$; the proof for $p_{\cB}\geq 1/2$ is identical. Then the desired result follows by applying the triangle inequality:
  \begin{equation*}
    |\Tr(A\rho)| = \left|\sum_{\psi\in\cA}p_\psi\bra{\psi}A\ket{\psi}+\sum_{\psi\in\cB}p_\psi\bra{\psi}A\ket{\psi}\right| \leq p_{\cA}\cdot\frac{1}{\sqrt{2}}+p_{\cB}\cdot 1 \leq \frac{1}{2\sqrt{2}}+\frac12.
  \end{equation*}
\end{proof}

We follow prior works such as \cite{eldar_local_2017,anshu_nlts_2023} in estabilishing circuit lower bounds for approximate ground states of $\Ham$ by showing that the measurement distributions of these states are well-spread, in the following sense.

\begin{definition}
  For $\mu,\delta>0$, A probability distribution $D$ over $\bF_2^n$ is \textbf{$(\mu,\delta)$-spread} if there exist $S_0,S_1\subseteq\bF_2^n$ such that $D(S_0)\geq\mu$, $D(S_1)\geq\mu$, and $\dis(S_0,S_1)\geq\delta n$.
\end{definition}

We specifically use following result, which appears as Fact~4 in \cite{anshu_nlts_2023} but is similar to an earlier result of \cite{eldar_local_2017}. Below, for an $n$-qubit state $\psi$, we let $D_X^\psi$ and $D_Z^\psi$ denote the distributions over $\bF_2^n$ obtained by measuring $\psi$ in the $X$ and $Z$ bases respectively.

\begin{lemma}[Circuit lower bound \cite{eldar_local_2017,anshu_nlts_2023}]
  \label{lem:circbound}
  Let $\psi$ be a (possibly mixed) quantum state on $n$ qubits such that the $Z$-measurement distribution $D_Z^\psi$ is $(\mu,\delta)$-spread. Then any circuit (on $\geq n$ qubits) that constructs $\psi$ must have depth at least
  \begin{equation*}
    \frac13\log\left(\frac{\delta^2n}{400\log(1/\mu)}\right).
  \end{equation*}
\end{lemma}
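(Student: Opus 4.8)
\textbf{Proof proposal for Lemma~\ref{lem:circbound} (circuit lower bound).}
The plan is to use a light-cone argument combined with a concentration inequality, following the standard approach of Eldar--Harrow. Suppose $\psi$ is constructed by a circuit $U$ of depth $t$ acting on $N\geq n$ qubits starting from a product state $\ket{0}^{\otimes N}$ (allowing ancillas), and let $D_Z^\psi$ be $(\mu,\delta)$-spread with witnessing sets $S_0,S_1$. The key geometric fact is that each output qubit $i\in[n]$ depends only on the input qubits inside its backward light cone, which has size at most $2^{O(t)}$ for a circuit of gate-fan-in $2$; more precisely, since we only care about a lower bound, I will track the light-cone radius carefully to extract the constant $400$ and the exponent $1/3$, which suggests that the relevant bound is that flipping any single input qubit changes at most $L = 2^{3t}$ (or a comparably-sized set of) output qubits, so the output distribution $D_Z^\psi$ is a function with bounded ``Lipschitz'' dependence on the independent input randomness.

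The main step is then a martingale/McDiarmid-type concentration argument. Think of measuring the input qubits one at a time (or in an appropriate order); each measurement outcome is an independent bit, and revealing it changes the conditional expectation of the indicator of ``the output string lands in a fixed Hamming ball around $S_0$'' by an amount controlled by the light-cone size $L$, since an input qubit influences at most $L$ output coordinates. Applying Azuma--Hoeffding to the Doob martingale for the function ``Hamming distance of the output from $S_0$'' (which has bounded differences $O(L)$ per input coordinate, over $n$ input coordinates relevant to the output) gives that this distance concentrates around its mean with fluctuations of order $L\sqrt{n}$. But $D_Z^\psi(S_0)\geq\mu$ and $D_Z^\psi(S_1)\geq\mu$ with $\dis(S_0,S_1)\geq\delta n$ force the distance-from-$S_0$ random variable to take both a small value and a value $\geq\delta n$ each with probability $\geq\mu$; hence its fluctuation range must be $\gtrsim\delta n$, while concentration says the typical fluctuation is $\lesssim L\sqrt{n}\cdot\sqrt{\log(1/\mu)}$. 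Combining, $\delta n \lesssim 2^{3t}\sqrt{n\log(1/\mu)}$, which rearranges to $2^{3t}\gtrsim \delta^2 n/\log(1/\mu)$ up to the absolute constant $400$, i.e. $t\geq \frac13\log\!\big(\frac{\delta^2 n}{400\log(1/\mu)}\big)$.

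A few details need care. First, the argument must handle \emph{mixed} states: I would write $\psi$ as produced by a circuit on a possibly larger register and then traced out, or equivalently purify and absorb the purifying system into the ancillas, so the light-cone bound still applies to the $n$ output qubits we measure. Second, one must be careful that the two sets $S_0,S_1$ are used symmetrically: define $f(x)=\dis(x,S_0)$ on $\bF_2^n$; then $D_Z^\psi$ assigns probability $\geq\mu$ to $\{f=0\}$ and probability $\geq\mu$ to $\{f\geq \delta n\}$ (using $\dis(S_0,S_1)\geq\delta n$ and $D_Z^\psi(S_1)\geq\mu$), and $f$ is $1$-Lipschitz in Hamming distance, so composing with the light-cone bound gives bounded differences $L$ in the input bits. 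Third, the precise constants ($400$, the factor $1/3$) come from bookkeeping the light-cone size $2^{3t}$ versus $2^t$ and from the constant in Azuma; I would choose to state the light cone for depth-$t$ circuits of two-qubit gates as having at most $2^t$ input qubits in each output qubit's past, so a single input bit affects at most $2^{2t}$... — I will instead just calibrate whichever convention yields exactly the claimed expression, since the statement is quoted verbatim from \cite{anshu_nlts_2023,eldar_local_2017} and the constants there are what we need.

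\textbf{Main obstacle.} The delicate point is the concentration step: one has to set it up so that the bounded-difference constant is the light-cone size (not something larger like the full output length), and so that the number of martingale steps is $\Theta(n)$ rather than $\Theta(N)$ when there are many ancillas — otherwise the bound degrades. The resolution is that, after fixing the circuit, the measurement outcome of output qubit $i$ depends only on input qubits in its past light cone, so only $O(n\cdot 2^{O(t)})$ input qubits are relevant in total, and more importantly each \emph{individual} input qubit feeds forward to at most $2^{O(t)}$ \emph{outputs}; revealing input qubits in any order thus changes $f$ by at most $2^{O(t)}$ per step, and there are at most the relevant number of such steps, giving the stated bound. Since all of this is standard and the statement matches prior work, I expect the writeup to be short, essentially a pointer to \cite{anshu_nlts_2023,eldar_local_2017} together with the light-cone-plus-Azuma sketch above.
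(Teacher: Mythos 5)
The paper does not prove this lemma; it simply states it with a citation (Fact~4 of \cite{anshu_nlts_2023}, following \cite{eldar_local_2017}), so there is no in-paper argument to compare against. Your sketch captures the correct high-level strategy (light cones plus concentration), but the concrete formalization has a genuine flaw: the input register of the circuit is the fixed product state $\ket{0}^{\otimes N}$, so ``measuring the input qubits one at a time'' exposes a deterministic string and introduces no randomness---the Doob martingale you describe over input coordinates is constant. The randomness in $D_Z^\psi$ comes entirely from the final $Z$-measurement of the output qubits, and it is those outcomes that must be exposed. The subtlety you then have to confront (and which your sketch does not) is that a naive Doob martingale over output bits does \emph{not} automatically have increments bounded by the light-cone size: conditioning on one output bit can shift the conditional law of many later bits through chains of light-cone overlaps. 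The argument in \cite{eldar_local_2017,anshu_nlts_2023} instead uses that output qubits with disjoint backward light cones have independent measurement outcomes (because the input is a product state), so the dependency graph on output coordinates, with $i\sim j$ iff $L^-(i)\cap L^-(j)\neq\emptyset$, has degree bounded by a function of $2^t$; one then applies a dependency-graph/coloring version of McDiarmid to the $1$-Lipschitz function $f(x)=\dis(x,S_0)$, which cannot assign mass $\geq\mu$ to both $\{f=0\}$ and $\{f\geq\delta n\}$ unless $2^{O(t)}$ is as large as the claimed bound forces.

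A secondary issue is that your arithmetic does not close: from ``$\delta n \lesssim 2^{3t}\sqrt{n\log(1/\mu)}$'' one gets $2^{3t}\gtrsim \delta\sqrt{n/\log(1/\mu)}$, i.e.\ an exponent of $1/6$, not the stated $1/3$. To recover $2^{3t}\gtrsim\delta^2 n/\log(1/\mu)$ one needs the Azuma variance to be $\Theta(n\,2^{3t})$, e.g.\ via $\Theta(n\,2^t)$ exposure steps each with increment $\Theta(2^t)$, not $n$ steps with increment $2^{3t}$. Since you explicitly defer to the cited constants, this is minor, but it is worth recognizing that the setup you wrote down does not actually produce the displayed inequality.
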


\subsection{Statement of Main Result on NLTS Hamiltonians}
In this section, we state our main technical result, which implies that the code Hamiltonian $\Ham$ for a CSS code with linear distance that exhibits the clustering property is NLTS, that is, its approximate ground states cannot be constructed by constant-depth circuits. Crucially, we only assume that the dimension of the code is positive.

Specifically, our main technical result below shows that the measurement distribution of every approximate ground state of $\Ham$ is well-spread in either the $X$ or $Z$ basis.

\begin{theorem}
  \label{thm:spread}
  Let $\Ham$ be the code Hamiltonian for a $[[n,k,d]]_2$ CSS code $\cC=\CSS(C_X,C_Z)$ of positive dimension $k>0$ that exhibits $(c_1,c_2,\epsilon_0)$-clustering. For any
  \begin{equation}
    \label{eq:epsilon}
    \epsilon < \frac{1}{1000}\cdot\min\left\{\frac{\epsilon_0}{2},\;\frac{c_2}{4c_1},\;\frac{d}{2c_1n}\right\},
  \end{equation}
  let $\rho$ be an $\epsilon$-approximate ground state of $\Ham$, so that $\Tr(\Ham\rho)\leq\epsilon$. Then at least one of $D_X^\rho$ or $D_Z^\rho$ is $(\mu,\delta)$-spread for $\mu=.02$ and $\delta=c_2$.
\end{theorem}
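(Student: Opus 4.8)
The plan is to follow the two-step template shared by \cite{eldar_local_2017,anshu_nlts_2023}: (i) extract from the low-energy condition $\Tr(\Ham\rho)\le\epsilon$ enough ``linear structure'' on the measurement distributions $D_X^\rho$ and $D_Z^\rho$ so that, after conditioning on the good sets, the supports of these distributions look like (small neighborhoods of) cosets of $C_Z^\perp$ and $C_X^\perp$ respectively; and then (ii) invoke the anticommuting-observable uncertainty principle (Lemma~\ref{lem:uncertainty}) to show that at least one of the two basis measurements cannot be concentrated on a single cluster, which forces the well-spreadness. Concretely, I would first argue that since $\Tr(\Ham\rho)\le\epsilon$ we have $\Tr(\Ham_X\rho),\Tr(\Ham_Z\rho)\le 2\epsilon$, and hence, writing the $X$-measurement distribution $D_X^\rho$ as a distribution over $\bF_2^n$, the expected syndrome weight $\bE_{y\sim D_X^\rho}|H_Xy|/m_X$ is $O(\epsilon)$; by Markov, $D_X^\rho$ places at least (say) $1-1/10$ of its mass on $G_X^{10\epsilon\cdot(\text{const})}\subseteq G_X^{\epsilon'}$ for a suitable $\epsilon'<\epsilon_0$, and similarly for $Z$. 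Now the clustering property (Definition~\ref{def:clustering}) partitions $G_X^{\epsilon'}$ into the ``near'' part $N_X=\{y: |y|_{C_Z^\perp}\le c_1\epsilon' n\}$ and the ``far'' part $F_X=\{y:|y|_{C_Z^\perp}\ge c_2 n\}$, and likewise $G_Z^{\epsilon'}=N_Z\sqcup F_Z$ with $N_Z=\{y:|y|_{C_X^\perp}\le c_1\epsilon' n\}$.

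The heart of the argument is the following case analysis. Suppose first that $D_X^\rho$ assigns mass $\ge\mu$ to $F_X$ and mass $\ge\mu$ to the part of $N_X$ that is near a fixed coset, actually near $C_Z^\perp$ itself; then take $S_0$ to be (a small ball around) $C_Z^\perp$ and $S_1=F_X$: by definition of $F_X$, $\dis(S_0,S_1)\ge c_2 n - c_1\epsilon' n\ge (c_2/2) n\ge \delta n$ once $\epsilon$ is as small as in \eqref{eq:epsilon}, and both sets get mass $\ge\mu$, so $D_X^\rho$ is $(\mu,\delta)$-spread and we are done. The symmetric statement handles $D_Z^\rho$. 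So the remaining case is that \emph{neither} $D_X^\rho$ nor $D_Z^\rho$ straddles its near/far dichotomy: up to relabelling, $D_X^\rho$ is essentially supported on a ball around a single coset $v_X+C_Z^\perp$ and $D_Z^\rho$ on a ball around a single coset $v_Z+C_X^\perp$ (the ``far'' alternative is symmetric after translating). One then needs to rule this out using the uncertainty principle: choose group elements / Pauli strings so that $A$ measures membership of the $X$-outcome in one of two far-apart sets derived from $v_X+C_Z^\perp$ and $B$ the analogous membership for the $Z$-outcome; using $k>0$ there is a logical operator, i.e. an element of $C_Z\setminus C_X^\perp$ (and of $C_X\setminus C_Z^\perp$), which furnishes observables $A=X^{\alpha}$-type and $B=Z^{\beta}$-type that anticommute (since $\alpha\cdot\beta=1$) and square to $I$, yet on $\rho$ both would have $|\Tr(A\rho)|,|\Tr(B\rho)|$ close to $1$ because each measurement is concentrated on a single coset. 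This contradicts Lemma~\ref{lem:uncertainty}, completing the proof with $\mu=.02$, $\delta=c_2$.

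A few technical points I would be careful about. The decoding map $y\mapsto$ (nearest coset representative) is only well-defined on the near part and, as the introduction warns, is ``many-to-one'' in an unintuitive way for approximate ground states; so rather than literally decoding I would work with the indicator observables of balls of radius $\approx c_1\epsilon' n$ around cosets, and use that within $G_X^{\epsilon'}$ the near part is a disjoint union of such balls (balls around distinct cosets of $C_Z^\perp$ are $\ge c_2 n$ apart by clustering applied to differences, once $2c_1\epsilon' n < c_2 n$). I would also need the quantitative link between the measurement-outcome observables and Pauli observables: the indicator of ``$X$-outcome lies in $v_X+C_Z^\perp$'' can be written, up to the small ball fuzz and up to the removed $1/10$ of mass, as a $\pm1$-observable built from the $C_Z$ checks together with one logical $Z$-type operator, and its anticommuting partner from $C_X$ checks plus a logical $X$-type operator; turning the ``$\ge\mu$ on each of two clusters or $\ge 1-\mu$ on one cluster'' dichotomy into a clean $|\Tr(A\rho)|\le 1/2+1/(2\sqrt2)$ bound is where the constant $\mu=.02$ gets pinned down, and this bookkeeping (propagating the $O(\epsilon)$ syndrome slack and the ball radius through the triangle inequality) is the main obstacle — conceptually routine, following \cite{anshu_nlts_2023}, but needing the ranges in \eqref{eq:epsilon} to make every inequality comfortably go the right way, in particular $\epsilon < d/(2000 c_1 n)$ to ensure the ``near'' balls are so small that the far clusters genuinely cannot collapse onto $C_Z^\perp$ via a low-weight logical.
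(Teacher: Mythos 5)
Your high-level plan (reduce to pure states with small syndrome; use clustering to get a distance structure; feed a logical-operator pair into the uncertainty principle) matches the paper, but there is a genuine gap at the crucial step, and the paper's key device — the decoding channel $\Dec^1$ — is exactly what you're missing.

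The problem is the claim that, if $D_Z^\rho$ is concentrated on a single cluster $\clustZ{v_Z}$, then $|\Tr(Z^{\bar c_X}\rho)|$ is close to $1$. This is false. An element $y\in\clustZ{v_Z}$ only satisfies $|y-v_Z|_{C_X^\perp}\le 2c_1\epsilon' n$, i.e.\ $y=v_Z+c+e$ with $c\in C_X^\perp$ and $|e|\le 2c_1\epsilon' n$. Since $\bar c_X\in C_X$ annihilates $C_X^\perp$ but not the noise $e$, we get $\bar c_X\cdot y=\bar c_X\cdot v_Z+\bar c_X\cdot e$, and $\bar c_X\cdot e$ genuinely varies over the cluster. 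So the Pauli logical $Z^{\bar c_X}$ is \emph{not} constant on a cluster, and its expectation on a cluster-concentrated state can be near $0$, not near $\pm1$. Your fallback — replace $Z^{\bar c_X}$ by a $\pm1$ indicator observable of balls around cosets — destroys the algebraic hypotheses of Lemma~\ref{lem:uncertainty}: such indicator observables are diagonal in a computational/Hadamard basis but are not Paulis and do not anticommute, so the uncertainty principle cannot be invoked.

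The paper resolves this by inserting a syndrome-dependent ``decoding'' channel $\Dec_Z^1$ before measuring the logical. Given the partition of $G_Z^{\epsilon'}$ into clusters (Lemma~\ref{lem:clust}), it fixes a representative element $e_Z(s)$ per syndrome $s$ so that $y\mapsto y+e_Z(H_Z y)$ sends every element of a cluster into the \emph{same} coset of $C_X^\perp$ (Lemma~\ref{lem:decclust}). After this map, $\bar c_X\cdot(y+e_Z(H_Z y))$ is constant on each cluster, so $\Tr(\bar Z\,\Dec_Z^1(\psi'))$ becomes exactly $D_Z^{\psi'}(S_Z^0)-D_Z^{\psi'}(S_Z^1)$ for the sets $S_Z^b=\{y:\bar c_X\cdot(y+e_Z(H_Z y))=b\}$, which are each unions of clusters and hence $\ge c_2 n$ apart. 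Crucially $\Dec_Z$ is a unitary acting as a fixed classical map conditioned on the syndrome register, so $\bar Z$ still anticommutes with $\bar X$ after conjugating through $\Dec_X$, and Lemma~\ref{lem:uncertainty} applies to the decoded state; this is what forces one of $D_X^{\psi'},D_Z^{\psi'}$ to give mass $\ge\mu'$ to both halves. The paper also avoids your near/far case split entirely: it does not argue by contradiction that concentration on one cluster is impossible, but directly builds $S_Z^0,S_Z^1$ (and $S_X^0,S_X^1$) and shows $|D_Z^{\psi'}(S_Z^0)-D_Z^{\psi'}(S_Z^1)|\le 1/2+1/(2\sqrt 2)$, which pins both masses down at once. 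Your Markov-based reduction to small syndrome is serviceable, but note the paper instead projects onto $\cC^{\le\epsilon'}$ and bounds $\|\ket\psi-\ket{\psi'}\|$, which yields clean control over $D_Z^\rho(S_Z^b)$ rather than just over total variation to a reweighted distribution.

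Without the decoding step (or some equivalent mechanism that makes the logical Pauli constant on each cluster), the argument does not close.
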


Lemma~\ref{lem:circbound} immediately yields the following corollary.

\begin{corollary}[$\Ham$ is NLTS]
  \label{cor:nlts}
  Define $\epsilon,\mu,\delta$, and $\Ham$ as in Theorem~\ref{thm:spread}. Then no $\epsilon$-approximate ground state of $\Ham$ can be constructed by a circuit of depth less than
  \begin{equation*}
    \frac13\log\left(\frac{\delta^2n}{400\log(1/\mu)}\right)+1.
  \end{equation*}
\end{corollary}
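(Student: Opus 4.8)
The plan is to derive Corollary~\ref{cor:nlts} as an immediate consequence of Theorem~\ref{thm:spread} combined with the circuit lower bound of Lemma~\ref{lem:circbound}. The only point needing attention is that Theorem~\ref{thm:spread} only guarantees well-spreadness of $D_X^\rho$ \emph{or} $D_Z^\rho$, whereas Lemma~\ref{lem:circbound} is stated only for the $Z$-measurement distribution, so the $X$-case requires a short bookkeeping step.

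Concretely, I would fix $\epsilon,\mu=.02,\delta=c_2$ as in Theorem~\ref{thm:spread} and let $\rho$ be an arbitrary $\epsilon$-approximate ground state of $\Ham$, so that $\Tr(\Ham\rho)\le\epsilon$. By Theorem~\ref{thm:spread}, at least one of $D_X^\rho$, $D_Z^\rho$ is $(\mu,\delta)$-spread. In the case that $D_Z^\rho$ is $(\mu,\delta)$-spread, Lemma~\ref{lem:circbound} applies directly and shows that any circuit constructing $\rho$ has depth at least $\tfrac13\log(\delta^2 n/(400\log(1/\mu)))$. In the remaining case, where $D_X^\rho$ is $(\mu,\delta)$-spread, I would pass to the conjugated state $\rho' = H^{\otimes n}\rho H^{\otimes n}$, where $H$ is the single-qubit Hadamard gate; since conjugation by $H^{\otimes n}$ swaps the $X$ and $Z$ eigenbases, the $Z$-measurement distribution of $\rho'$ equals $D_X^\rho$ and is therefore $(\mu,\delta)$-spread, so Lemma~\ref{lem:circbound} applies to $\rho'$. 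Any depth-$D$ circuit constructing $\rho$, followed by one additional layer of Hadamard gates, is a depth-$(D+1)$ circuit constructing $\rho'$, whence $D+1\ge\tfrac13\log(\delta^2 n/(400\log(1/\mu)))$. Combining the two cases yields the stated depth bound, with the additive constant accounting for the single Hadamard layer incurred exactly when it is the $X$ basis that witnesses spreadness.

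I do not expect any genuine obstacle here: the corollary is essentially a one-line deduction once Theorem~\ref{thm:spread} and Lemma~\ref{lem:circbound} are available. The only care required is the elementary observation that the map $\rho\mapsto H^{\otimes n}\rho H^{\otimes n}$ changes circuit depth by exactly one, so the lower bound is preserved up to an $n$-independent additive constant no matter which of the two measurement bases Theorem~\ref{thm:spread} happens to produce.
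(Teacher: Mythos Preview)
Your proposal is correct and matches the paper's approach: the paper simply notes that Lemma~\ref{lem:circbound} ``immediately yields'' the corollary without spelling out the $X$-versus-$Z$ bookkeeping you provide via Hadamard conjugation. One small remark: your argument actually produces the lower bound $\tfrac13\log\bigl(\delta^2 n/(400\log(1/\mu))\bigr)-1$ rather than the stated $+1$ (you lose a layer, not gain one, when passing from $\rho'$ back to $\rho$), so the sign of the additive constant in the corollary as written appears to be a slip---immaterial for the NLTS conclusion, of course, but worth noticing since you claimed to recover ``the stated depth bound'' exactly.
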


Our proof of Theorem~\ref{thm:spread} is similar to \cite{eldar_local_2017} in that we combine an uncertainty principle with a decoding procedure to obtain uncertainty for approximate ground states. We furthermore use the clustering property of $\cC$ similarly to \cite{anshu_nlts_2023}. As such, the key novel aspect of our proof is the use of a ``decoding'' procedure that handles clusters of approximate ground states which do not correspond to any true codeword.

\subsection{Proof of Well-Spreadness for Approximate Ground States}
In this section, we prove Theorem~\ref{thm:spread}. Throughout this section, we maintain the notation in the statement of Theorem~\ref{thm:spread}, so that $\cC=\CSS(C_X,C_Z)$ is a $[[n,k,d]]_2$ CSS code exhibiting $(c_1,c_2,\epsilon_0)$-clustering, $\rho$ is an $\epsilon$-approximate ground state of $\Ham$ for $\epsilon$ as in~(\ref{eq:epsilon}), and $\mu=.02$, $\delta=c_2$.

\subsubsection{Reducing to Well-Spreadness of Pure States with Small Syndrome}
We will first show that $D_X^\rho$ and $D_Z^\rho$ are mostly supported inside $G_X^{O(\epsilon)}$ and $G_Z^{O(\epsilon)}$ respectively, so that up to a small loss in parameters we may assume they are entirely supported inside these sets. We will also show that it suffices to consider pure states $\psi'=\ket{\psi'}\bra{\psi'}$, rather than arbitrary mixed states $\rho$.

Formally, we may decompose our Hilbert space $\cH=(\bC^2)^{\otimes n}$ into orthogonal subspaces as
  \begin{equation*}
    \cH = \bigoplus_{e_X+C_X\in\bF_2^n/C_X,e_Z+C_Z\in\bF_2^n/C_Z}X^{e_Z}Z^{e_X}\cC,
  \end{equation*}
  where the choices of coset representatives in the above sum does not matter because by definition $X^{c_Z}Z^{c_X}\cC=\cC$ for $c_X\in C_X$, $c_Z\in C_Z$. Observe furthermore that each subspace $X^{e_Z}Z^{e_X}\cC$ is by definition an eigenspace of the code Hamiltonian $\Ham$ with eigenvalue $|H_Xe_X|/2m_X+|H_Ze_Z|/2m_Z$.

  Set
  \begin{equation*}
    \epsilon'=1000\epsilon,
  \end{equation*}
  and let
  \begin{equation*}
    \cC^{\leq\epsilon'} = \bigoplus_{e_X+C_X:|H_Xe_X|\leq\epsilon'm_X,e_Z+C_Z:|H_Ze_Z|\leq\epsilon'm_Z}X^{e_Z}Z^{e_X}\cC.
  \end{equation*}
  Therefore $\cC^{\leq\epsilon'}$ is the span of some of the eigenspaces of energy $\leq\epsilon'$, and contains all of the eigenspaces of energy $\leq\epsilon'/2$. Let $\Pi_{\cC^{\leq\epsilon'}}$ denote projection onto this subspace. Note that by definition, every $\ket{\psi'}\in\cC^{\leq\epsilon'}$ has $\supp(D_X^{\psi'})\subseteq G_X^{\epsilon'}$ and $\supp(D_Z^{\psi'})\subseteq G_Z^{\epsilon'}$.

We now reduce the task of proving Theorem~\ref{thm:spread} to the following proposition. Below, recall that we carry the definitions of $\Ham,\rho,\epsilon,\mu,\delta$ from Theorem~\ref{thm:spread}.

\begin{proposition}
  \label{prop:purespread}
  There exist sets $S_X^0,S_X^1,S_Z^0,S_Z^1\subseteq\bF_2^n$ such that $\dis(S_X^0,S_X^1),\dis(S_Z^0,S_Z^1)\geq\delta n$, and such that for every pure state $\ket{\psi'}\in\cC^{\leq\epsilon'}$, either
  \begin{equation}
    \label{eq:purespread}
    D_X^{\psi'}(S_X^0),D_X^{\psi'}(S_X^1)\geq\mu' \hspace{1em}\text{ or }\hspace{1em} D_Z^{\psi'}(S_Z^0),D_Z^{\psi'}(S_Z^1)\geq\mu',
  \end{equation}
  where $\mu'=1/4-1/4\sqrt{2}$.
\end{proposition}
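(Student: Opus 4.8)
\textbf{Proof proposal for Proposition~\ref{prop:purespread}.}

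The plan is to build the four sets explicitly from the code structure, then use the uncertainty principle (Lemma~\ref{lem:uncertainty}) together with the clustering property to force well-spreadness in at least one basis. Fix a nonzero logical operator: since $k>0$, there is some $w_X\in C_X\setminus C_Z^\perp$ and some $w_Z\in C_Z\setminus C_X^\perp$, and (replacing by minimum-weight coset representatives) we may assume $|w_X|_{C_Z^\perp}, |w_Z|_{C_X^\perp} \geq d$. The natural candidate is to take $S_X^0$ to be the set of strings $y$ that "decode" (in the $X$-coset sense) to the trivial logical class and $S_X^1$ the set decoding to the $w_X$ logical class, and symmetrically for $Z$. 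Concretely, for $y\in G_X^{\epsilon'}$ the clustering property says $|y|_{C_Z^\perp}$ is either $\leq c_1\epsilon' n$ or $\geq c_2 n$; I would declare $y\in S_X^0$ if $|y|_{C_Z^\perp}\leq c_1\epsilon' n$, and $y\in S_X^1$ if $|y - w_X|_{C_Z^\perp}\leq c_1\epsilon' n$ — i.e.\ $y$ is close to the nontrivial coset. For $y\notin G_X^{\epsilon'}$ the assignment is irrelevant, so one can put such $y$ in neither set. The distance bound $\dis(S_X^0,S_X^1)\geq\delta n = c_2 n$ then follows from the triangle inequality: if $y_0\in S_X^0$ and $y_1\in S_X^1$ were closer than $c_2 n - 2c_1\epsilon' n$, then moving each to its nearby $C_Z^\perp$-coset representative would put $w_X$ within distance $<c_2 n$ of $C_Z^\perp$, i.e.\ $|w_X|_{C_Z^\perp} < c_2 n = c_1' n$, contradicting $|w_X|_{C_Z^\perp}\geq d$; here one uses the smallness of $\epsilon$ in (\ref{eq:epsilon}) (the $c_2/(4c_1)$ and $d/(2c_1 n)$ terms) to absorb the error terms, so $\dis(S_X^0,S_X^1)\geq c_2 n$ comfortably. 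The $Z$-side sets are defined and analyzed identically with $H_Z,C_X^\perp,w_Z$ in place of $H_X,C_Z^\perp,w_X$.

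The heart of the argument is showing that every pure $\ket{\psi'}\in\cC^{\leq\epsilon'}$ lands constant mass in both $S_X^0,S_X^1$ or both $S_Z^0,S_Z^1$. Here I would introduce the logical Pauli observables $\bar X = X^{w_Z}$ and $\bar Z = Z^{w_X}$ (note: $\bar X$ flips the $Z$-logical value and acts diagonally in $X$-basis measurement outcomes, and vice versa). Because $w_X\notin C_Z^\perp$ and $w_Z\notin C_X^\perp$, these logicals anticommute on the codespace, and more importantly they anticommute as operators: $\bar X\bar Z = (-1)^{w_X\cdot w_Z}\bar Z\bar X$, and one checks $w_X\cdot w_Z$ is odd (this is the standard symplectic-pairing fact for a nonzero logical pair; if the chosen $w_X,w_Z$ happen to have even overlap one adjusts by a stabilizer, which does not change the logical classes). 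They square to $I$. Thus Lemma~\ref{lem:uncertainty} applies to $\rho' = \ket{\psi'}\bra{\psi'}$: at least one of $|\Tr(\bar X\rho')|\leq 1/2 + 1/(2\sqrt2)$ or $|\Tr(\bar Z\rho')|\leq 1/2+1/(2\sqrt2)$ holds. Suppose the bound on $\bar Z = Z^{w_X}$ holds. Now $\Tr(Z^{w_X}\rho') = \sum_y D_X^{\psi'}(y)(-1)^{w_X\cdot y}$ — wait, more carefully, $Z^{w_X}$ measured against the $X$-basis distribution contributes a sign $(-1)^{\langle w_X, y\rangle}$; but what I actually want is to relate the expectation of $\bar Z$ to the split of $D_X^{\psi'}$ between the decoding classes $S_X^0$ and $S_X^1$. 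Writing $p_0 = D_X^{\psi'}(S_X^0)$, $p_1 = D_X^{\psi'}(S_X^1)$, and noting $\supp(D_X^{\psi'})\subseteq G_X^{\epsilon'}$ so by clustering every point of the support is in $S_X^0\sqcup S_X^1$ (the "$\geq c_2 n$" alternative applied to $|y|_{C_Z^\perp}$, combined with the fact that $y$ must be close to \emph{some} coset of $C_Z^\perp$ among $\{0, w_X\}+C_Z^\perp$ — this needs the stronger statement that $y$ is close to one of these two cosets, which I address below), we get $p_0 + p_1 = 1$. On $S_X^0$ the measured logical value $\bar Z$ reads (up to sign) $+1$ and on $S_X^1$ it reads $-1$, up to an error bounded by the clustering radius; so $|\Tr(\bar Z\rho')|$ is close to $|p_0 - p_1|$, hence $|p_0 - p_1|\leq 1/2 + 1/(2\sqrt2) + o(1)$, which forces $\min(p_0,p_1)\geq (1/4 - 1/(4\sqrt2)) - o(1)\geq \mu'$ after absorbing the error into the slack. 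Symmetrically, if the $\bar X$ bound holds we get the $Z$-basis conclusion.

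The step I expect to be the main obstacle — and the place the paper's novelty over \cite{anshu_nlts_2023} presumably lies — is justifying that the support of $D_X^{\psi'}$ (for $\ket{\psi'}\in\cC^{\leq\epsilon'}$) is covered by $S_X^0\cup S_X^1$, i.e.\ that every low-$X$-syndrome string is $c_1\epsilon' n$-close to \emph{either} $C_Z^\perp$ \emph{or} $w_X + C_Z^\perp$, rather than possibly close to some third coset $w_X' + C_Z^\perp$ with $w_X'$ a different nontrivial logical. When $k$ is large there are exponentially many logical classes and this is false; the resolution for $k\geq 1$ must use that $\ket{\psi'}$ lies in $\cC^{\leq\epsilon'} = \bigoplus X^{e_Z}Z^{e_X}\cC$ and hence is supported, in $X$-basis, on $\bigcup_{e_X: |H_X e_X|\leq \epsilon' m_X}(e_X + C_X)$ — a union of cosets of $C_X$, not of $C_Z^\perp$. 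The clustering property controls $|e_X|_{C_Z^\perp}$, and since $C_Z^\perp\subseteq C_X$ with $C_X/C_Z^\perp$ spanned by the logicals, one needs: low-weight-syndrome $\Rightarrow$ the $C_X$-coset is "trivial" in the sense that $e_X$ is close to $C_Z^\perp$ OR to exactly one designated nontrivial representative. For $k\geq 1$ this is not automatic, so I suspect the actual argument restricts further: it picks out, within each relevant $C_X$-coset, a canonical class, OR — more likely — it uses the "decoding procedure" alluded to before Proposition~\ref{prop:purespread}, which maps each low-syndrome $y$ to a nearby codeword-coset-representative, and then the two sets $S_X^0, S_X^1$ are defined via whether this decoder output, reduced mod $C_Z^\perp$, equals $0$ or $w_X$ — with the twist (flagged in the paper's informal discussion) that far-apart $y$'s may decode to the same representative, so one cannot naively argue distance $\geq\delta n$ from the \emph{decoder}, but rather from the clustering radii directly as in the first paragraph. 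I would structure the proof to: (i) define $S_X^0,S_X^1$ purely by the two clustering "near" conditions relative to $0$ and $w_X$; (ii) prove $\dis\geq\delta n$ via triangle inequality and (\ref{eq:epsilon}); (iii) prove coverage of $\supp(D_X^{\psi'})$ — here invoking whatever coset-counting / decoder fact makes the "only two classes matter" claim valid, which is the genuinely load-bearing lemma and the one I'd develop most carefully; (iv) conclude via Lemma~\ref{lem:uncertainty} applied to the anticommuting logicals $Z^{w_X}, X^{w_Z}$, reading off the $\pm1$ logical values on $S^0$ vs $S^1$ up to the $o(1)$ clustering error, and solving $|p_0-p_1|\leq 1/2+1/(2\sqrt2)$ for $\min(p_0,p_1)\geq\mu'$.
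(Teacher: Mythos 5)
Your high-level plan is the right one — pick anticommuting logical operators $\bar X = X^{\bar c_Z}$, $\bar Z = Z^{\bar c_X}$ (your symplectic-pairing fix for the odd overlap is exactly how the paper gets $\bar c_X\cdot\bar c_Z=1$), define two $\delta n$-separated sets per basis, and apply the uncertainty principle. You also correctly flagged the coverage problem with your candidate sets $S_X^b$ defined by proximity to the two cosets $C_Z^\perp$ and $w_X + C_Z^\perp$: a low-$X$-syndrome $y$ can be close to any coset $w_X' + C_Z^\perp$, so $S_X^0\cup S_X^1$ need not cover $\supp(D_X^{\psi'})$, and the paper's fix is indeed the decoder-based one you guessed at. Concretely, the paper partitions $G_Z^{\epsilon'}$ into clusters $\clustZ{y}$, groups clusters into translation classes under $C_Z/C_X^\perp$, fixes one representative cluster per class and one representative string per syndrome inside it, and defines $e_Z(H_Zy)$ as that representative; then $S_Z^b := \{y\in G_Z^{\epsilon'}: \bar c_X\cdot(y+e_Z(H_Zy))=b\}$. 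Lemma~\ref{lem:decclust} shows $y\mapsto y+e_Z(H_Zy)$ sends an entire cluster into a single $C_X^\perp$-coset, so $S_Z^0,S_Z^1$ are unions of clusters (giving both coverage of all of $G_Z^{\epsilon'}$ and the $\geq c_2 n$ separation), with no dependence on a choice of two distinguished cosets.

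There is a second gap you did not flag, and it is fatal to the proof as you sketched it. You apply Lemma~\ref{lem:uncertainty} directly to $\rho'=\ket{\psi'}\bra{\psi'}$ and then claim "on $S_X^0$ the measured logical value reads $+1$ and on $S_X^1$ it reads $-1$, up to an error bounded by the clustering radius." That is false: for $y=z+e$ with $z\in C_Z^\perp$ and $|e|\leq c_1\epsilon'n$, the eigenvalue of $\bar X = X^{w_Z}$ on $\ket{y}$ (in the $X$ basis) is $(-1)^{w_Z\cdot y}=(-1)^{w_Z\cdot e}$, and this sign depends on which low-weight $e$ occurred — it does not converge to $+1$ as $\epsilon'\to 0$. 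So $\Tr(\bar X\rho')$ is not close to $|p_0-p_1|$ and the "$o(1)$ clustering error" you absorb is actually $\Theta(1)$. The paper sidesteps this by applying the uncertainty principle to the \emph{decoded} state $\Dec^1(\psi')$ rather than to $\psi'$. After decoding, the $Z$-measurement outcome is exactly $y+e_Z(H_Zy)$, so $\Tr(\bar Z\,\Dec_Z^1(\psi')) = \sum_y|\psi'_y|^2(-1)^{\bar c_X\cdot(y+e_Z(H_Zy))} = D_Z^{\psi'}(S_Z^0)-D_Z^{\psi'}(S_Z^1)$ is an exact identity (and one uses that $\bar Z$ commutes with $\Dec_X$ to reduce $\Dec^1$ to $\Dec_Z^1$). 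Combined with $D_Z^{\psi'}(S_Z^0)+D_Z^{\psi'}(S_Z^1)=1$, which holds because $\supp(D_Z^{\psi'})\subseteq G_Z^{\epsilon'}=S_Z^0\sqcup S_Z^1$, this gives $\min(p_0,p_1)\geq\mu'$ with no error term. Without applying the decoder as a channel before invoking the uncertainty principle, the argument does not close.
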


Below, we first prove Theorem~\ref{thm:spread} assuming Proposition~\ref{prop:purespread}; this proof uses relatively standard techniques, though it is slightly tedious. We will subsequently prove the proposition, which contains the key ideas for our result.

\begin{proof}[Proof of Theorem~\ref{thm:spread}]
  Fix any mixed state $\rho$ with $\Tr(\Ham\rho)\leq\epsilon$. Our goal is to use Proposition~\ref{prop:purespread} to show that either
  \begin{equation*}
    D_X^\rho(S_X^0),D_X^\rho(S_X^1)\geq\mu \hspace{1em}\text{ or }\hspace{1em} D_Z^\rho(S_Z^0),D_Z^\rho(S_Z^1)\geq\mu.
  \end{equation*}
  For this purpose, we first decompose $\rho=\sum_\psi p_\psi\ket{\psi}\bra{\psi}$ into a classical ensemble of pure states $\psi$. Then
  \begin{align*}
    \epsilon
    &\geq \Tr(\Ham\rho) \\
    &= \sum_\psi p_\psi\bra{\psi}\Ham\ket{\psi} \\
    &\geq \sum_\psi p_\psi\bra{\psi}\frac{\epsilon'}{2}(I-\Pi_{\cC^{\leq\epsilon'}})\ket{\psi},
  \end{align*}
  where we have used the fact that $\cC^{\leq\epsilon'}$ contains all eigenspaces of $\Ham$ of eigenvalue $\leq\epsilon'/2$. Therefore
  \begin{align}
    \label{eq:psiclosepsip}
    \sum_\psi p_\psi\bra{\psi}\Pi_{\cC^{\leq\epsilon'}}\ket{\psi}
    &= 1-\sum_\psi p_\psi\bra{\psi}(I-\Pi_{\cC^{\leq\epsilon'}})\ket{\psi} \geq 1-\frac{2\epsilon}{\epsilon'}.
  \end{align}
  Proposition~\ref{prop:purespread} implies that~(\ref{eq:purespread}) holds for every $\ket{\psi'}=\Pi_{\cC^{\leq\epsilon'}}\ket{\psi}/\|\Pi_{\cC^{\leq\epsilon'}}\ket{\psi}\|$. Therefore if we let $\Psi_X$ denote the set of $\psi$ for which the first inequality in~(\ref{eq:purespread}) holds for $\ket{\psi'}$ and $\Psi_Z$ the set of $\psi$ for which the second inequality in~(\ref{eq:purespread}) holds for $\ket{\psi'}$, then either
  \begin{equation*}
    \sum_{\psi\in\Psi_X}p_\psi \geq \frac12 \hspace{1em}\text{ or }\hspace{1em} \sum_{\psi\in\Psi_Z}p_\psi \geq \frac12.
  \end{equation*}
  Assume the latter of the two inequalities above holds; the proof for the former is analogous. For $b=0,1$, let $\Pi_Z^b$ denote orthogonal projection onto $\spn\{\ket{y}:y\in S_Z^b\}$. Then by definition
  \begin{align*}
    D_Z^\rho(S_Z^b)
    &= \Tr(\Pi_Z^b\rho) \\
    &= \sum_\psi p_\psi\|\Pi_Z^b\ket{\psi}\|^2 \\
    &\geq \sum_\psi p_\psi(\|\Pi_Z^b\ket{\psi'}\|-\|\Pi_Z^b(\ket{\psi}-\ket{\psi'})\|)^2 \\
    &\geq \sum_\psi p_\psi(\|\Pi_Z^b\ket{\psi'}\|-\|\ket{\psi}-\ket{\psi'}\|)^2 \\
    &\geq \sum_\psi p_\psi\|\Pi_Z^b\ket{\psi'}\|^2 - 2\sum_\psi p_\psi\|\ket{\psi}-\ket{\psi'}\|^2 \\
  \end{align*}
  Now by defininition we can bound the first term on the RHS above by restricting to the sum over $\psi\in\Psi_Z$ and applying~(\ref{eq:purespread}) with $D_Z^{\psi'}(S_Z^b)=\|\Pi_Z^b\ket{\psi'}\|^2$ to obtain
  \begin{align*}
    \sum_\psi p_\psi\|\Pi_Z^b\ket{\psi'}\|^2
    &\geq \sum_{\psi\in\Psi_Z}p_\psi\|\Pi_Z^b\ket{\psi'}\|^2 \geq \frac12\cdot\mu'.
  \end{align*}
  Meanwhile, we can bound the second term by expanding $\|\ket{\psi}-\ket{\psi'}\|^2$ as in inner product to obtain
  \begin{align*}
    2\sum_\psi p_\psi\|\ket{\psi}-\ket{\psi'}\|^2
    &= 4\sum_\psi p_\psi(1-\bra{\psi'}\ket{\psi}) \\
    &= 4\sum_\psi p_\psi\left(1-\sqrt{\bra{\psi}\Pi_{\cC^{\leq\epsilon'}}\ket{\psi}}\right) \\
    &\leq 4\sum_\psi p_\psi\left(\frac{1-\bra{\psi}\Pi_{\cC^{\leq\epsilon'}}\ket{\psi}}{2}\right) \\
    &\leq 2\cdot\frac{2\epsilon}{\epsilon'} \\
    &= \frac{4\epsilon}{\epsilon'},
  \end{align*}
  where the first two equalities above apply the definition of $\ket{\psi'}=\Pi_{\cC^{\leq\epsilon'}}\ket{\psi}/\|\Pi_{\cC^{\leq\epsilon'}}\ket{\psi}\|$, and the second inequality holds by~(\ref{eq:psiclosepsip}). Thus we have shown that
  \begin{align*}
    D_Z^\rho(S_Z^b)
    &\geq \frac{\mu'}{2}-\frac{4\epsilon}{\epsilon'} = \frac18-\frac{1}{8\sqrt{2}}-\frac{4}{1000} > \mu,
  \end{align*}
  as desired.
\end{proof}

\subsubsection{Decoding Clusters of Small-Syndrome States}
To prove Proposition~\ref{prop:purespread}, we begin by using the clustering property of $\cC$ to partition $G_X^{\epsilon'}$ and $G_Z^{\epsilon'}$ into clusters, for which we will subsequently choose representative elements that we will use to define a decoding map for states $\ket{\psi'}$ with small syndrome. Below, we first analyze the clustering of $G_Z^{\epsilon'}$; the case of $G_X^{\epsilon'}$ will be exactly analogous.

We consider clusters defined similarly as in \cite{anshu_nlts_2023}. However, to obtain our improvement over \cite{anshu_nlts_2023}, we will leverage an additional linear structure in the set of clusters (Property~2 in Lemma~\ref{it:clusttrans} below), which ultimately allows us to use the uncertainty principle in Lemma~\ref{lem:uncertainty}.

Given $y\in G_Z^{\epsilon'}$, define a cluster $\clustZ{y}\subseteq G_Z^{\epsilon'}$ by
\begin{equation*}
  \clustZ{y} = \{y'\in G_Z^{\epsilon'}:|y+y'|_{C_X^\perp}\leq 2c_1\epsilon'n\}.
\end{equation*}
The following lemma follows directly from our definitions.

\begin{lemma}
  \label{lem:clust}
  The clusters $\clustZ{y}$ for $y\in G_Z^{\epsilon'}$ form a partition of $G_Z^{\epsilon'}$ satisfying the following properties:
  \begin{enumerate}
  \item\label{it:clustfar} Every pair of distinct clusters $\clustZ{y}\neq \clustZ{y'}$ satisfies $\dis(\clustZ{y},\clustZ{y'})\geq c_2n$.
  \item\label{it:clusttrans} For $c\in C_Z$, then $\clustZ{y+c}=\clustZ{y}+c$, and in particular $\clustZ{y+c}=\clustZ{y}$ if and only if $c\in C_X^\perp$.
  \end{enumerate}
\end{lemma}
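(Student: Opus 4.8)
The plan is to verify both properties directly from the definition $\clustZ{y} = \{y'\in G_Z^{\epsilon'}:|y+y'|_{C_X^\perp}\leq 2c_1\epsilon'n\}$, using the clustering hypothesis and elementary properties of the (quotient) Hamming distance $|\cdot|_{C_X^\perp}$.

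\emph{Partition and Property~\ref{it:clustfar}.} First I would observe that $|\cdot|_{C_X^\perp}$ behaves like a metric on the quotient $\bF_2^n/C_X^\perp$: it is symmetric and satisfies the triangle inequality $|a+c|_{C_X^\perp}\leq |a+b|_{C_X^\perp}+|b+c|_{C_X^\perp}$. The key input is the clustering property applied to $G_Z^{\epsilon'}$: for $\epsilon'<\epsilon_0$ (which holds since $\epsilon' = 1000\epsilon < \epsilon_0$ by~(\ref{eq:epsilon})), every $y\in G_Z^{\epsilon'}$ has $|y|_{C_X^\perp}\leq c_1\epsilon' n$ or $|y|_{C_X^\perp}\geq c_2 n$; more relevantly, for two elements $y,y'\in G_Z^{\epsilon'}$, since $y+y'\in G_Z^{2\epsilon'}\subseteq G_Z^{\epsilon''}$ for a suitable $\epsilon''$ (one needs $2\epsilon'<\epsilon_0$, which again follows from~(\ref{eq:epsilon})), clustering forces $|y+y'|_{C_X^\perp}$ to lie in $[0,c_1(2\epsilon')n]$ or $[c_2n,\infty)$ — a ``gap.'' This dichotomy is exactly what makes $\clustZ{\cdot}$ an equivalence relation: reflexivity is immediate ($|y+y|_{C_X^\perp}=0$), symmetry follows from symmetry of $|\cdot|_{C_X^\perp}$, and transitivity follows from the triangle inequality together with the gap — if $|y+y'|_{C_X^\perp}\leq 2c_1\epsilon'n$ and $|y'+y''|_{C_X^\perp}\leq 2c_1\epsilon'n$ then $|y+y''|_{C_X^\perp}\leq 4c_1\epsilon'n$, which by~(\ref{eq:epsilon}) (giving $4c_1\epsilon' n = 4000c_1\epsilon n < c_2 n$) is below the gap, hence $\leq 2c_1\epsilon'n$. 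So the clusters partition $G_Z^{\epsilon'}$. For Property~\ref{it:clustfar}, if $\clustZ{y}\neq\clustZ{y'}$ (equivalently $|y+y'|_{C_X^\perp}>2c_1\epsilon'n$, hence $\geq c_2n$ by the gap) then for any $w\in\clustZ{y}$, $w'\in\clustZ{y'}$ the triangle inequality gives $|w+w'|_{C_X^\perp}\geq|y+y'|_{C_X^\perp}-|y+w|_{C_X^\perp}-|y'+w'|_{C_X^\perp}\geq c_2n-4c_1\epsilon'n\geq (c_2 - o(1))n$, and since $|w-w'|\geq|w+w'|_{C_X^\perp}$, this gives $\dis(\clustZ{y},\clustZ{y'})\geq c_2 n$ — here I would need to check whether the constant is literally $c_2$ or $c_2$ up to the subtracted lower-order term; if the latter, one can either absorb it or note the hypotheses were chosen with enough slack (the factor $1/1000$ in~(\ref{eq:epsilon})) that a clean $c_2n$ follows, possibly after slightly tightening the cluster radius or re-examining which $\epsilon_0$-threshold is invoked.

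\emph{Property~\ref{it:clusttrans}.} For $c\in C_Z$, I would first check $G_Z^{\epsilon'}+c = G_Z^{\epsilon'}$: since $c\in C_Z = \ker H_Z$, we have $H_Z(y'+c) = H_Zy'$, so $y'\in G_Z^{\epsilon'}\iff y'+c\in G_Z^{\epsilon'}$. Then $y''\in\clustZ{y+c}$ iff $y''\in G_Z^{\epsilon'}$ and $|(y+c)+y''|_{C_X^\perp}\leq 2c_1\epsilon'n$; writing $y'' = w+c$ with $w\in G_Z^{\epsilon'}$, this reads $|y+w|_{C_X^\perp}\leq 2c_1\epsilon'n$, i.e.\ $w\in\clustZ{y}$, i.e.\ $y''\in\clustZ{y}+c$. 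This proves $\clustZ{y+c}=\clustZ{y}+c$. For the ``if and only if'': $\clustZ{y+c}=\clustZ{y}$ iff $y+c\in\clustZ{y}$ (using that the clusters partition, so two clusters coincide iff they share a point, and $y\in\clustZ{y}$, $y+c\in\clustZ{y+c}$ since $y+c\in G_Z^{\epsilon'}$), iff $|c|_{C_X^\perp}\leq 2c_1\epsilon'n$; by the clustering gap applied to $c\in C_Z\subseteq G_Z^0\subseteq G_Z^{\epsilon'}$, this forces $|c|_{C_X^\perp}=0$ (the only possibility below the gap that is also achieved by an element $c$ with $|c|_{C_X^\perp}\leq 2c_1\epsilon'n<c_2n$ — and in fact for $c\in C_Z$ one can argue $|c|_{C_X^\perp}$ is either $0$ or $\geq d$, the code distance, since $c+C_X^\perp$ either is trivial or contains a nonzero element of $C_Z\setminus C_X^\perp$), i.e.\ $c\in C_X^\perp$.

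\emph{Main obstacle.} The bookkeeping is routine; the one genuine subtlety is making sure every invocation of the clustering property respects the threshold $\epsilon<\epsilon_0$ — when I pass to sums $y+y'$ or to radius-$4c_1\epsilon'n$ balls I am implicitly looking at $G_Z^{2\epsilon'}$ or needing $4c_1\epsilon'n<c_2n$, and I must confirm each such inequality is guaranteed by~(\ref{eq:epsilon}) (which is why that bound carries the three-way minimum and the $1/1000$ factor). The other point requiring a line of care is that ``distinct clusters'' $\iff$ ``$|y+y'|_{C_X^\perp}$ large,'' which relies on the partition property established first; so the logical order is: metric properties of $|\cdot|_{C_X^\perp}$, then the gap dichotomy from clustering, then partition, then Properties 1 and 2.
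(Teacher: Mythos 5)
Your overall approach matches the paper's: you derive the partition from a ``gap dichotomy'' obtained by applying the clustering hypothesis to sums of pairs in $G_Z^{\epsilon'}$ (the paper shows $y'\in\clustZ{y}\Rightarrow\clustZ{y'}=\clustZ{y}$, which is your transitivity statement), and Property~\ref{it:clusttrans} is proved exactly as in the paper. The one place you diverge is Property~\ref{it:clustfar}, and it is exactly where you yourself flagged a loose end.

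For Property~\ref{it:clustfar} you apply the gap to $y+y'$ and then push the bound to $w+w'$ via the triangle inequality, obtaining only $|w+w'|_{C_X^\perp}\geq c_2n-4c_1\epsilon'n$. This is genuinely short of $c_2n$ and does not give the stated conclusion as written. The paper sidesteps this by applying the clustering dichotomy \emph{directly} to $w+w'$: since $w'\notin\clustZ{w}=\clustZ{y}$ (using the partition property already established) we have $|w+w'|_{C_X^\perp}>2c_1\epsilon'n$, and since $w+w'\in G_Z^{2\epsilon'}$ the dichotomy forces $|w+w'|_{C_X^\perp}\geq c_2n$, hence $|w-w'|\geq|w+w'|_{C_X^\perp}\geq c_2n$ — a clean $c_2n$ with no slack. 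Alternatively, your argument can be salvaged by applying the gap a second time: from $c_2n-4c_1\epsilon'n>2c_1\epsilon'n$ (which~(\ref{eq:epsilon}) guarantees) conclude $|w+w'|_{C_X^\perp}\geq c_2n$ after all. So the gap in your write-up is real but easily repaired; the paper's one-shot use of the dichotomy on $w+w'$ is the cleaner route and is worth adopting.
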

\begin{proof}
  We first show that the cluters form a parition of $G_Z^{\epsilon'}$. Fix some $y\in G_Z^{\epsilon'}$. Then for every $y'\in \clustZ{y}$, it follows that every $y''\in \clustZ{y'}$ has $|y''+y|_{C_X^\perp}\leq|y''+y'|_{C_X^\perp}+|y'+y|_{C_X^\perp}\leq 4c_1\epsilon'n$. But by assumption (see the statement of Theorem~\ref{thm:spread}) $2\epsilon'<\epsilon_0$ and $4c_1\epsilon'<c_2$, so because $y''+y\in G_Z^{2\epsilon'}$, the clustering property implies that $|y''+y|_{C_X^\perp}\leq 2c_1\epsilon'n$, so that $y''\in \clustZ{y}$. Thus we have shown that every $y'\in \clustZ{y}$ has $\clustZ{y'}\subseteq \clustZ{y}$, and by the same reasoning $\clustZ{y}\subseteq \clustZ{y'}$, so $\clustZ{y'}=\clustZ{y}$. Thus every pair of clusters is either equal or disjoint, so the clusters $\clustZ{y}$ form a partition of $G_Z^{\epsilon'}$.

   Now every pair of distinct clusters $\clustZ{y}\neq \clustZ{y'}$ satisfies $\dis(\clustZ{y},\clustZ{y'})\geq c_2n$, as if this distance was $<c_2n$, the clustering property would imply that it is $\leq 2c_1\epsilon'n$, which then implies that $\clustZ{y}=\clustZ{y'}$.

   It remains to show Property~\ref{it:clusttrans} in the lemma statement. For every $y\in G_Z^{\epsilon'}$ and $c\in C_Z$, by definition $H_Z(y+c)=H_Zy$ and thus $\clustZ{y+c}$ is also a cluster in $G_Z^{\epsilon'}$, which is isomorphic to $\clustZ{y}$ under the isomorphism $y'\mapsto y'+c$; that is, $\clustZ{y+c}=\clustZ{y}+c$. If $c\in C_X^\perp$, then $|y+(y+c)|_{C_X^\perp}=0$ so that $y+c\in\clustZ{y}$ and therefore $\clustZ{y+c}=\clustZ{y}$. Meanwhile, if $c\in C_Z\setminus C_X^\perp$, then $\clustZ{y+c}\neq\clustZ{y}$, as otherwise it would follow that $|y+(y+c)|_{C_X^\perp}=|c|_{C_X^\perp}\leq 2c_1\epsilon' n$. But by assumption (see Theorem~\ref{thm:spread}) $\cC$ has distance $d>2c_1\epsilon'n$, so $|c|_{C_X^\perp}>2c_1\epsilon'n$.
\end{proof}

Lemma~\ref{lem:clust} implies that $\clustZ{y}$ has distinct translates by all $c+C_X^\perp\in C_Z/C_X^\perp$, where all representatives of a given coset of $C_X^\perp$ yield the same translate. We denote the collection of these translates for a given cluster $\clustZ{y}$ by
\begin{equation*}
  \transZ{\clustZ{y}} = \{\clustZ{y+c}:c\in C_Z\}.
\end{equation*}
For each such collection $\transZ{}=\transZ{\clustZ{y}}$ of clusters, we fix an arbitrary representative $\clustZ{}(\transZ{})\in\transZ{}$.

Now for a given syndrome $s=H_Zy\in\bF_2^{m_Z}$ of some $y\in G_Z^{\epsilon'}$, so that $|s|\leq\epsilon'm_Z$, then the set of bit strings with syndrome $s$ is precisely the coset $y+C_Z$. By Lemma~\ref{lem:clust}, $(y+C_Z)\cap\clustZ{}(\transZ{\clustZ{y}})$ is a coset of $C_X^\perp$, and is in particular therefore nonempty. Thus we may associate to $s$ an arbitrary representative $e_Z(s)\in(y+C_Z)\cap\clustZ{}(\transZ{\clustZ{y}})$.

We now let $\Dec_Z$ be a unitary acting on $n+m_Z$ qubits with the following ``decoding'' property: for every $y\in G_Z^{\epsilon'}$, it holds that
\begin{equation*}
  \Dec_Z\ket{y}\otimes\ket{0}=\ket{y+e_Z(H_Zy)}\otimes\ket{H_Zy}.
\end{equation*}
We let $\Dec_Z^1$ be the channel acting on $n$ qubits that simply applies $\Dec_Z$ and traces out the syndrome register. Formally, for every $y\in G_Z^{\epsilon'}$, then
\begin{equation*}
  \Dec_Z^1(\ket{y}\bra{y})=\ket{y+e_Z(H_Zy)}\bra{y+e_Z(H_Zy)}.
\end{equation*}
Equivalently, $\Dec_Z^1$ is the channel that performs a $Z$-syndrome measurement on its input $\ket{y}$, and then adds $e_Z(s)$ to the post-measurement state, where $s=H_Zy$ was the measurement outcome.

The channel $\Dec_Z^1$ performs a weak form of decoding in the following sense. Recall that an ordinary decoder in the $Z$ basis for a CSS code maps all bit strings near a given codeword $c\in C_Z$ to some element of the coset $c+C_X^\perp$. In contrast, as shown below, the key property of our decoding channel $\Dec_Z^1$ is that it sends all bit strings in a given cluster $\clustZ{y}$ to elements of the same coset $c+C_X^\perp\in C_Z/C_X^\perp$, though this coset may be far away from the cluster $\clustZ{y}$.

\begin{lemma}
  \label{lem:decclust}
  For every cluster $\clustZ{y}$ and every pair of elements $y,y'\in\clustZ{y}$, then $y'+e_Z(H_Zy')\in y+e_Z(H_Zy)+C_X^\perp$.
\end{lemma}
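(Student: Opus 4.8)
The plan is to unwind the definitions of the cluster $\clustZ{y}$, of the collection of translates $\transZ{\clustZ{y}}$, and of the representative map $e_Z$, and check that everything is consistent with the desired coset membership. First I would observe that if $y, y' \in \clustZ{y}$ then $\clustZ{y} = \clustZ{y'}$ by the partition property in Lemma~\ref{lem:clust}, so in particular $\transZ{\clustZ{y}} = \transZ{\clustZ{y'}}$, and therefore the chosen representative cluster $\clustZ{}(\transZ{\clustZ{y}}) = \clustZ{}(\transZ{\clustZ{y'}})$ is literally the same set; call it $\clustZ{}^*$.

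Next I would recall how $e_Z$ is defined: for a syndrome $s$, writing $w$ for any preimage with $H_Z w = s$, we set $e_Z(s)$ to be an arbitrary chosen element of $(w + C_Z) \cap \clustZ{}(\transZ{\clustZ{w}})$. Applying this with $w = y$ gives $e_Z(H_Z y) \in (y + C_Z) \cap \clustZ{}^*$, and with $w = y'$ gives $e_Z(H_Z y') \in (y' + C_Z) \cap \clustZ{}^*$ — here using that $\clustZ{w}$ for $w \in \clustZ{y}$ is just $\clustZ{y}$, so the same $\clustZ{}^*$ appears in both. Hence both $y + e_Z(H_Z y)$ and $y' + e_Z(H_Z y')$ lie in $\clustZ{}^*$, since $e_Z(H_Z y) \in y + C_Z$ forces $y + e_Z(H_Z y) \in e_Z(H_Z y) + C_Z \subseteq \bF_2^n$, and more to the point $e_Z(H_Z y)$ is already a representative chosen from $\clustZ{}^* \cap (y + C_Z)$, so $e_Z(H_Z y) \in \clustZ{}^*$; but wait — I actually want $y + e_Z(H_Z y)$, not $e_Z(H_Z y)$, to be the relevant object. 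Let me restructure: the cleanest route is to show directly that $e_Z(H_Z y)$ and $e_Z(H_Z y')$ both lie in $\clustZ{}^*$ and differ by an element of $C_X^\perp$; then adding $y$ and using that $y' - y \in \clustZ{y} - \clustZ{y} $ must also lie in $C_X^\perp$ (since $y, y'$ are in the same cluster and the cluster is a single coset of $C_X^\perp$, by Lemma~\ref{lem:clust}) closes the argument.

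So the key intermediate fact is: a cluster $\clustZ{y}$ intersected with a fixed syndrome coset $w + C_Z$ is a single coset of $C_X^\perp$ — this is exactly what the discussion right before the definition of $e_Z$ asserts (via Lemma~\ref{lem:clust}, Property~\ref{it:clusttrans}), namely $(w + C_Z) \cap \clustZ{}(\transZ{\clustZ{w}})$ is a coset of $C_X^\perp$. Since $y, y' \in \clustZ{y} \cap (y + C_Z)$ — note $y' \in y + C_Z$ because $y, y'$ are both in $G_Z^{\epsilon'}$ in the same cluster, hence have... — actually I need $H_Z y = H_Z y'$, which is \emph{not} automatic from being in the same cluster. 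This is the subtle point and the main obstacle: clusters can contain elements of different syndromes. The resolution is that $e_Z$ depends only on the syndrome, and the claim to prove is phrased with $H_Z y$ and $H_Z y'$ as separate arguments precisely for this reason. So I would argue: $e_Z(H_Z y)$ and $y$ lie in the same cluster $\clustZ{y}$ (since $e_Z(H_Z y) \in \clustZ{}^* = \clustZ{}(\transZ{\clustZ{y}})$ and... hmm, $\clustZ{}^*$ is a translate of $\clustZ{y}$, not necessarily $\clustZ{y}$ itself).

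Let me commit to the correct structure. Set $Y' := \clustZ{}^* = \clustZ{}(\transZ{\clustZ{y}}) = \clustZ{}(\transZ{\clustZ{y'}})$. Then $e_Z(H_Z y) \in (y + C_Z) \cap Y'$ and $e_Z(H_Z y') \in (y' + C_Z) \cap Y'$. Now $y + C_Z = y' + C_Z$? No — only if $y - y' \in C_Z$. But $y - y' \in \clustZ{y} - \clustZ{y}$; by Lemma~\ref{lem:clust} Property~\ref{it:clusttrans}, the cluster is a single coset of $C_X^\perp \subseteq C_Z$, so indeed $y - y' \in C_X^\perp \subseteq C_Z$, giving $y + C_Z = y' + C_Z$ and moreover $H_Z y = H_Z y'$ (so the two syndromes coincide after all, since $C_X^\perp \subseteq C_Z = \ker H_Z$). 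Therefore $e_Z(H_Z y), e_Z(H_Z y') \in (y + C_Z) \cap Y'$, which by the pre-definition discussion is a single coset of $C_X^\perp$; hence $e_Z(H_Z y) - e_Z(H_Z y') \in C_X^\perp$. Combined with $y - y' \in C_X^\perp$, we get $(y + e_Z(H_Z y)) - (y' + e_Z(H_Z y')) \in C_X^\perp$, i.e. $y' + e_Z(H_Z y') \in y + e_Z(H_Z y) + C_X^\perp$, which is the claim. The only genuinely load-bearing inputs are Lemma~\ref{lem:clust} (Properties 1 and 2) and the ``single coset of $C_X^\perp$'' observation immediately preceding the definition of $e_Z$; no new estimates are needed.
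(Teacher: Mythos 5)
Your proposal contains a genuine gap at its crux. You assert that a cluster $\clustZ{y}$ ``is a single coset of $C_X^\perp$'' by appeal to Lemma~\ref{lem:clust}, Property~\ref{it:clusttrans}, and use this to conclude $y + y' \in C_X^\perp$ and hence $H_Z y = H_Z y'$. But this is false: by definition, $\clustZ{y} = \{y'\in G_Z^{\epsilon'} : |y+y'|_{C_X^\perp}\leq 2c_1\epsilon'n\}$, which allows $|y+y'|_{C_X^\perp}$ to be any value up to $2c_1\epsilon'n$, not just zero. The cluster collects together vectors that are only \emph{approximately} in the same $C_X^\perp$-coset, and in particular it does contain elements of different syndromes (you yourself identify this as ``the subtle point and the main obstacle'' earlier in your argument, but then talk yourself out of it). Property~\ref{it:clusttrans} only says $\clustZ{y+c} = \clustZ{y}$ iff $c\in C_X^\perp$ for $c\in C_Z$; it does not say that any two elements of a single cluster differ by an element of $C_X^\perp$. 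Once $H_Z y \neq H_Z y'$, the sets $(y+C_Z)\cap\clustZ{}^*$ and $(y'+C_Z)\cap\clustZ{}^*$ are disjoint cosets of $C_X^\perp$, so your argument that $e_Z(H_Zy)$ and $e_Z(H_Zy')$ lie in a common $C_X^\perp$-coset does not go through.

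The paper's proof routes around exactly this issue: rather than comparing $e_Z(H_Zy)$ and $e_Z(H_Zy')$ directly, it compares $e_Z(H_Zy')$ with the \emph{shifted} vector $y+y'+e_Z(H_Zy)$. Both lie in $y'+C_Z$ (the latter because $y+e_Z(H_Zy)\in C_Z$), and both lie in $\clustZ{}^*=\clustZ{}(\transZ{\clustZ{y}})$: for the shifted vector this uses that $|y+y'|_{C_X^\perp}\leq 2c_1\epsilon'n$ (from $y'\in\clustZ{y}$) and that $H_Z(y+y'+e_Z(H_Zy)) = H_Zy'$ has small weight (from $y'\in G_Z^{\epsilon'}$) — neither of which requires $H_Zy = H_Zy'$. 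Then the ``single coset'' observation preceding the lemma applies to the \emph{intersection} $(y'+C_Z)\cap\clustZ{}^*$, yielding $e_Z(H_Zy') - (y+y'+e_Z(H_Zy))\in C_X^\perp$, which rearranges to the claim. You correctly identified all the ingredients (the partition structure, the coset structure of the intersection, the role of $e_Z$), but the final assembly relies on a syndrome equality that the definitions do not grant you.
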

\begin{proof}
  By definition $e_Z(H_Zy)\in y+C_Z$ and $e_Z(H_Zy')\in y'+C_Z$ both lie in the cluster $\clustZ{}(\transZ{\clustZ{y}})$. Therefore by Lemma~\ref{lem:clust}, both $e_Z(H_Zy')$ and $y'+(y+e_Z(H_Zy))$ belong to both $y'+C_Z$ and $\clustZ{}(\transZ{\clustZ{y}})$, and thus $e_Z(H_Zy')\in y'+(y+e_Z(H_Zy))+C_X^\perp$, or equivalently, $y'+e_Z(H_Zy')\in y+e_Z(H_Zy)+C_X^\perp$.
\end{proof}

To conclude this section, we extend all of the clustering terminology and results above for $G_Z^{\epsilon'}$ to their analogues for $G_X^{\epsilon'}$. Specifically, we similarly obtain a partition of $G_X^{\epsilon'}$ into clusters $\clustX{y}$ for $y\in G_X^{\epsilon'}$. We again conclude that each cluster $\clustX{y}$ in $G_X^{\epsilon'}$ has a set $\transX{\clustX{y}}$ of distinct translates by all $c+C_Z^\perp\in C_X/C_Z^\perp$. We fix arbitrary representative clusters $\clustX{}(\transX{})\in\transX{}$, and assign to each syndrome $s=H_Xy$ for $y\in G_X^{\epsilon'}$ an element $e_X(s)\in(y+C_X)\cap\clustX{}(\transX{\clustX{y}})$. We then obtain an $X$ decoding unitary $\Dec_X$ and channel $\Dec_X^1$, which are defined analogously to their $Z$ analogues, except the syndrome measurement and error correction steps are performed in the $X$ basis instead of the $Z$ basis. Observe that $\Dec_X^1$ and $\Dec_Z^1$ commute, so we can define $\Dec^1=\Dec_X^1\Dec_Z^1=\Dec_Z^1\Dec_X^1$.

\subsubsection{Applying Decoding to Prove Well-Spreadness}
We now complete the proof of Proposition~\ref{prop:purespread}, which as shown above in turn implies Theorem~\ref{thm:spread}, by applying the uncertainty principle in Lemma~\ref{lem:uncertainty} to the decodings of small-syndrome states for $\Ham$.

\begin{proof}[Proof of Proposition~\ref{prop:purespread}]
  Because $\cC$ has dimension $k>0$, the space $C_Z/C_X^\perp=(C_X/C_Z^\perp)^\perp$ is nonzero, so there exist $\bar{c}_X\in C_X\setminus C_Z^\perp$, $\bar{c}_Z\in C_Z\setminus C_X^\perp$ such that $\bar{c}_X\cdot \bar{c}_Z=1$. Fix an arbitrary pair of such elements $\bar{c}_X,\bar{c}_Z$, so that $\bar{X}:=X^{\bar{c}_Z}$ and $\bar{Z}:=Z^{\bar{c}_X}$ are anticommuting logical operators for the code $\cC$.

  For $b=0,1$, define
  \begin{align*}
    S_X^b &= \{y\in G_X^{\epsilon'}:\bar{c}_Z\cdot(y+e_X(H_Xy))=b\} \\
    S_Z^b &= \{y\in G_Z^{\epsilon'}:\bar{c}_X\cdot(y+e_Z(H_Zy))=b\}.
  \end{align*}
  Then by Lemma~\ref{lem:decclust}, for a given cluster $\clustZ{y}$ in $G_Z^{\epsilon'}$, all $y'\in\clustZ{y}$ have $y'+e_Z(H_Zy')$ lying in the same coset $y+e_Z(H_Zy)+C_X^\perp$, and thus all $y'\in\clustZ{y}$ have the same value of $\bar{c}_X\cdot(y'+e_Z(H_Zy'))=\bar{c}_X\cdot(y+e_Z(H_Zy))$. Therefore all $y'\in\clustZ{y}$ lie in the same set $S_Z^b$, where $b=\bar{c}_X\cdot(y+e_Z(H_Zy))$. It follows from Lemma~\ref{lem:clust} that $\dis(S_Z^0,S_Z^1)\geq c_2n=\delta n$. Analogous reasoning implies that $\dis(S_X^0,S_X^1)\geq c_2n=\delta n$.

  It remains to be shown that~(\ref{eq:purespread}) holds for every $\ket{\psi'}\in\cC^{\epsilon'}$. By Lemma~\ref{lem:uncertainty}, either $|\Tr(\bar{X}\Dec^1(\psi'))|\leq 1/2+1/2\sqrt{2}$ or $|\Tr(\bar{Z}\Dec^1(\psi'))|\leq 1/2+1/2\sqrt{2}$. Assume the latter; the proof for the former is analogous. Now because $\bar{Z}$ by definition commutes with $\Dec_X$, the distribution from measuring $\bar{Z}$ on $\Dec^1(\psi')=\Dec_X^1\Dec_Z^1(\psi')$, or equivalently on $\Dec_X(\Dec_Z^1(\psi'\otimes\ket{0}\bra{0}))\Dec_X^\dagger$, is the same as the distribution from measuring $\bar{Z}$ on $\Dec_Z^1(\psi')$. Therefore $|\Tr(\bar{Z}\Dec_Z^1(\psi'))|\leq 1/2+1/2\sqrt{2}$. But by definition if we expand $\ket{\psi'}=\sum_{y\in G_Z^{\epsilon'}}\psi'_y\ket{y}$, then it follows that
  \begin{align*}
    \frac12+\frac{1}{2\sqrt{2}}
    &\geq \Tr(\bar{Z}\Dec_Z^1(\psi')) \\
    &= \Tr((\bar{Z}\otimes I)\Dec_Z(\ket{\psi'}\bra{\psi'}\otimes\ket{0}\bra{0})\Dec_Z^\dagger) \\
    &= \left(\bra{\psi'}\otimes\bra{0}\Dec_Z^\dagger\right)\left((\bar{Z}\otimes I)\Dec_Z\ket{\psi'}\otimes\ket{0}\right) \\
    &= \left(\sum_{y\in G_Z^{\epsilon'}}\bra{y+e_Z(H_Zy)}\otimes\bra{H_Zy}(\psi_y')^\dagger\right) \\
    &\hspace{2em} \cdot\left(\sum_{y\in G_Z^{\epsilon'}}\psi_y'(-1)^{\bar{c}_X\cdot(y+e_Z(H_Zy))}\ket{y+e_Z(H_Zy)}\otimes\ket{H_Zy}\right) \\
    &= \sum_{y\in G_Z^{\epsilon'}}(-1)^{\bar{c}_X\cdot(y+e_Z(H_Zy))}|\psi'_y|^2 \\
    &= \left|\sum_{y\in S_Z^0}|\psi'_y|^2-\sum_{y\in S_Z^1}|\psi'_y|^2\right| \\
    &= |D_Z^{\psi'}(S_Z^0)-D_Z^{\psi'}(S_Z^1)|.
  \end{align*}
  Then because $D_Z^{\psi'}$ is supported inside $G_Z^{\epsilon'}=S_Z^0\sqcup S_Z^1$ by the definition of $\psi'$, it follows that $D_Z^{\psi'}(S_Z^0)+D_Z^{\psi'}(S_Z^1)=1$, so we must have
  \begin{align*}
    D_Z^{\psi'}(S_Z^0),D_Z^{\psi'}(S_Z^1)
    &\geq \frac14-\frac{1}{4\sqrt{2}} = \mu',
  \end{align*}
  as desired.
\end{proof}

\section{Planting Codewords in QLDPC Codes}
\label{sec:plant}
In this section, we show how to plant a nontrivial codeword in the quantum Tanner codes of \cite{leverrier_quantum_2022-1}, thereby ensuring the code has positive dimension regardless of other parameters in the instantiation. For instance, when the inner codes $C_A,C_B$ are chosen to be of rate $1/2$ in the quantum Tanner code construction, the only prior method for bounding dimension, namely by counting parity checks, fails to ensure the dimension of the global code is positive (see Section~\ref{sec:qtan}). However, our planted construction of quantum Tanner codes has positive dimension regardless of the rates of the inner codes, and thus provides a new way to ensure positive dimension, that works in previously unfeasible parameter regimes. We remark that a similar technique also works for the codes of \cite{panteleev_asymptotically_2022}, though we do not present the details to avoid redundancy.

Using the strongly explicit nature of the planted codeword, we apply our construction to improve upon the explicit SoS lower bounds of \cite{hopkins_explicit_2022-1} to obtain \textit{strongly} explicit SoS lower bounds.

\subsection{Intuition: Planted Classical Tanner Codes}
\label{sec:cplant}
In this section, we present the simpler case of how to plant a codeword in a classical Tanner code, which motivates our construction in the quantum case.

Recall that a classical Tanner code $C=\Tan(\Gamma,C_{\text{in}})\subseteq\bF_q^n$ is constructed from a $\Delta$-regular graph $\Gamma$ with $n$ edges and an inner code $C_{\text{in}}\subseteq\bF_q^\Delta$ as follows. We associate the set of all edges in $\Gamma$ with the set $[n]$, and we associate the set of edges incident to each vertex $v\in\Gamma$ with the set $[\Delta]$. Then we define $C$ to be the set of all edge labelings $y\in\bF_q^n=\bF_q^{E(\Gamma)}$ such that the labels of edges incident to each $v\in\Gamma$ form a codeword in $C_{\text{in}}$.

The standard method for ensuring that the rate $R$ of $C$ is positive (and in fact linear in $n$) is to require that $C_{\text{in}}$ be a linear code of rate $R_{\text{in}}>1/2$, so that by counting linear constraints it follows that $R\geq 1-2(1-R_{\text{in}})$.

However, if we only care about ensuring that $R>0$, we may instead simply require that $C_{\text{in}}$ contains the all-$1$s vector $\allones\in\bF_q^\Delta$, as then by definition the global all-$1$s vector $\allones\in\bF_q^n$ must lie in $C$. If the resulting ``planted'' classical code has no other nontrivial codewords, it is simply a repetition code, which is typically uninteresting classically.

However, below we construct a quantum analogue of such planted codes, which are more difficult to construct than their classical counterparts, and yield interesting complexity theoretic applications regardless of their rate. For instance, because the planted codeword is trivial to describe and therefore strongly explicit, we improve the explicit SoS lower bounds of \cite{hopkins_explicit_2022-1} to be strongly explicit. Furthermore, in Corollary~\ref{cor:nlts} of Section~\ref{sec:nlts}, we showed that such qLDPC codes of arbitrarily small rate also yield NLTS Hamiltonians.


\subsection{Construction of Planted Quantum Tanner Codes}
\label{sec:qplant}
In this section, we present our construction of planted quantum Tanner codes. This construction can be viewed as a quantum analogue of the planted classical Tanner codes described in Section~\ref{sec:cplant}. The quantum case requires significantly more care, as desribed below.

The following proposition presents our paradigm for planting a nontrivial codeword in a quantum Tanner code

\begin{proposition}
  \label{prop:qTanplanted}
  Let $\cC$ be a quantum Tanner code as defined in Section~\ref{sec:qTanconstruct} such that the following hold:
  \begin{enumerate}
  \item\label{it:planted} The all-$1$s vector $\allones\in\bF_q^\Delta$ lies in $C_A$ and in $C_B^\perp$.
  \item\label{it:relprime} $n=|Q|=|G||A||B|$ is relatively prime with $q$.
  \end{enumerate}
  Then the all-$1$s vector $\allones\in\bF_q^Q$ lies in $C_Z\setminus C_X^\perp$ and in $C_X\setminus C_Z^\perp$.
\end{proposition}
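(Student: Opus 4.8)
The plan is to split the claim into two parts: (i) showing $\allones \in C_X$ and $\allones \in C_Z$, which should follow directly from the hypothesis that $\allones \in C_A \cap C_B^\perp$ together with the definition of the quantum Tanner code, and (ii) showing $\allones \notin C_X^\perp$ and $\allones \notin C_Z^\perp$, which is the arity-mismatch argument sketched in the introduction. For part (i), recall that $C_X = \Tan(\Gamma_0,(C_A\otimes C_B)^\perp)$ and $C_Z = \Tan(\Gamma_1,(C_A^\perp\otimes C_B^\perp)^\perp)$. The local view of any vertex is a $\Delta\times\Delta$ grid, and the global all-1s vector restricts to the all-1s matrix $\allones_{\Delta\times\Delta}$ on each such grid. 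So it suffices to check that $\allones_{\Delta\times\Delta} \in (C_A\otimes C_B)^\perp$ and $\allones_{\Delta\times\Delta} \in (C_A^\perp\otimes C_B^\perp)^\perp = C_A\otimes\bF_q^\Delta + \bF_q^\Delta\otimes C_B$. For the first: $(C_A\otimes C_B)^\perp = C_A^\perp\otimes\bF_q^\Delta + \bF_q^\Delta\otimes C_B^\perp$, and since $\allones\in C_B^\perp$ we can write $\allones_{\Delta\times\Delta} = \allones\otimes\allones \in \bF_q^\Delta\otimes C_B^\perp$. For the second: since $\allones\in C_A$, we have $\allones_{\Delta\times\Delta} = \allones\otimes\allones \in C_A\otimes\bF_q^\Delta$. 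Hence $\allones\in C_X$ and $\allones\in C_Z$.

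For part (ii), the key structural fact is that both $C_X^\perp$ and $C_Z^\perp$ are spanned by the local parity checks, each of which is a tensor codeword supported on a single $\Delta\times\Delta$ grid $Q(v)$. Specifically $C_X^\perp$ is spanned by the codewords of $C_A\otimes C_B$ (embedded in $\bF_q^Q$ via $Q(v_0)$) over all $v_0\in V_{00}\sqcup V_{11}$, and $C_Z^\perp$ by codewords of $C_A^\perp\otimes C_B^\perp$ over all $v_1\in V_{01}\sqcup V_{10}$. I would show that every such generator has the property that the sum of its components over all of $\bF_q^Q$ equals $0$. For a generator coming from $C_A\otimes C_B$ supported on a grid, its component-sum equals $\sum_{a,b} x_{a}y_{b} = (\sum_a x_a)(\sum_b y_b)$ for $x\in C_A$, $y\in C_B$; this vanishes provided $\sum_a x_a = 0$ for all $x\in C_A$, i.e. $\allones\in C_A^\perp$ — wait, that's not assumed. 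The cleaner route: it suffices that $\sum_b y_b = 0$ for all $y\in C_B$, i.e. $\allones\in C_B^\perp$, which \emph{is} hypothesis~\ref{it:planted}. Similarly for $C_Z^\perp$: generators come from $C_A^\perp\otimes C_B^\perp$, with component-sum $(\sum x_a)(\sum y_b)$ for $x\in C_A^\perp$, $y\in C_B^\perp$; since $\allones\in C_A$ we have $\sum_a x_a = x\cdot\allones = 0$ for all $x\in C_A^\perp$, so this vanishes too. Therefore the component-sum functional $\sigma: \bF_q^Q\to\bF_q$, $\sigma(z)=\sum_{f\in Q} z_f$, vanishes on all of $C_X^\perp$ and all of $C_Z^\perp$. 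But $\sigma(\allones) = |Q|\cdot 1 = n \bmod p$, where $p=\chr(\bF_q)$; by hypothesis~\ref{it:relprime}, $\gcd(n,q)=1$, so $p\nmid n$ and $\sigma(\allones)\neq 0$ in $\bF_q$. Hence $\allones\notin C_X^\perp$ and $\allones\notin C_Z^\perp$. Combined with part (i), this gives $\allones\in (C_Z\setminus C_X^\perp)\cap(C_X\setminus C_Z^\perp)$.

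I would need to double-check one bookkeeping point that is the main (though modest) obstacle: the embedding of a local parity check into $\bF_q^Q$. A parity check is a tensor codeword $w\in C_A\otimes C_B \subseteq \bF_q^{\Delta\times\Delta} \cong \bF_q^{Q(v_0)}$, and it is extended by zero to all of $\bF_q^Q$. The component-sum over $\bF_q^Q$ of this extension is just the sum of entries of $w$ over the $\Delta\times\Delta$ grid, which is the product of row-sum and column-sum as above — so no interference between different grids arises because we sum each generator individually and use linearity of $\sigma$. I should also confirm that the identification of $Q(v_0)$ with an $A\times B$ grid (as stated in Section~\ref{sec:qTanconstruct}) is compatible with the tensor structure $C_A\otimes C_B$, so that "row" indexes $A$ and "column" indexes $B$; this is exactly how the paper sets things up, so it is immediate. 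No characteristic-2 subtlety enters because the argument only uses linearity and the definition of $\sigma$, and we never change a sign.
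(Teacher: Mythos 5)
Your proposal is correct and follows essentially the same approach as the paper's proof: show $\allones\in C_X\cap C_Z$ by observing $\allones\otimes\allones$ lies in the appropriate dual tensor codes, and show $\allones\notin C_X^\perp\cup C_Z^\perp$ via the component-sum functional $\sigma$, which vanishes on the parity-check generators by the hypotheses $\allones\in C_A$, $\allones\in C_B^\perp$, while $\sigma(\allones)=n\neq 0$ in $\bF_q$ by coprimality. The paper states this more tersely, but the argument is the same.
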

\begin{proof}
  Because $\allones\in C_A$, the components in every given codeword of $C_A^\perp$ sum to $0$. Therefore every codeword in $C_A^\perp\otimes C_B^\perp$, and thus also in $C_Z^\perp$, has components summing to $0$, as $C_Z^\perp$ is by definition spanned by codewords in $C_A^\perp\otimes C_B^\perp$ supported in neighborhoods of vertices in the square Cayley complex. Thus as the components of $\allones\in\bF_q^Q$ sum to $n\neq 0$ in $\bF_q$ because $n$ is relatively prime with $q$, it follows that $\allones\notin C_Z^\perp$.

  However, as $\allones\in\bF_q^\Delta$ lies in $C_B^\perp$, it follows that $\allones\in\bF_q^{\Delta\times\Delta}$ lies in $(C_A\otimes C_B)^\perp$, and thus $\allones\in\bF_q^Q$ lies in $C_X=\Tan(\Gamma_0,(C_A\otimes C_B)^\perp)$.

  Thus we have shown that $\allones\in C_X\setminus C_Z^\perp$. Analogous reasoning shows that $\allones\in C_Z\setminus C_X^\perp$.
\end{proof}

To instantiate the construction in Proposition~\ref{prop:qTanplanted} such that Condition~\ref{it:planted} is satisfied, we will choose $C_A$ and $C_B^\perp$ at random from the set of codes of some constant rate that contain $\allones$. The following result, which we prove in Section~\ref{sec:prodexproof}, shows that such random ``planted'' codes are still product-expanding, thereby providing a planted analogue of Corollary~\ref{cor:randomprod}.

\begin{proposition}
  \label{prop:randomprodplanted}
  Fix any finite field $\bF_q$. For every fixed $\epsilon>0$, there exists a constant $\rho=\rho(\epsilon)>0$ and a function $\delta(n)=\delta(n;\epsilon)\rightarrow 0$ as $n\rightarrow\infty$ such that the following holds. For every pair of integers $k_1,k_2\in(\epsilon n,(1-\epsilon)n)$, if $C_i\subseteq\bF_q^n$ for $i=1,2$ is drawn uniformly at random from the set of linear codes of dimension $k_i$ that contain $\allones\in\bF_q^n$, then with probability $\geq 1-\delta(n)$ both $(C_1,C_2^\perp)$ and $(C_1^\perp,C_2)$ will be $\rho$-product-expanding.
\end{proposition}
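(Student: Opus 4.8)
The plan is to adapt the proof of Proposition~\ref{prop:randomprod} (i.e.~Corollary~\ref{cor:randomprod}), due to \cite{kalachev_two-sided_2023}, to the case where the random linear codes are conditioned to contain $\allones$. First I would recall the structure of that argument: one wants to show that with high probability, for every $x\in C_1\otimes\bF_q^n+\bF_q^n\otimes C_2$ there is a decomposition $x=c+r$ with $c\in C_1\otimes\bF_q^n$, $r\in\bF_q^n\otimes C_2$ and $|x|\geq\rho n(|c|_{\text{col}}+|r|_{\text{row}})$. The contrapositive is a union bound over "bad" configurations: sets of columns $J_1\subseteq[n]$ and rows $J_2\subseteq[n]$ with $|J_1|+|J_2|$ not too small, together with a low-weight matrix $x$ whose support is confined (after subtracting a codeword-supported part) to the $J_1\times[n]$ columns and $[n]\times J_2$ rows; one shows the probability that a random $C_1$ (resp.~$C_2$) "accommodates" such a configuration decays fast enough to beat the number of configurations. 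The key probabilistic input is that for a uniformly random linear code $C$ of dimension $k=\Theta(n)$, and any fixed nonzero vector $v$, $\Pr[v\in C]=q^{k-n}\cdot(1+o(1))$, and more generally the probability that a fixed $t$-dimensional space is contained in $C$ is $\approx q^{t(k-n)}$; these bounds feed into counting how likely $C$ contains enough low-weight structure.

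The modification is that we now draw $C_i$ uniformly from codes of dimension $k_i$ containing $\allones$, and we want product-expansion for the \emph{mixed} pairs $(C_1,C_2^\perp)$ and $(C_1^\perp,C_2)$. I would handle this in two steps. First, I would observe that if $C_2$ is uniform among dimension-$k_2$ codes containing $\allones$, then $C_2^\perp$ is uniform among dimension-$(n-k_2)$ codes contained in $\allones^\perp$ — equivalently, uniform among codes of that dimension all of whose codewords have coordinate-sum zero. So the real task is: show the \cite{kalachev_two-sided_2023} bound survives when one or both of the two random codes is drawn from a \emph{subspace-restricted} ensemble (either "contains a fixed vector" or "contained in a fixed hyperplane"). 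Second, I would note that in either restricted ensemble, for any fixed vector $v$ (resp.~fixed small subspace), the containment probability $\Pr[v\in C]$ changes only by a factor of $q^{O(1)}$ relative to the unrestricted ensemble: conditioning on one linear constraint (membership in a hyperplane, or containment of one vector) multiplies the relevant counting probabilities by at most $q$ or $q^{-1}$ and at most shifts the effective dimension by $1$. Since the union-bound in \cite{kalachev_two-sided_2023} has exponential slack in $n$, absorbing $q^{O(1)} = q^{O_\epsilon(1)}$-type corrections (the number of linear constraints imposed is a constant, independent of $n$) does not affect the conclusion; one only needs to slightly shrink the constant $\rho$ and enlarge $n$. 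Then a final union bound over the two events "$(C_1,C_2^\perp)$ is $\rho$-product-expanding" and "$(C_1^\perp,C_2)$ is $\rho$-product-expanding", exactly as in the passage from Proposition~\ref{prop:randomprod} to Corollary~\ref{cor:randomprod}, completes the argument.

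The main obstacle I anticipate is making the "conditioning changes probabilities by only a bounded factor" step fully rigorous inside the actual counting argument of \cite{kalachev_two-sided_2023}, rather than just heuristically: one must check that the restricted ensemble is still "uniform enough" that all the moment/counting estimates used there go through with the claimed $q^{O(1)}$ loss, \emph{uniformly} over all the configurations in the union bound — in particular for configurations where the fixed vector $\allones$ (or its orthogonality constraint) interacts with the support sets $J_1, J_2$ in a degenerate way. A clean way to do this is to set up a coupling: sample $C$ from the restricted ensemble by first sampling $C'$ from the unrestricted ensemble of dimension $k-1$ and then setting $C = C' + \langle v\rangle$ (for the "contains $v$" case), or by sampling $C^\perp$ analogously; one then shows that any event used in the \cite{kalachev_two-sided_2023} union bound has probability under $C$ at most $q$ times its probability under an unrestricted code of dimension $k$ (possibly after adjusting $k$ by $1$, which is harmless). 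I would spell this coupling out once and then invoke it mechanically at each place the original proof invokes a containment probability. Everything else — the combinatorial count of configurations, the choice of $\rho$, the $\delta(n)\to 0$ bookkeeping — is then a verbatim reprise of \cite{kalachev_two-sided_2023} with constants adjusted.
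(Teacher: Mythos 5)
Your overall plan---relate the planted ensemble to the unrestricted one by a coupling, adapt the \cite{kalachev_two-sided_2023} union bound, then union bound over the two pairs---points in the right direction, and your observation that $C_2^\perp$ is a uniformly random code of the complementary dimension inside $\allones^\perp$ is correct. Your coupling (sample a uniform $(k-1)$-dimensional $C'$ and set $C=C'+\langle\allones\rangle$, conditioning on $\allones\notin C'$) is essentially the same coupling the paper uses, just described in the reverse direction: the paper samples $C$ from the planted ensemble and then constructs a uniformly random codimension-one subcode $C'\subseteq C$.

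However, there are two genuine gaps. First, the claim that ``conditioning on one linear constraint multiplies the relevant counting probabilities by at most $q^{O(1)}$'' is not correct and cannot be made to work as stated. The event $F=\{\allones\in C\}$ has probability $\approx q^{-(n-k)}=q^{-\Theta(n)}$, so the naive conditioning bound $\Pr[\cdot\mid F]\leq\Pr[\cdot]/\Pr[F]$ incurs a $q^{\Theta(n)}$ loss, not a $q^{O(1)}$ loss. In \cite{kalachev_two-sided_2023}'s argument there are two bad events: one ($E_1$) with doubly exponentially small probability $q^{-\Theta(n^2)}$, for which a $q^{\Theta(n)}$ loss is harmless and the paper indeed uses the naive bound; but the other ($E_2$, failure of property~$(\ast)$) has probability only $q^{-\Theta(n)}$, where this loss is fatal. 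The paper circumvents this by proving Lemma~\ref{lem:plantedint}, which says $C\cap W$ is \emph{exactly} uniform among $(k-1)$-dimensional subspaces of a hyperplane $W$ whenever $\allones\notin W$. Combined with the observation that every $\alpha$-sparse $m$-dimensional $V$ with $m<1/\alpha$ lies inside a coordinate hyperplane $\Pi_i^{-1}(0)$ (which never contains $\allones$), this gives the property-$(\ast)$ bound with essentially no loss for small $m$, and a dimension shift by $1$ is absorbed for large $m$. Your ``$q^{O(1)}$ loss, worry about degenerate configurations'' framing misses that even the generic configuration with $m$ small breaks under a ``$-1$ dimension shift'' approach; the sparseness trick is what rescues it.

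Second, you do not address the fact that once $C_1$ is also planted, you cannot simply rerun \cite{kalachev_two-sided_2023} for the pair $(C_1^\perp,C_2)$ with both sides from restricted ensembles. The paper's way out is Lemma~\ref{lem:subcodeexp}: if $(C_1,C_2)$ is $\rho$-product-expanding and $C_1'\subseteq C_1$ has codimension $1$, then $(C_1',C_2)$ is $\rho^2/2$-product-expanding. Together with the coupling, this reduces the target to Proposition~\ref{prop:randomprodhalf}, in which \emph{only one} of the two codes is planted while the other is truly uniform, and the \cite{kalachev_two-sided_2023} machinery then only needs to be modified on one side. Without this reduction you would need a two-sided version of the property-$(\ast)$ analysis, which your sketch does not supply.
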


Meanwhile, to ensure that Condition~\ref{it:relprime} in Proposition~\ref{prop:qTanplanted} is satisfied, we will choose the graphs $\Cay(G,A),\Cay(G,B)$ to be almost-Ramanujan graphs from the family given by Theorem~\ref{thm:lwexp}, which we restate below and prove in Section~\ref{sec:expconst}:

\lwexp*

Combining the results above, we immediately obtain the following strongly explicit construction of quantum Tanner codes with a planted all-1s vector.

\begin{theorem}[Planted quantum Tanner codes]
  \label{thm:plantedresult}
  For every finite field $\bF_q$, there exist constants $c_1,c_2>0$ such that there is a strongly explicit infinite family $(\cC^{(n)})_{n\rightarrow\infty}$ of quantum LDPC CSS codes for which every $\cC^{(n)}=\CSS(C_X^{(n)},C_Z^{(n)})$ with $C_X^{(n)},C_Z^{(n)}\subseteq\bF_q^n$ has the following properties:
  \begin{enumerate}
  \item $\cC^{(n)}$ has $(c_1,c_2)$-small-set boundary and coboundary expansion (and therefore has distance $\geq c_1n$ by Lemma~\ref{lem:ssexptodis}).
  \item The all-1s vector $\allones\in\bF_q^n$ lies in $C_X^{(n)}\setminus {C_Z^{(n)}}^\perp$ and in $C_Z^{(n)}\setminus {C_X^{(n)}}^\perp$.
  \end{enumerate}
  In particular, for a sufficiently large constant $\Delta$ and a sufficiently small constant $\rho>0$, such a family $(\cC^{(n)})_{n\rightarrow\infty}$ is given by quantum Tanner codes, where we choose $\Cay(G,A)=\Cay(G,B)$ from a strongly explicit family of $\Delta$-regular almost-Ramanujan graphs given by Theorem~\ref{thm:lwexp}, and the inner codes $C_A,C_B\subseteq\bF_q^\Delta$ are found by a brute force search to ensure that $\allones\in C_A,C_B^\perp$ and that $(C_A,C_B),(C_A^\perp,C_B^\perp)$ are $\rho$-product expanding.
\end{theorem}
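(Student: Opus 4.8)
The plan is to assemble Theorem~\ref{thm:plantedresult} from the three ingredients already set up in the excerpt: Proposition~\ref{prop:qTanplanted} (the planting paradigm), Proposition~\ref{prop:randomprodplanted} (planted random codes are still product-expanding), Theorem~\ref{thm:lwexp} (strongly explicit almost-Ramanujan Cayley graphs whose vertex count is a prime power), and Theorem~\ref{thm:qTanexp} (product-expanding inner codes plus almost-Ramanujan graphs give small-set (co)boundary expansion). The main work is checking that these can be simultaneously satisfied with the relative-primality constraint and packaged into a strongly explicit family.

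First I would fix the characteristic $p$ of $\bF_q$ and invoke Theorem~\ref{thm:lwexp} with a prime $p'\neq p$ to obtain, for each sufficiently large $\Delta$ in the guaranteed infinite set $\mathbf{\Delta}$, a strongly explicit $\Delta$-regular almost-Ramanujan Cayley graph $\Cay(G,A)=\Cay(G,B)$ on $|G|=p'^{3m}$ vertices; I would set $A=B$. Then $n=|Q|=|G||A||B|=p'^{3m}\Delta^2$, and since $\Delta\in\mathbf{\Delta}$ can be chosen — replacing $\Delta$ by $\Delta\pm 1$ as needed using the last clause of Theorem~\ref{thm:lwexp} — to be coprime to $p$, and $p'^{3m}$ is coprime to $p$ by construction, we get $\gcd(n,q)=1$, so Condition~\ref{it:relprime} of Proposition~\ref{prop:qTanplanted} holds. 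Next I would pick a target rate, say $R_A=R_B=1/2$ (any fixed $\epsilon$-interior rate works), and apply Proposition~\ref{prop:randomprodplanted} with $n\rightarrow\Delta$: a uniformly random pair $(C_A, C_B)$ of dimension-$(\Delta/2)$ codes each containing $\allones$ has, with probability $\geq 1-\delta(\Delta)>0$ for $\Delta$ large, both $(C_A,C_B)$ and $(C_A^\perp,C_B^\perp)$ $\rho$-product-expanding; note $\allones\in C_A$ and (taking $C_2 = C_B^\perp$ in the proposition, or just reindexing) $\allones\in C_B^\perp$, so Condition~\ref{it:planted} holds too. Since such a pair exists, and $\Delta=O(1)$ as $n\rightarrow\infty$, a brute-force search over all pairs of codes in $\bF_q^\Delta$ finds one in $O(1)$ time, preserving strong explicitness.

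With these choices, Proposition~\ref{prop:qTanplanted} gives $\allones\in C_X\setminus C_Z^\perp$ and $\allones\in C_Z\setminus C_X^\perp$ (Property~2), while Theorem~\ref{thm:qTanexp}, whose hypotheses (Conditions~\ref{it:ramanujan} and \ref{it:innerprod} of Theorem~\ref{thm:qTandis}) are exactly the almost-Ramanujan property of the $\Cay(G,\cdot)$ and the product-expansion of $(C_A,C_B),(C_A^\perp,C_B^\perp)$, gives $(c_1,c_2)$-small-set boundary and coboundary expansion for constants depending only on $\rho,\Delta$ (Property~1), and hence distance $\geq c_1 n$ by Lemma~\ref{lem:ssexptodis}. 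Finally I would note that the parity checks $H_X,H_Z$ are determined locally by the Cayley graph neighborhoods and the fixed inner codes, so strong explicitness of the graph family (Theorem~\ref{thm:lwexp}) plus the constant-size inner codes yields strong explicitness of $(H_X,H_Z)$, i.e.\ of the code family, per the definition in Section~\ref{sec:notation}. Letting $m\rightarrow\infty$ (with $\Delta$ fixed) produces the desired infinite family with $n\rightarrow\infty$.

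The step I expect to require the most care is the arithmetic juggling needed to make $n$ coprime to $q$ while keeping everything strongly explicit: one must verify that the $\Delta$ produced by amplification in Theorem~\ref{thm:lwexp} can be nudged to be coprime to $p$ (this is the purpose of the ``$\Delta+1$ or $\Delta-1$ in $\mathbf{\Delta}$'' clause), that $\Delta$ is simultaneously large enough for Theorem~\ref{thm:qTanexp} and for the failure probability $\delta(\Delta)$ in Proposition~\ref{prop:randomprodplanted} to be below $1$, and that choosing a second prime $p'$ distinct from $p$ (so $|G|=p'^{3m}$ is a unit mod $p$) is always possible. None of this is deep, but it is the only place where the pieces genuinely interact rather than just compose. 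Everything else is a direct citation of the preceding results.
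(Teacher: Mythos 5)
Your proposal is correct and follows essentially the same route as the paper's proof: select a prime $p'$ coprime to the characteristic of $\bF_q$, take the Cayley graphs from Theorem~\ref{thm:lwexp} over groups of order $p'^{3m}$, use the ``$\Delta\pm1$'' clause to pick a degree $\Delta\in\mathbf{\Delta}$ coprime to $q$ so that $n=p'^{3m}\Delta^2$ is coprime to $q$, brute-force the inner codes via Proposition~\ref{prop:randomprodplanted} (with $C_1=C_A$, $C_2=C_B^\perp$), then invoke Proposition~\ref{prop:qTanplanted} and Theorem~\ref{thm:qTanexp}. One small wording slip: you write ``a uniformly random pair $(C_A,C_B)$ \ldots each containing $\allones$,'' which is not quite what you want — the planting is $\allones\in C_A$ and $\allones\in C_B^\perp$ (i.e.\ $C_B\perp\allones$), as your own parenthetical immediately corrects; otherwise the argument matches the paper's.
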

\begin{proof}
  By Proposition~\ref{prop:randomprodplanted}, if $\rho>0$ is sufficiently small and $\Delta>0$ is sufficiently large then we can find codes $C_A,C_B\subseteq\bF_q^\Delta$ satisfying the criteria in the theorem statement, namely that $\allones\in C_A,C_B^\perp$ and that $(C_A,C_B),(C_A^\perp,C_B^\perp)$ are $\rho$-product expanding.

  Furthermore, choose any fixed prime $p$ relatively prime with $q$, and let $\Delta_0$ be sufficiently large and $c_1,c_2>0$ be sufficiently small constants such that Theorem~\ref{thm:qTanexp} implies that all $\cC^{(n)}$ have $(c_1,c_2)$-small-set boundary and coboundary expansion as long as $\Cay(G,A)=\Cay(G,B)$ is chosen to be an almost-Ramanujan graph of degree $\Delta\geq\Delta_0$ from the family $(\Gamma_{m,\Delta})_{m\in\bN}$ given by Theorem~\ref{thm:lwexp}, where $|V(\Gamma_{m,\Delta})|=p^{3m}$. Because Theorem~\ref{thm:lwexp} guarantees that $\mathbf{\Delta}\subseteq\bN$ is infinitely large and consists of the union of pairs of consecutive integers, we can always find some sufficiently large $\Delta\in\mathbf{\Delta}$ that is relatively prime with $q$ and that satisfies $\Delta\geq\Delta_0$. Thus we indeed obtain the desired Cayley expanders $\Cay(G,A)=\Cay(G,B)=\Gamma_{m,\Delta}$ for $m\in\bN$, where $|G||A||B|=p^{3m}\Delta^2$ is relatively prime with $q$.

  Now we have shown that the instantiation of quantum Tanner codes above satisfies Conditions~\ref{it:planted}, \ref{it:relprime} in Proposition~\ref{prop:qTanplanted}, so this proposition implies that $\allones\in\bF_q^n$ lies in $C_X^{(n)}\setminus {C_Z^{(n)}}^\perp$ and in $C_Z^{(n)}\setminus {C_X^{(n)}}^\perp$. Meanwhile, as described above, Theorem~\ref{thm:qTanexp} implies that $\cC^{(n)}$ has $(c_1,c_2)$-small-set boundary and coboundary expansion.


  Because the almost-Ramanujan graphs in Theorem~\ref{thm:lwexp} are strongly explicit, and the inner codes $C_A,C_B$ have constant size beacuse $\Delta$ is constant as $n\rightarrow\infty$, the parity check matrices $H_X^{(n)},H_Z^{(n)}$ for $C_X^{(n)},C_Z^{(n)}$ respectively are strongly explicit, which by definition means that $\cC^{(n)}$ is strongly explicit.
\end{proof}


In Theorem~\ref{thm:plantedresult}, we may choose $C_A,C_B$ to have any fixed rates $0<R_A,R_B<1$ for sufficiently large $\Delta$. Because $\cC$ has rate $R\geq-(1-2R_A)(1-2R_B)$ (see Section~\ref{sec:locdim}), it follows that we can in fact ensure that the the codes in Theorem~\ref{thm:plantedresult} have any desired constant rate $0<R<1$.

However, our construction alternatively allows us to obtain quantum Tanner codes of positive dimension for $R_A,R_B$ in previously impossible parameter regimes. Because Theorem~\ref{thm:plantedresult} ensures that $\allones\in C_Z\setminus C_X^\perp$, it follows that $\cC$ always has dimension $\dim(\cC)=\dim(C_Z)-\dim(C_X^\perp)\geq 1$, even when we choose $R_A,R_B$ to be constants for which the bound $R\geq-(1-2R_A)(1-2R_B)$ is meaningless. For instance, we can choose $R_A=R_B$, or take both $R_A,R_B<1/2$, in which case counting parity checks fails to show that the resulting quantum Tanner code $\cC$ has positive dimension. Nevertheless the planted all-1s vector ensures that even in this case $\cC$ must have dimension $\geq 1$.


\subsection{Proof of Product-Expansion for Planted Inner Codes}
\label{sec:prodexproof}
In this section we prove Proposition~\ref{prop:randomprodplanted}. We begin with the following lemma.

\begin{lemma}
  \label{lem:subcodeexp}
  If $(C_1,C_2)$ is $\rho$-product expanding and $C_1'\subseteq C_1$ is a codimension-1 subcode, then $(C_1',C_2)$ is $\rho^2/2$-product expanding.
\end{lemma}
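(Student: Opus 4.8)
The plan is to leverage the product-expansion property of $(C_1, C_2)$ and handle the codimension-1 restriction by a case analysis on whether the decomposition produced by the $(C_1,C_2)$-expansion happens to respect the subcode $C_1'$. Let $x \in (C_1'^\perp \otimes C_2^\perp)^\perp = C_1' \otimes \bF_q^n + \bF_q^n \otimes C_2$. Since $C_1' \subseteq C_1$, we have $x \in C_1 \otimes \bF_q^n + \bF_q^n \otimes C_2$ as well, so $\rho$-product-expansion of $(C_1, C_2)$ gives a decomposition $x = c + r$ with $c \in C_1 \otimes \bF_q^n$, $r \in \bF_q^n \otimes C_2$, and $|x| \geq \rho n (|c|_{\text{col}} + |r|_{\text{row}})$. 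The issue is that the columns of $c$ lie in $C_1$, not necessarily in $C_1'$, so this is not yet a valid decomposition witnessing product-expansion of $(C_1', C_2)$.

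First I would fix a linear functional $\phi : C_1 \to \bF_q$ with kernel $C_1'$ (possible since $C_1'$ has codimension 1). Each column $c_i$ of $c$ lies in $C_1$; I want to replace $c$ by some $c' \in C_1' \otimes \bF_q^n$ while pushing the correction into the row-space part. The key observation is that $x - c = r \in \bF_q^n \otimes C_2$, and also $x \in C_1' \otimes \bF_q^n + \bF_q^n \otimes C_2$; subtracting, the "defect" $c \bmod (C_1' \otimes \bF_q^n)$ must already lie in the image of $\bF_q^n \otimes C_2$ modulo $C_1' \otimes \bF_q^n$. More concretely, pick any fixed $w \in C_1 \setminus C_1'$ with $\phi(w) = 1$; then for each column index $i$, write $c_i = c_i' + \phi(c_i) w$ with $c_i' \in C_1'$. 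Setting $c' $ to have columns $c_i'$ and letting $u = c - c'$ (which is the rank-$\leq 1$-in-the-column-direction matrix $w \otimes \phi(c)$, i.e. $u = w v^\top$ where $v_i = \phi(c_i)$), we get $x = c' + (r + u)$. Now I need $r + u \in \bF_q^n \otimes C_2$, equivalently $u \in \bF_q^n \otimes C_2$, equivalently every row of $u$ lies in $C_2$. Since $x - c' \in \bF_q^n \otimes C_2$ (because $x \in C_1' \otimes \bF_q^n + \bF_q^n \otimes C_2$ and $c' \in C_1' \otimes \bF_q^n$) and $r = x - c \in \bF_q^n \otimes C_2$, we get $u = (x - c') - (x-c) = c - c' \in \bF_q^n \otimes C_2$. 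Good — so $x = c' + r'$ with $c' \in C_1' \otimes \bF_q^n$, $r' := r + u \in \bF_q^n \otimes C_2$ is a valid decomposition.

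The remaining task is to bound $|c'|_{\text{col}} + |r'|_{\text{row}}$ from above by something like $O(1/\rho)\cdot (|c|_{\text{col}} + |r|_{\text{row}})$, so that we can conclude $|x| \geq (\rho^2/2) n (|c'|_{\text{col}} + |r'|_{\text{row}})$. Here $|c'|_{\text{col}} \leq |c|_{\text{col}}$ trivially (zeroing components within columns can only kill columns). For $|r'|_{\text{row}}$: $u = w v^\top$ has a nonzero row exactly at indices $i$ where $w_i \neq 0$, so $|u|_{\text{row}} \leq |w|$. But $|w|$ could be as large as $n$, which is too weak. The fix is to choose $w$ cleverly: apply product-expansion of $(C_1,C_2)$ — or rather just the distance bound, Lemma~\ref{lem:exptodis}, which only gives $|w| \geq \rho n$, the wrong direction. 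Instead, observe $v = \phi(c)$ viewed as a vector: the support of $v$ is contained in the support of nonzero columns of $c$, so $|v| \leq |c|_{\text{col}}$, hence $u = w v^\top$ has at most $|v| \leq |c|_{\text{col}}$ nonzero columns. That controls columns of $u$ but I need rows. The genuinely clean route: apply $\rho$-product-expansion a second time, to the matrix $u$ itself. Since $u = c - c' $ and we showed $u \in \bF_q^n \otimes C_2$, and also $u = w v^\top \in C_1 \otimes \bF_q^n$ (columns are multiples of $w \in C_1$), we have $u \in (C_1^\perp \otimes C_2^\perp)^\perp$, so product-expansion gives $u = \tilde{c} + \tilde{r}$ with $|u| \geq \rho n (|\tilde c|_{\text{col}} + |\tilde r|_{\text{row}})$; but actually I can just directly take the decomposition $u = 0 + u$ since $u \in \bF_q^n \otimes C_2$ already, giving $|u|_{\text{row}} \leq |u|/1$... still need a bound on $|u|_{\text{row}}$ in terms of $|u|$. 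Here's where I expect the **main obstacle**: converting a bound on total Hamming weight $|u|$ into a bound on the number of nonzero rows. Since each nonzero row of $u$, being a nonzero codeword of $C_2$, has weight $\geq d(C_2) \geq \rho n$ (by Lemma~\ref{lem:exptodis}), we get $|u|_{\text{row}} \leq |u| / (\rho n)$. Similarly $|u| \leq |x - c'| + |x| $... I'd bound $|u| = |c - c'| \leq |c| \leq n \cdot |c|_{\text{col}}$, giving $|u|_{\text{row}} \leq |c|_{\text{col}}/\rho$. Hence $|r'|_{\text{row}} \leq |r|_{\text{row}} + |u|_{\text{row}} \leq |r|_{\text{row}} + |c|_{\text{col}}/\rho \leq (1/\rho)(|r|_{\text{row}} + |c|_{\text{col}})$. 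Combining, $|c'|_{\text{col}} + |r'|_{\text{row}} \leq (1/\rho)(|c|_{\text{col}} + |r|_{\text{row}}) \cdot (1 + 1) $ roughly, so $|x| \geq \rho n (|c|_{\text{col}} + |r|_{\text{row}}) \geq \rho n \cdot \rho (|c'|_{\text{col}} + |r'|_{\text{row}})/2 = (\rho^2/2) n (|c'|_{\text{col}} + |r'|_{\text{row}})$, which is exactly the claimed $\rho^2/2$-product-expansion of $(C_1', C_2)$. I would double-check the constant bookkeeping — the factor-of-2 and whether one needs $\rho \leq 1$ — since that is where the precise $\rho^2/2$ comes from, and make sure the edge case $c = 0$ (so $u = 0$, decomposition trivial) is handled.
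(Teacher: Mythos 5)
There is a genuine gap in your justification of the step $u \in \bF_q^n\otimes C_2$, which is the load-bearing claim in your construction. You assert that $x - c' \in \bF_q^n\otimes C_2$ ``because $x \in C_1'\otimes\bF_q^n + \bF_q^n\otimes C_2$ and $c' \in C_1'\otimes\bF_q^n$.'' But subtracting an element of $C_1'\otimes\bF_q^n$ from an element of $C_1'\otimes\bF_q^n + \bF_q^n\otimes C_2$ only lands you back in $C_1'\otimes\bF_q^n + \bF_q^n\otimes C_2$; there is no reason the $C_1'\otimes\bF_q^n$ contribution should vanish, since the $C_1'$-part of some abstract decomposition of $x$ need not coincide with your particular $c'$ built coordinate-wise from $\phi$. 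The conclusion is nevertheless true, but proving it requires the intersection fact the paper's own proof hinges on: picking any decomposition $x = \tilde{c} + \tilde{r}$ with $\tilde{c}\in C_1'\otimes\bF_q^n$ and $\tilde{r}\in\bF_q^n\otimes C_2$, one has $c - \tilde{c} \in (C_1\otimes\bF_q^n)\cap(\bF_q^n\otimes C_2) = C_1\otimes C_2$, and then $v = \phi(c) = \phi(c - \tilde{c})$ (applied column-wise, after extending $\phi$ to a functional $\tilde{\phi}\in\bF_q^n$) is a $\tilde{\phi}$-weighted combination of the rows of $c-\tilde{c}$, all of which lie in $C_2$; hence $v\in C_2$ and so $u = wv^\top \in \bF_q^n\otimes C_2$.

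Once this is patched, your argument does go through, and it is a genuinely different route from the paper's: you build a new decomposition $x = c' + (r+u)$ and bound the blowup in $|r+u|_{\text{row}}$, whereas the paper shows that in the nontrivial case $|x| < \rho^2 n^2$ the original $(C_1,C_2)$-decomposition $x = c + r$ already has $c\in C_1'\otimes\bF_q^n$, by deriving a contradiction with $|c|_{\text{col}}<\rho n$ from a nonzero $\spn\{a\}\otimes C_2$ component of $c - \tilde{c}$. Two smaller bookkeeping issues: the inequality $|u| = |c - c'| \leq |c|$ does not hold in general (use $|u|\leq n|u|_{\text{col}}\leq n|c|_{\text{col}}$ directly, since $u_j = \phi(c_j)w$ is nonzero only when $c_j$ is), and the cleanest way to see $|u|_{\text{row}}\leq|c|_{\text{col}}/\rho$ is to note that if $u\neq 0$ then $v\in C_2\setminus\{0\}$ gives $|c|_{\text{col}}\geq|v|\geq\rho n$ while $|u|_{\text{row}} = |w|\leq n$.
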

\begin{proof}
  Fix an arbitrary $x\in C_1'\otimes\bF_q^n+\bF_q^n\otimes C_2$. Our goal is to show that there exists a decomposition $x=c+r$ for some $c\in C_1'\otimes\bF_q^n$ and $r\in\bF_q^n\otimes C_2$ satisfying
  \begin{equation}
    \label{eq:wantprodexp}
    |x| \geq \frac{\rho^2n}{2}(|c|_{\text{col}}+|r|_{\text{row}}).
  \end{equation}
  If $|x|\geq\rho^2n^2$, then any decomposition $x=c+r$ suffices, as the right hand side above is always at most $\rho^2n/2\cdot 2n=\rho^2n^2$.
  
  Therefore assume that $|x|<\rho^2n$. By the $\rho$-product expansion of $(C_1,C_2)$, there exists some decomposition $x=c+r$ for $c\in C_1\otimes\bF_q^n$ and $r\in\bF_q^n\otimes C_2$ such that
  \begin{equation}
    \label{eq:assumeprodexp}
    \rho^2n > |x| \geq \rho n(|c|_{\text{col}}+|r|_{\text{row}}).
  \end{equation}
  Meanwhile, by the definition of $x$, there exists some decomposition $x=c'+r'$ for $c'\in C_1'\otimes\bF_q^n$ and $r'\in\bF_q^n\otimes C_2$. Letting $y=c-c'=r'-r$, then $y\in C_1\otimes\bF_q^n$ and $y\in\bF_q^n\otimes C_2$, so $y\in C_1\otimes C_2=C_1'\otimes C_2+\spn\{a\}\otimes C_2$. Therefore we can decompose $y=w+z$ for $w\in C_1'\otimes C_2$ and $z\in\spn\{a\}\otimes C_2$. It follows that $c=c'+y=(c'+w)+z$, where $c'+w\in C_1'\otimes\bF_q^n$ and $z\in\spn\{a\}\otimes C_2$. That is, every column of $c'+w$ lies in $C_1'$, and every nonzero column of $z$ is a scalar multiple of $a\notin C_1'$. Thus the $i$th column of $c$ is not in $C_1'$ if and only if the $i$th column of $z$ is nonzero. But by Lemma~\ref{lem:exptodis}, $C_2$ has distance $\geq\rho n$, and therefore if $z\in\spn\{a\}\otimes C_2$ is nonzero then $z$ has $\geq\rho n$ nonzero columns, which implies that $c$ has $\geq\rho n$ columns that are not in $C_1'$, so in particular $c$ has $\geq\rho n$ nonzero columns. But this assertion that $|c|_{\text{col}}\geq\rho n$ contradicts~(\ref{eq:assumeprodexp}), so our assumption that $z$ is nonzero must have been false. Therefore $z=0$, so $c=c'+w\in C_1'\otimes\bF_q^n$. Thus $x=c+r$ provides our desired decomposition of $x$ satisfying~(\ref{eq:wantprodexp}), as we have shown that $c\in C_1'\otimes\bF_q^n$ and $r\in\bF_q^n\otimes C_2$ in fact satisfy the stronger inequality~(\ref{eq:assumeprodexp}).
\end{proof}

In light of Lemma~\ref{lem:subcodeexp}, we can reduce the problem of proving Propositive~\ref{prop:randomprodplanted} to proving the following result.

\begin{proposition}
  \label{prop:randomprodhalf}
  For every fixed $\epsilon>0$, there exists a constant $\rho=\rho(\epsilon)>0$ and a function $\delta(n)=\delta(n;\epsilon)\rightarrow 0$ as $n\rightarrow\infty$ such that the following holds. For every pair of integers $k_1,k_2\in(\epsilon n,(1-\epsilon)n)$, if $C_1\subseteq\bF_q^n$ is drawn uniformly at random from the set of all linear codes of dimension $k_1$, and $C_2\subseteq\bF_q^n$ is drawn uniformly at random from the set of linear codes of dimension $k_2$ that contain $\allones\in\bF_q^n$, then with probability $\geq 1-\delta(n)$ the pair $(C_1,C_2)$ will be $\rho$-product-expanding.
\end{proposition}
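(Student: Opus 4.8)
The plan is to adapt the first-moment proof of Proposition~\ref{prop:randomprod} from \cite{kalachev_two-sided_2023}, carrying along the single extra constraint $\allones\in C_2$. Recall the shape of that argument. Suppose $(C_1,C_2)$ fails to be $\rho$-product-expanding; pick a witness $x$ together with a decomposition $x=c+r$, $c\in C_1\otimes\bF_q^n$ and $r\in\bF_q^n\otimes C_2$, minimizing $|c|_{\text{col}}+|r|_{\text{row}}$ among all decompositions with $|x|<\rho n(|c|_{\text{col}}+|r|_{\text{row}})$. Let $S$ be the set of nonzero columns of $c$ and $T$ the set of nonzero rows of $r$, with $s=|S|$, $t=|T|$. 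Minimality of the decomposition imposes structural constraints on $c$ and $r$ restricted to the block $S\times T$, and those reductions are purely about the decomposition, not about how the codes are sampled, so they transfer verbatim. It then suffices to rule out, with probability $1-o(1)$ over a uniform $C_1$ and over the $\allones$-constrained $C_2$, the existence of such a minimal witness for each pair $(s,t)$.

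First I would handle the regime where $s$ or $t$ is below a small constant fraction of $n$. This is the deterministic part: if $C_1,C_2$ have relative distance $\geq\delta=\delta(\epsilon)>0$, then each of the $s$ nonzero columns of $c$ has weight $\geq d_1$ and hence weight $\geq d_1-t$ off the rows of $T$, and symmetrically for $r$, so $|x|\geq|x|_{S\times\bar T}+|x|_{\bar S\times T}\geq s\max(0,d_1-t)+t\max(0,d_2-s)$; combined with $|x|<\rho n(s+t)$, this forces both $s,t=\Omega(n)$ once $\rho<\delta/2$, exactly as in \cite{kalachev_two-sided_2023}. The only genuinely new point is that $C_2$, a uniformly random $k_2$-dimensional code containing $\allones$, must still have relative distance $\geq\delta$ except with probability $q^{-\Omega(n)}$. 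This is a routine Gilbert--Varshamov computation: write $C_2=\spn\{\allones\}\oplus C_2''$ with $C_2''$ a uniformly random $(k_2-1)$-dimensional code; a nonzero codeword of $C_2$ is $\alpha\allones+v$ with $v\in C_2''$, the case $\alpha=0$ is controlled by Gilbert--Varshamov for $C_2''$, and for $\alpha\neq0$ a union bound over the $\leq q^{k_2}$ relevant pairs, using that each $\alpha\allones+v$ is marginally uniform in $\bF_q^n$, gives failure probability $\leq q^{k_2}q^{-(1-H_q(\delta))n}=q^{-\Omega(n)}$ as soon as $H_q(\delta)<\epsilon$.

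For the main regime $s,t=\Omega(n)$, I would union-bound over the choices of $S$ and $T$ (at most $2^{O(n)}=q^{O(n)}$ of them) the probability that a minimal bad witness with column support $S$ and row support $T$ exists. This is precisely the estimate in \cite{kalachev_two-sided_2023}: the contribution of the (unconstrained) $C_1$ is unchanged, and the contribution of $C_2$ changes only by conditioning on $\{\allones\in C_2\}$, an event of probability $q^{-(n-k_2)+O(1)}$, so the conditioning inflates the per-support-pair probability by at most a factor $q^{n}$. Since $C_1$ and $C_2$ are independent, for each fixed $(S,T)$ the conditional failure probability is therefore $\leq q^{n}\cdot(\text{unconditional failure probability})$, and one checks, as in \cite{kalachev_two-sided_2023}, that the unconditional per-support-pair failure probability is $q^{-C(\rho)n}$ with $C(\rho)\to\infty$ as $\rho\to0$. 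Taking $\rho=\rho(\epsilon)$ small enough that $C(\rho)$ beats the $O(1)$ from the union-bound exponent and from the conditioning factor, and summing over the $O(n^2)$ values of $(s,t)$, completes the proof; combined with Lemma~\ref{lem:subcodeexp} this yields Proposition~\ref{prop:randomprodplanted}.

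The main obstacle is this last step, which requires reopening the internal bookkeeping of \cite{kalachev_two-sided_2023} rather than using its statement as a black box. One must verify that (a) their first-moment bound really does give a per-support-pair failure probability of the form $q^{-C(\rho)n}$ with the exponent growing without bound as $\rho\to0$, so that the fixed $q^{O(n)}$ penalties from the union bound and from conditioning on $\{\allones\in C_2\}$ are absorbed; and (b) every step of that bound that refers to $C_2$ does so through events of the form ``$C_2$ contains a prescribed low-dimensional subspace'' (equivalently, through the joint distribution of a bounded number of random codewords of $C_2$), so that conditioning on $\{\allones\in C_2\}$ manifestly costs only the factor $1/\Pr[\allones\in C_2]$. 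Both are straightforward but must be checked against the details of the argument in \cite{kalachev_two-sided_2023}.
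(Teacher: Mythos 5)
The skeleton of your argument is right for the event that the paper calls $E_1$ (a low-weight, high-rank witness): there $\Pr[E_1]\leq 5q^{-\epsilon^2n^2/8}$ has $n^2$ in the exponent, so the crude conditioning bound $\Pr[E_1\mid F]\leq\Pr[E_1]/\Pr[F]\leq q^n\Pr[E_1]$ is lossless enough, and this is exactly how the paper handles it. But your two flagged items (a) and (b) at the end are where the plan breaks down, and they are not in fact straightforward.

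The second ingredient in \cite{kalachev_two-sided_2023} is not a distance estimate; it is the event $E_2$ that $C_1$ or $C_2$ fails the stronger ``property $(\ast)$'', which controls $\dim(C\cap V)$ for \emph{every} $\alpha$-sparse $m$-dimensional subspace $V$ simultaneously, not just $m=1$. The unconditional bound there is $\Pr[E_2]\leq 16q^{-\epsilon n/8}$, which is only singly exponential and whose exponent depends on $\epsilon$, not on $\rho$ — so it does \emph{not} satisfy your assumption (a) that the exponent can be driven to infinity by shrinking $\rho$. Multiplying this by the $q^{n}$ conditioning factor from $\Pr[\allones\in C_2]\geq q^{-n}$ gives $q^{n(1-\epsilon/8)}$, which diverges for any small $\epsilon$, so the ``manifestly costs only $1/\Pr[F]$'' step in your (b) produces a vacuous bound. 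Your GV computation for the planted $C_2$ only re-establishes distance of $C_2$, which is the $m=1$ case of property $(\ast)$ and not enough.

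What is actually needed, and what the paper supplies as Lemma~\ref{lem:star}, is a dedicated argument for property $(\ast)$ of the planted code that does not pay the $1/\Pr[F]$ penalty. The paper achieves this via Lemma~\ref{lem:plantedint}: since $\allones\notin W$ for a suitable hyperplane $W$, the conditioned code $C_2$ decomposes as $\spn\{\allones\}\oplus(C_2\cap W)$ with $C_2\cap W$ genuinely uniform, so one can apply the unconditional intersection bound (Lemma~\ref{lem:intbound}) to a truly random subspace $U=\spn\{c,C_2\cap W\}$ and then pay only an additive $+1$ in dimension. This yields $\Pr[E_2\mid F]\leq 16q^{-\epsilon n/64}$, with a weaker constant but still vanishing. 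The case split $m<1/\alpha$ versus $m\geq 1/\alpha$ in Lemma~\ref{lem:star} is there precisely to locate such a hyperplane $W$; none of this is captured by the generic conditioning argument you propose. So the main missing idea is a replacement of naive conditioning with the coupling-to-a-uniform-code trick for the sparse-subspace events.
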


We first show how Proposition~\ref{prop:randomprodhalf} implies Proposition~\ref{prop:randomprodplanted}, and then we will prove Proposition~\ref{prop:randomprodhalf}

\begin{proof}[Proof of Proposition~\ref{prop:randomprodplanted}]
  Let $C_1,C_2\subseteq\bF_q^n$ be random codes as in the statement of Proposition~\ref{prop:randomprodplanted}. Let $p$ be the probability that a random code in $\bF_q^n$ of dimension $k_1'=k_1-1$ contains $\allones\in\bF_q^n$. We draw a $k_1'$-dimensional subcode $C_1'\subseteq C_1$ as follows:
  \begin{itemize}
  \item With probability $1-p$, $C_1'$ is drawn uniformly at random from the codimension-1 subcodes of $C_1$ that do not contain $\allones$.
  \item With probability $p$, $C_1'$ is drawn uniformly at random from the codimension-1 subcodes of $C_1$ that contain $\allones$.
  \end{itemize}
  Then because $C_1$ is a random $k_1$-dimensional code containing $\allones$, by construction $C_1'$ is a uniformly random $k_1'$-dimensional code (correlated with $C_1$), as can be seen by conditioning on the events that $\allones\in C_1'$ and $\allones\notin C_1'$. Thus Proposition~\ref{prop:randomprodhalf} implies that there is a sufficiently small constant $\rho'=\rho'(\epsilon)>0$ such that $({C_1'}^\perp,C_2)$ is $\rho'$-product-expanding with probability $\rightarrow 1$ as $n\rightarrow\infty$. Lemma~\ref{lem:subcodeexp} then implies that $(C_1^\perp,C_2)$ is $\rho:={\rho'}^2/2$-product-expanding with probability $\rightarrow 1$ as $n\rightarrow\infty$.

  By similar reasoning as above, $(C_1,C_2^\perp)$ is also $\rho={\rho'}^2/2$-product-expanding with probability $\rightarrow 1$ as $n\rightarrow\infty$. Then the desired result follows by union bounding over the events that both $(C_1^\perp,C_2)$ and $(C_1,C_2^\perp)$ are $\rho$-product expanding.
\end{proof}

It remains to prove Proposition~\ref{prop:randomprodhalf}. For this purpose, we adapt the proof of Proposition~\ref{prop:randomprod} given by Kalachev and Panteleev \cite{kalachev_two-sided_2023}. Specifically, \cite{kalachev_two-sided_2023} show that a pair of random codes $(C_1,C_2)$ of any fixed rates $<1$ are product-expanding with high probability for sufficiently large block lengths; we want to show that $(C_1,C_2)$ are still product-expanding with high probability when conditioning on the event that $\allones\in C_2$. To avoid redundancy with \cite{kalachev_two-sided_2023}, we will simply describe the necessary modifications to their proof of Proposition~\ref{prop:randomprod} (Theorem 1 in their paper \cite{kalachev_two-sided_2023}) for this case where we condition on $\allones\in C_2$.

To begin, we recall the following definitions from \cite{kalachev_two-sided_2023}. Below we denote the $q$-ary entropy function by $H_q:[0,1]\rightarrow[0,1]$, so that
\begin{equation*}
  H_q(x) = x\log_q(q-1)-x\log_qx-(1-x)\log_q(1-x).
\end{equation*}
We also say that a subspace $V\subseteq\bF_q^n$ is \textit{$\alpha$-sparse} if it is spanned by vectors of Hamming weight $\leq\alpha n$.

\begin{definition}[Property $(\ast)$ \cite{kalachev_two-sided_2023}]
  A code $C\subseteq\bF_q^n$ of dimension $n-r$ has \textbf{property $(\ast)$} if the following holds for $\alpha=H_q^{-1}(r/8n)$: for every $m\in\{1,\dots,r\}$ and every $\alpha$-sparse $m$-dimensional subspace $V\subseteq\bF_q^n$, then $\dim(C\cap V)<m/2$.
\end{definition}

\cite{kalachev_two-sided_2023} prove Proposition~\ref{prop:randomprod} (their Theorem~1) by consider the following two events $E_1,E_2$ for a pair of random codes $C_1,C_2\subseteq\bF_q^n$ of respective dimensions $k_1,k_2\in(\epsilon n,(1-\epsilon)n)$. Recall here that $\epsilon>0$ is any fixed constant, and $\rho=\rho(\epsilon)>0$ is a sufficiently small constant depending only on $\epsilon$.
\begin{enumerate}
\item $E_1$ is the event that there exists $x\in C_1\otimes\bF_q^\perp+\bF_q^\perp\otimes C_2$ of weight $|x|\leq 2\rho n^2$ and of rank $\rank(x)\geq\epsilon$.
\item $E_2$ is the event that $C_1$ or $C_2$ does not have property ($\ast$).
\end{enumerate}
To prove Proposition~\ref{prop:randomprod}, \cite{kalachev_two-sided_2023} show the following:

\begin{lemma}[\cite{kalachev_two-sided_2023}]
  \label{lem:kpproof}
  If $C_1,C_2\subseteq\bF_q^n$ are random codes of respective dimensions $k_1,k_2\in(\epsilon n,(1-\epsilon)n)$, then for all sufficiently large $n$,
  \begin{enumerate}
  \item\label{it:E1} $\Pr[E_1] \leq 5q^{-\epsilon^2n^2/8}$.
  \item\label{it:E2} $\Pr[E_2] \leq 16q^{-\epsilon n/8}$.
  \item\label{it:exp} If $E_1,E_2$ do not occur then $(C_1,C_2)$ is $\rho$-product-expanding.
  \end{enumerate}
\end{lemma}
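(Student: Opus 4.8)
The plan is to reproduce the argument of Kalachev and Panteleev~\cite{kalachev_two-sided_2023}, of which this lemma is the technical core, proving the three claims in turn. \textbf{Bounding $\Pr[E_1]$:} the starting point is the linear-algebraic identity
\[
  x \in C_1\otimes\bF_q^n + \bF_q^n\otimes C_2 \iff x(C_2^\perp)\subseteq C_1,
\]
where $x(C_2^\perp)=\{xv : v\in C_2^\perp\}$; this follows by writing a hypothetical decomposition $x=c+r$ in terms of a generator matrix of $C_1$ and a parity-check matrix of $C_2$ and eliminating the free parameters. Fix any matrix $x$ with $|x|\le 2\rho n^2$ and $\operatorname{rank}(x)\ge\epsilon n$, and bound $\Pr_{C_1,C_2}[x\in C_1\otimes\bF_q^n+\bF_q^n\otimes C_2]$. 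Conditioning first on $C_2$: since $x$ has rank $\ge\epsilon n$ and $C_2^\perp$ is uniformly random of codimension $k_2\le(1-\epsilon)n$, standard concentration for the dimension of the intersection of a fixed subspace ($\ker x$) with a random one gives $\dim x(C_2^\perp)=\dim C_2^\perp-\dim(C_2^\perp\cap\ker x)\ge\epsilon n/2$ except with probability $q^{-\Omega(n^2)}$. Conditioned on that, the chance over a uniformly random $C_1$ of dimension $k_1\le(1-\epsilon)n$ that the fixed $\ge\epsilon n/2$-dimensional subspace $x(C_2^\perp)$ lies inside $C_1$ is at most $q^{-(\epsilon n/2)(n-k_1)}\le q^{-\epsilon^2n^2/2}$. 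So this probability is $q^{-\Omega(n^2)}$ for each fixed admissible $x$. A union bound over the at most $q^{H_q(2\rho)n^2}$ matrices of weight $\le 2\rho n^2$, with $\rho=\rho(\epsilon)$ small enough that $H_q(2\rho)$ is a small fraction of $\epsilon^2$, then yields $\Pr[E_1]\le 5q^{-\epsilon^2n^2/8}$.

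\textbf{Bounding $\Pr[E_2]$:} by a union bound over the two codes it suffices to bound the probability that a single random code $C$ of dimension $k\in(\epsilon n,(1-\epsilon)n)$ fails property $(\ast)$; write $r:=n-k\ge\epsilon n$ and $\alpha:=H_q^{-1}(r/8n)$. Failure means some $m\in\{1,\dots,r\}$ admits an $\alpha$-sparse $m$-dimensional subspace $V$ with $\dim(C\cap V)\ge m/2$. Every $\alpha$-sparse $m$-dimensional subspace has a basis of $m$ vectors of weight $\le\alpha n$, so there are at most $q^{nmH_q(\alpha)}=q^{rm/8}$ of them; and for a fixed $m$-dimensional $V$, counting $(m/2)$-dimensional subspaces of $V$ and the chance that each lies inside the random $k$-dimensional $C$ gives $\Pr[\dim(C\cap V)\ge m/2]\le q^{-mr/4}$ (using $m\le r$). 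Summing $q^{rm/8}\cdot q^{-mr/4}=q^{-mr/8}$ over $m\ge 1$ gives a geometric series bounded by $2q^{-r/8}\le 2q^{-\epsilon n/8}$ for one code, hence $\Pr[E_2]\le 16q^{-\epsilon n/8}$. The parameters $r/8n$ and $m/2$ built into property $(\ast)$ are precisely what makes these two exponents separate.

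\textbf{The deterministic implication, and the main obstacle:} this step is the technical heart, and I expect it to be the hardest. Assume $\neg E_1$ and $\neg E_2$ and fix $x\in C_1\otimes\bF_q^n+\bF_q^n\otimes C_2$; the goal is $c\in C_1\otimes\bF_q^n$, $r\in\bF_q^n\otimes C_2$ with $x=c+r$ and $|x|\ge\rho n(|c|_{\text{col}}+|r|_{\text{row}})$. If $|x|\ge 2\rho n^2$ any decomposition works, since $|c|_{\text{col}}+|r|_{\text{row}}\le 2n$. So assume $|x|<2\rho n^2$; then $\neg E_1$ forces $\operatorname{rank}(x)<\epsilon n$. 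Among all decompositions (which differ from one another by elements of $C_1\otimes C_2$) choose one minimizing $N:=|c|_{\text{col}}+|r|_{\text{row}}$; it remains to show $N\le|x|/(\rho n)$. I would argue by contradiction: at most $O(\sqrt\rho\,n)$ columns and rows of $x$ can be ``heavy'' (weight $\ge\sqrt\rho\,n$), so if $N$ were large then many nonzero columns of $c$ and nonzero rows of $r$ would have weight $<\sqrt\rho\,n$; minimality of the decomposition means these light columns of $c$ genuinely cannot be absorbed into $r$, and translating this into a linear-algebraic statement — using $\operatorname{rank}(x)<\epsilon n\le r$ to keep the relevant dimension in the range where property $(\ast)$ applies — exhibits a $\sqrt\rho$-sparse subspace of $C_1$ whose dimension exceeds half its own, contradicting property $(\ast)$ for $C_1$; the symmetric argument with $r$ and $C_2$ handles the rows. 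Choosing $\rho$ small enough that $\sqrt\rho\le H_q^{-1}(\epsilon/8)$ (a lower bound on the sparsity parameter valid for both codes) lines up the thresholds. The delicate point — and the part where one must genuinely follow \cite{kalachev_two-sided_2023} rather than just sketch — is making ``minimality prevents absorption'' into a precise claim about the dimension of the intersection of $C_1$ (resp.\ $C_2$) with an explicitly constructed sparse subspace.
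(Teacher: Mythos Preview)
The paper does not prove this lemma at all: it is stated as a summary of what \cite{kalachev_two-sided_2023} prove (their Theorem~1 and the supporting lemmas), and is invoked as a black box to deduce Proposition~\ref{prop:randomprod} and to motivate the planted analogue Lemma~\ref{lem:kpproofF}. So there is no in-paper proof to compare against beyond the citation.

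Your sketch is a faithful outline of the Kalachev--Panteleev argument, and in that sense goes beyond what the paper does. Items~\ref{it:E1} and~\ref{it:E2} are essentially complete; your computation for item~\ref{it:E2} in particular matches the one the paper carries out explicitly in the planted setting (Lemma~\ref{lem:star}), and your invocation of the intersection bound (Lemma~\ref{lem:intbound}) with $u=k$, $v=m$, $w=m/2$ gives exactly the $q^{-mr/4}$ you claim. For item~\ref{it:exp} you correctly identify the structure --- reduce to $|x|<2\rho n^2$, use $\neg E_1$ to force small rank, pick a minimal decomposition, and derive a contradiction with property~$(\ast)$ --- and you are honest that the last step (turning ``minimality prevents absorption'' into a sparse subspace of too-large intersection with $C_i$) is where one must follow \cite{kalachev_two-sided_2023} in detail rather than sketch. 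That self-assessment is accurate; nothing in your outline is wrong, but item~\ref{it:exp} as written is a plan rather than a proof.
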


Lemma~\ref{lem:kpproof} directly implies Proposition~\ref{prop:randomprod}. Similarly, the following lemma directly implies Proposition~\ref{prop:randomprodhalf}:

\begin{lemma}[\cite{kalachev_two-sided_2023}]
  \label{lem:kpproofF}
  If $C_1,C_2\subseteq\bF_q^n$ are random codes of respective dimensions $k_1,k_2\in(\epsilon n,(1-\epsilon)n)$, and $F$ denotes the event that $\allones\in C_2$, then for all sufficiently large $n$,
  \begin{enumerate}
  \item\label{it:E1F} $\Pr[E_1|F] \leq 5q^{-\epsilon^2n^2/8+n}$.
  \item\label{it:E2F} $\Pr[E_2|F] \leq 16q^{-\epsilon n/64}$.
  \item\label{it:expF} If $E_1,E_2$ do not occur then $(C_1,C_2)$ is $\rho$-product-expanding.
  \end{enumerate}
\end{lemma}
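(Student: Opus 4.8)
The plan is to follow the proof of Lemma~\ref{lem:kpproof} from \cite{kalachev_two-sided_2023}, tracking the effect of conditioning on the event $F=\{\allones\in C_2\}$. Item~\ref{it:expF} requires no new argument: it is the assertion that $C_1,C_2$ having property $(\ast)$ together with the negation of $E_1$ forces $\rho$-product-expansion, which is a deterministic implication between codes (Lemma~\ref{lem:kpproof}.\ref{it:exp}) that never refers to how $C_2$ was sampled; if one slightly strengthens the sparsity threshold in property $(\ast)$ as in the refinement below, it is the same implication re-run with a smaller but still positive, $\epsilon$-dependent constant $\rho$. So the work is confined to re-deriving the probability bounds, Items~\ref{it:E1F} and \ref{it:E2F}.

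For Item~\ref{it:E1F}, I would use Bayes' rule: $\Pr[E_1\mid F]=\Pr[E_1\cap F]/\Pr[F]\le \Pr[E_1]/\Pr[F]$. Since $E_1$ only concerns the space $C_1\otimes\bF_q^n+\bF_q^n\otimes C_2$, the unconditional bound $\Pr[E_1]\le 5q^{-\epsilon^2n^2/8}$ of Lemma~\ref{lem:kpproof}.\ref{it:E1} still applies. And $\Pr[F]$ is exactly the probability that a uniformly random $k_2$-dimensional subspace of $\bF_q^n$ contains a fixed nonzero vector, which equals $(q^{k_2}-1)/(q^n-1)\ge q^{-n}$. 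Combining gives $\Pr[E_1\mid F]\le 5q^{-\epsilon^2n^2/8+n}$.

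Item~\ref{it:E2F} is the crux. Write $\Pr[E_2\mid F]\le \Pr[C_1\text{ fails }(\ast)\mid F]+\Pr[C_2\text{ fails }(\ast)\mid F]$; the first term equals $\Pr[C_1\text{ fails }(\ast)]$ by independence of $C_1$ from $F$, and is bounded by Lemma~\ref{lem:kpproof}.\ref{it:E2}. The second term cannot be bounded by $\Pr[C_2\text{ fails }(\ast)]/\Pr[F]$, since $\Pr[F]$ may be $q^{-\Theta(n)}$; instead I would redo the union bound of \cite{kalachev_two-sided_2023} in the conditioned model. Recall $C_2$ fails $(\ast)$ iff for some $m\in\{1,\dots,r_2\}$, with $r_2=n-k_2\in[\epsilon n,(1-\epsilon)n]$, and some $\alpha$-sparse $m$-dimensional $V$, there is a $\lceil m/2\rceil$-dimensional $W\le V$ with $W\subseteq C_2$. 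Split the union over $V$ by whether $\allones\in V$. If $\allones\notin V$ then $\allones\notin W$, so $\Pr[W\subseteq C_2\mid F]=\Pr[W+\spn\{\allones\}\subseteq C_2]/\Pr[F]$, which is at most $O(1)$ times the corresponding unconditional term; these terms therefore contribute at most a constant times the unconditional bound of \cite{kalachev_two-sided_2023}. If $\allones\in V$, then since $\allones\in C_2$ always holds under $F$ we may take $\allones\in W$, whence $\Pr[W\subseteq C_2\mid F]=\Pr[W\subseteq C_2]/\Pr[F]$, larger than the unconditional term by a factor $O(q^{r_2})$; but $\allones$ lying in the $\alpha$-sparse space $V$ forces $m=\dim V\ge 1/\alpha$, since any spanning set of $V$ by vectors of weight $\le\alpha n$ must have supports covering $[n]$. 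Feeding in the standard counts (at most $q^{nmH_q(\alpha)}$ choices of $\alpha$-sparse $V$, at most $q^{m^2/4}$ choices of $W$, and $\Pr[W\subseteq C_2]\le q^{-r_2\lceil m/2\rceil+O(m)}$), and using $nH_q(\alpha)\le r_2/8$ and $m\le r_2$, each $\allones\in V$ term has $q$-exponent at most $r_2(c-m/8)+O(m)$ for an absolute constant $c$, so once $1/\alpha$ exceeds a large enough constant these terms form a geometric series summing to $q^{-\Omega(\epsilon n)}$.

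The remaining subtlety is that $\alpha=H_q^{-1}(r_2/(8n))$ is bounded above by a constant $<1$ that, for large $q$, approaches $1/8$, so $1/\alpha$ need not clear the threshold the previous paragraph requires. The fix is to tighten the sparsity parameter in property $(\ast)$ from $H_q^{-1}(r/(8n))$ to $H_q^{-1}(r/(64n))$: this shrinks $\alpha$ (pushing $1/\alpha$ past the needed constant uniformly in $q$) and also improves the count $q^{nmH_q(\alpha)}$, yielding $\Pr[C_2\text{ fails }(\ast)\mid F]\le 16q^{-\epsilon n/64}$; the weakening is harmless for Item~\ref{it:expF} because the \cite{kalachev_two-sided_2023} derivation of product-expansion uses property $(\ast)$ only through finitely many inequalities that survive, with a smaller $\epsilon$-dependent $\rho$, when the threshold is any fixed constant multiple of $r/n$. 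I expect the main obstacle — and the only place genuinely new work is needed — to be precisely the $\allones\in V$ case of this union bound: checking that the forced bound $m\ge 1/\alpha$ is strong enough to absorb the $q^{r_2}$ conditioning penalty, which is what dictates both the tightened threshold and the degraded exponent $\epsilon n/64$.
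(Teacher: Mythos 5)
Your treatment of Items~\ref{it:E1F} and \ref{it:expF} matches the paper exactly: Item~\ref{it:expF} is deterministic and unchanged, and Item~\ref{it:E1F} follows from $\Pr[E_1\mid F]\le\Pr[E_1]/\Pr[F]\le\Pr[E_1]\cdot q^n$. For Item~\ref{it:E2F}, however, you take a genuinely different route. The paper keeps the Kalachev--Panteleev union bound skeleton and instead replaces the probability estimate by a comparison argument: it first proves (their Lemma~\ref{lem:plantedint}) that for any fixed $(n-1)$-dimensional $W$ not containing $\allones$, the intersection $C\cap W$ is a \emph{uniformly random} $(k-1)$-dimensional subspace of $W$; taking $W=\Pi_i^{-1}(0)$ for a coordinate $i$ on which all of $V$ vanishes, they extend $C\cap W$ to a uniformly random $k$-dimensional $U\subseteq\bF_q^n$, reduce $\dim(C\cap V)$ to $\dim(U\cap V)$ (up to $+1$), and then cite the \emph{unconditional} intersection bound (Lemma~\ref{lem:intbound}) black-box. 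Their case split is on $m<1/\alpha$ versus $m\ge1/\alpha$ (in the first case $V\subseteq\Pi_i^{-1}(0)$ exactly, in the second one loses a dimension). You instead redo the union bound directly in the conditioned model, splitting by $\allones\in V$ versus $\allones\notin V$, with your observation that $\allones\in V$ forces $m\ge 1/\alpha$ doing the same work as the paper's case split. Both routes are sound; yours is more elementary and stays closer to the original Kalachev--Panteleev computation, while the paper's isolates the effect of conditioning into one clean structural lemma and then reuses Lemma~\ref{lem:intbound} without modification.

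The issue is your closing ``fix,'' which is both unnecessary and, as stated, introduces a gap. It is unnecessary because $r_2=n-k_2\le(1-\epsilon)n$ gives $\alpha=H_q^{-1}(r_2/8n)\le H_q^{-1}((1-\epsilon)/8)<(1-\epsilon)/8$ (using that $H_q(x)>x$ on $(0,1-1/q)$, so $H_q^{-1}(y)<y$), hence $1/\alpha>8/(1-\epsilon)>8$ uniformly in $q$, which together with integrality of $m$ always gives $m\ge 9$ in your $\allones\in V$ branch; your exponent $r_2(1-m/8)+O(m)$ is then $\le -r_2/8+O(1)$ and the geometric sum closes with no change of threshold. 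And it introduces a gap because shrinking $\alpha$ from $H_q^{-1}(r/8n)$ to $H_q^{-1}(r/64n)$ \emph{weakens} property $(\ast)$ (fewer subspaces are $\alpha$-sparse, so the universal condition is easier to satisfy), which weakens the event $\neg E_2$; the implication in Item~\ref{it:expF} would then have to be re-derived from the weaker hypothesis, and you only assert, without checking, that the \cite{kalachev_two-sided_2023} derivation of product-expansion ``survives with a smaller $\rho$.'' The paper sidesteps this entirely by never touching the definition of $(\ast)$, so Item~\ref{it:expF} is literally a citation; your version would owe the reader that verification.
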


Item~\ref{it:exp} in Lemma~\ref{lem:kpproof} and item~\ref{it:expF} in Lemma~\ref{lem:kpproofF} are identical. Furthermore, as $\Pr[F]\geq q^{-n}$, then $\Pr[E_1|F]\leq\Pr[E_1]/\Pr[F]\leq\Pr[E_1]\cdot q^n$, so item~\ref{it:E1} in Lemma~\ref{lem:kpproof} directly implies item~\ref{it:E1F} in Lemma~\ref{lem:kpproofF}.

Therefore to prove Lemma~\ref{lem:kpproofF} and therefore Proposition~\ref{prop:randomprodhalf}, it remains for us to prove item~\ref{it:E2F} of Lemma~\ref{lem:kpproofF}. That is, it suffices to show that conditioned on $F$, each of $C_1$ and $C_2$ has property $(\ast)$ with probability $\geq 1-8q^{-\epsilon n/64}$. As $C_1$ is a random code, Lemma~5 in \cite{kalachev_two-sided_2023} (which they use to show item~\ref{it:E2} in Lemma~\ref{lem:kpproof}) implies that $C_1$ has property $(\ast)$ with probability $\geq 1-8q^{-\epsilon n/8}$. Thus the following lemma completes the proof of Lemma~\ref{lem:kpproofF}, and therefore of Proposition~\ref{prop:randomprodhalf}, and in turn of Proposition~\ref{prop:randomprodplanted}.

\begin{lemma}
  \label{lem:star}
  If $C\subseteq\bF_q^n$ is drawn uniformly at random from the set of $k=(n-r)$-dimensional codes that contain $\allones\in\bF_q^n$, then $C$ has property $(\ast)$ with probability at least
  \begin{equation*}
    1-\frac{4q^{-r/64}}{1-q^{-r/64}}.
  \end{equation*}
\end{lemma}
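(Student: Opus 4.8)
The plan is to mimic the proof of Lemma~5 of \cite{kalachev_two-sided_2023} (which establishes property~$(\ast)$ for a uniformly random code), carefully tracking the single dimension that conditioning on $\allones\in C$ ``uses up''. Write $r$ for the codimension, so $\dim C=n-r$, and $\alpha=H_q^{-1}(r/8n)$. The first step, and the one that makes everything work, is the elementary observation that $\alpha<1/8$: since $r\le n$ we have $r/8n\le 1/8$, while
\[
H_q(1/8)=\tfrac18\log_q(q-1)+\tfrac18\log_q 8+\tfrac78\log_q\tfrac87>\tfrac18\log_q\!\big(8(q-1)\big)\ge\tfrac18,
\]
using $8(q-1)\ge q$ for every prime power $q\ge2$; as $H_q$ is strictly increasing on $[0,1-1/q]$ this gives $\alpha=H_q^{-1}(r/8n)<H_q^{-1}(H_q(1/8))=1/8$. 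The consequence I will use repeatedly is that every element of an $m$-dimensional $\alpha$-sparse subspace has Hamming weight at most $m\alpha n<mn/8$, so such a subspace can contain $\allones$ only when $m\ge 9$.

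Next I would set up a union bound over the pairs $(m,V)$ with $m\in\{1,\dots,r\}$ and $V$ an $\alpha$-sparse $m$-dimensional subspace. Fix such a pair and set $j=\lceil m/2\rceil-[\allones\in V]$, which is always $\ge1$ (if $\allones\notin V$ then $j=\lceil m/2\rceil\ge1$; if $\allones\in V$ then $m\ge9$ by the above, so $j\ge4$). The point is that if $\dim(C\cap V)\ge m/2$ then $C$ must contain a $j$-dimensional subspace $U\subseteq V$ with $\allones\notin U$: if $\allones\notin V$ take any $\lceil m/2\rceil$-dimensional subspace of $C\cap V$, and if $\allones\in V$ then $\spn\{\allones\}\subseteq C\cap V$ (as $\allones\in C$ always), so take $U$ inside a complement of $\spn\{\allones\}$ in $C\cap V$, of dimension $\ge\lceil m/2\rceil-1$. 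Now for a fixed $j$-dimensional $U$ with $\allones\notin U$, a uniformly random $(n-r)$-dimensional $C\ni\allones$ contains $U$ iff it contains the $(j+1)$-dimensional space $U+\spn\{\allones\}$, which (counting subspaces of $\bF_q^n$ through a fixed subspace) happens with probability $\binom{n-j-1}{r}_q/\binom{n-1}{r}_q\le q^{-jr}$ by the elementary estimate $\binom{M-s}{r}_q/\binom{M}{r}_q\le q^{-sr}$. Since $V$ has at most $\binom{m}{j}_q\le 4q^{j(m-j)}$ subspaces of dimension $j$, a union bound gives $\Pr[\dim(C\cap V)\ge m/2\mid\allones\in C]\le 4q^{\,j(m-j)-jr}$.

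Then I would union bound over $V$. Every $m$-dimensional $\alpha$-sparse subspace has a basis of $m$ weight-$\le\alpha n$ vectors, and there are at most $q^{H_q(\alpha)n}=q^{r/8}$ such vectors, so at most $q^{mr/8}$ such subspaces. Splitting according to whether $\allones\in V$: for $V$ with $\allones\notin V$ we have $j=\lceil m/2\rceil\ge m/2$, so the exponent is at most $mr/8+m^2/4-mr/2=\tfrac14 m(m-\tfrac32 r)\le-mr/8$ (using $m\le r$); for $V$ with $\allones\in V$ — forcing $m\ge9$ — we have $j=\lceil m/2\rceil-1\ge m/2-1$, so the exponent is at most $mr/8+m^2/4-(m/2-1)r=\tfrac14 m(m-\tfrac32 r)+r\le-mr/8+r$. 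Hence for each $m$,
\[
\Pr\big[\exists\,\alpha\text{-sparse }m\text{-dim }V:\ \dim(C\cap V)\ge m/2\ \big|\ \allones\in C\big]\ \le\ 4q^{-mr/8}\ +\ 4q^{-mr/8+r}\cdot[m\ge9].
\]
Summing over $m\in\{1,\dots,r\}$ (the second term telescopes: $q^r\sum_{m\ge9}q^{-mr/8}\le q^{-r/8}/(1-q^{-r/8})$) bounds $\Pr[C\text{ lacks }(\ast)\mid\allones\in C]$ by $8q^{-r/8}/(1-q^{-r/8})$, which is at most $\tfrac{4q^{-r/64}}{1-q^{-r/64}}$: when the latter is $<1$ we have $q^{-r/64}<1/5$, hence $q^{-r/8}<1/2$ and $8q^{-r/8}/(1-q^{-r/8})\le16q^{-r/8}\le 4q^{-r/64}$, while otherwise the claimed bound is vacuous. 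The only genuinely new ingredient over \cite{kalachev_two-sided_2023}, and the step I expect to be the real content, is handling the fact that conditioning on $\allones\in C$ could a priori make a $j$-dimensional intersection with some $V$ cost only $q^{-(j-1)r}$ instead of $q^{-jr}$; the bound $\alpha<1/8$ is exactly what confines that loss to the harmless regime $m\ge 9$, where $j-1\ge m/2-1$ is still essentially $m/2$ and the union bound still closes.
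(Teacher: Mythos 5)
Your proof is correct, and it takes a genuinely different route from the paper's. The paper reduces the planted case back to the unplanted one: for each pair $(m,V)$ it constructs an auxiliary \emph{uniformly} random $k$-dimensional subspace $U$ from $C$ (via Lemma~\ref{lem:plantedint}, using the existence of a vanishing coordinate of $V$ when $m<1/\alpha$, and then a somewhat delicate probabilistic extension of $C\cap\Pi_i^{-1}(0)$ to $U$), and then applies the off-the-shelf intersection bound Lemma~\ref{lem:intbound} to $U$. The case split there is $m<1/\alpha$ versus $m\ge 1/\alpha$, with the second case incurring a loss through $\dim(C\cap V)\le\dim(U\cap V)+1$. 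You instead bypass both Lemma~\ref{lem:plantedint} and the auxiliary $U$ entirely: you bound $\Pr[U\subseteq C]$ directly for a fixed $j$-dimensional $U$ with $\allones\notin U$ by the equivalence $C\supseteq U\iff C\supseteq U+\spn\{\allones\}$ and a quotient-by-$\spn\{\allones\}$ count of Gaussian binomials, and your case split is $\allones\in V$ versus $\allones\notin V$ rather than the paper's threshold on $m$. Your strict bound $\alpha<1/8$ plays exactly the role that the paper's $\alpha\le 1/8$ plays, but you use it to show $\allones\in V$ forces $m\ge 9$, confining the ``$-1$ in $j$'' loss to a regime where it is absorbed. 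Your intermediate bound $8q^{-r/8}/(1-q^{-r/8})$ is actually sharper than the paper's $4q^{-r/64}/(1-q^{-r/64})$, and the final reduction to the stated form (via ``otherwise the bound is vacuous'') is a legitimate, if slightly inelegant, closing step. Overall your argument is more elementary and self-contained, at the cost of redoing some subspace counting that the paper offloads to the cited Lemma~\ref{lem:intbound}.
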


To prove Lemma~\ref{lem:star}, we will use the following technical result, which appears as Lemma~3 in \cite{kalachev_two-sided_2023}.

\begin{lemma}[\cite{kalachev_two-sided_2023}]
  \label{lem:intbound}
  For every $v$-dimensional subspace $V\subseteq\bF_q^n$, the probability that a uniformly random $u$-dimensional subspace $U\subseteq\bF_q^n$ has $\dim(U\cap V)\geq w$ is $\leq 4q^{-w(n+w-v-u)}$.
\end{lemma}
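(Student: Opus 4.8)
The plan is to prove Lemma~\ref{lem:intbound} by a first-moment (union bound) argument over the $w$-dimensional subspaces of $V$, combined with exact manipulations of Gaussian binomial coefficients. Throughout I would write $\binom{m}{k}_q$ for the number of $k$-dimensional subspaces of $\bF_q^m$. First I would note that if $\dim(U\cap V)\geq w$, then $U$ contains some $w$-dimensional subspace $W$ with $W\subseteq V$; hence by a union bound $\Pr[\dim(U\cap V)\geq w]\leq \binom{v}{w}_q\cdot\Pr[W_0\subseteq U]$, where $W_0$ is an arbitrary fixed $w$-dimensional subspace and the probability $\Pr[W\subseteq U]$ is independent of $W$ by transitivity of $\GL_n(\bF_q)$ on $w$-dimensional subspaces (which preserves the uniform distribution on $u$-dimensional subspaces).

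Next I would compute the per-subspace probability exactly. Since the $u$-dimensional subspaces containing $W_0$ are in bijection with the $(u-w)$-dimensional subspaces of $\bF_q^n/W_0\cong\bF_q^{n-w}$, one gets $\Pr[W_0\subseteq U]=\binom{n-w}{u-w}_q/\binom{n}{u}_q$, which by the flag-counting identity $\binom{n}{u}_q\binom{u}{w}_q=\binom{n}{w}_q\binom{n-w}{u-w}_q$ (count pairs $W\subseteq U\subseteq\bF_q^n$ with the indicated dimensions two ways) equals $\binom{u}{w}_q/\binom{n}{w}_q$. Writing both as products $\prod_{i=0}^{w-1}(q^{a-i}-1)/(q^{w-i}-1)$ with $a=u$ and $a=n$ respectively, the common denominators cancel, leaving $\prod_{i=0}^{w-1}(q^{u-i}-1)/(q^{n-i}-1)$. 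Bounding each factor by $q^{a}-1\leq q^{a-b}(q^{b}-1)$ for $a\leq b$ (valid since $u\leq n$; and if $w>u$ the whole product is $0$ and there is nothing to prove) gives $\Pr[W_0\subseteq U]\leq q^{-w(n-u)}$, with no constant lost.

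Finally I would apply the standard estimate $\binom{v}{w}_q\leq q^{w(v-w)}\big/\prod_{j\geq 1}(1-q^{-j})\leq 4\,q^{w(v-w)}$, where the last inequality uses $\prod_{j\geq 1}(1-q^{-j})\geq\prod_{j\geq 1}(1-2^{-j})>1/4$ for every $q\geq 2$. Multiplying the two bounds yields $\Pr[\dim(U\cap V)\geq w]\leq 4\,q^{w(v-w)-w(n-u)}=4\,q^{-w(n+w-v-u)}$, as claimed. In the degenerate range $n+w-v-u\leq 0$ (equivalently, $\dim(U\cap V)\geq u+v-n\geq w$ holds deterministically) the stated bound already exceeds $1$, so it is trivially true there.

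I do not anticipate a genuine obstacle — the argument is short and self-contained — but the one point that demands care is the bookkeeping of the multiplicative constant. To land on the constant $4$ rather than something like $16$, the ratio $\binom{u}{w}_q/\binom{n}{w}_q$ must be estimated termwise after the denominators cancel, rather than bounding its numerator and denominator separately; I would make sure to set up that cancellation explicitly before invoking the $q^a-1\le q^{a-b}(q^b-1)$ inequality.
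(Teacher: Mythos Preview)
Your argument is correct. The paper does not actually prove this lemma; it quotes it verbatim as Lemma~3 of \cite{kalachev_two-sided_2023} and uses it as a black box in the proof of Lemma~\ref{lem:star}. Your self-contained union-bound-over-$w$-subspaces argument is the standard derivation, and your bookkeeping is clean: the identity $\Pr[W_0\subseteq U]=\binom{u}{w}_q/\binom{n}{w}_q$, the termwise bound $(q^{u-i}-1)/(q^{n-i}-1)\le q^{u-n}$, and the estimate $\binom{v}{w}_q\le 4\,q^{w(v-w)}$ via $\prod_{j\ge1}(1-2^{-j})>1/4$ are all valid and combine to exactly the stated bound with the constant $4$. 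There is nothing further to compare.
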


We will also need the following result on random planted codes.

\begin{lemma}
  \label{lem:plantedint}
  Let $C\subseteq\bF_q^n$ be drawn uniformly at random from the set of $k$-dimensional codes that contain $\allones\in\bF_q^n$. Then for every fixed $(n-1)$-dimensional subspace $W\subseteq\bF_q^n$ such that $\allones\notin W$, the intersection $C\cap W$ is a uniformly random $(k-1)$-dimensional subspace of $W$.
\end{lemma}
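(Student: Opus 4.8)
The plan is to reduce the statement to an elementary bijection argument. Write $\cS$ for the set of all $k$-dimensional subspaces of $\bF_q^n$ containing $\allones$, and $\cT$ for the set of all $(k-1)$-dimensional subspaces of the fixed hyperplane $W$. Since $C$ is uniformly distributed over $\cS$, it suffices to exhibit a bijection $\Phi\colon\cS\to\cT$ given by $\Phi(C)=C\cap W$: then $C\cap W=\Phi(C)$ is uniformly distributed over $\cT$, which is exactly the claim.

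First I would record the basic structural fact that, because $\dim W=n-1$ and $\allones\notin W$, we have $\bF_q^n=W\oplus\langle\allones\rangle$. Next, for any $C\in\cS$, since $\allones\in C\setminus W$ we get $C+W=\bF_q^n$, and hence $\dim(C\cap W)=\dim(C)+\dim(W)-\dim(C+W)=k+(n-1)-n=k-1$; thus $\Phi$ indeed takes values in $\cT$. For the candidate inverse, I would send $U\in\cT$ to $U+\langle\allones\rangle$: since $U\subseteq W$ does not contain $\allones$, this subspace has dimension $k$ and contains $\allones$, so it lies in $\cS$.

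It then remains to check that these two maps are mutually inverse. In one direction, $(U+\langle\allones\rangle)\cap W=U$: if $u+c\allones$ lies in $W$ with $u\in U$, then $c\allones=(u+c\allones)-u\in W$, forcing $c=0$. In the other direction, for $C\in\cS$ the subspace $(C\cap W)+\langle\allones\rangle$ is contained in $C$ (as $C\cap W\subseteq C$ and $\allones\in C$) and has dimension $(k-1)+1=k=\dim(C)$, so it equals $C$. This establishes that $\Phi$ is a bijection and completes the proof. There is essentially no obstacle here: the whole argument is routine linear algebra over $\bF_q$, and the only point that requires a moment's attention is that the inverse map recovers $C$ from $C\cap W$, which is precisely where the hypothesis $\allones\in C$ enters.

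As an alternative (which I would mention but not adopt, since the direct bijection is cleaner), one can argue by symmetry: the subgroup of $\GL_n(\bF_q)$ stabilizing $\allones$ and preserving $W$ is isomorphic to $\GL(W)\cong\GL_{n-1}(\bF_q)$, this group acts transitively on $\cT$, the map $C\mapsto C\cap W$ is equivariant for its action on $\cS$ and $\cT$, and the law of $C$ is invariant under it; hence the law of $C\cap W$ is invariant under a transitive action and therefore uniform.
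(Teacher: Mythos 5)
Your proof is correct and follows essentially the same route as the paper's: both exhibit the bijection $C\mapsto C\cap W$ between $k$-dimensional subspaces containing $\allones$ and $(k-1)$-dimensional subspaces of $W$, with inverse $W'\mapsto\spn\{\allones,W'\}$, and conclude uniformity from uniformity. Your write-up is somewhat more verbose (and the aside about the $\GL_{n-1}(\bF_q)$-equivariance argument is a nice alternative), but the underlying argument matches the paper's.
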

\begin{proof}
  Because $\allones\notin W$ and $\allones\in C$, we must have $\dim(C\cap W)=k-1$, so $C=\spn\{\allones\}\oplus(C\cap W)$. That is, $C$ is uniquely determined from $C\cap W$, and each $(k-1)$-dimensional subspace $W'\subseteq W$ gives a distinct $C=\spn\{\allones,W'\}$. Thus because $C$ is drawn uniformly from its allowable set of vector spaces, $\Pr[C\cap W=W']$ is the same for all $W'$.
\end{proof}

We are now ready to prove Lemma~\ref{lem:star}.

\begin{proof}[Proof of Lemma~\ref{lem:star}]
  We simply modify the proof of Lemma~5 in \cite{kalachev_two-sided_2023} to account for the condition that $\allones\in C$. Recall that $\alpha=H_q^{-1}(r/8n)$. Let $C\subseteq\bF_q^n$ be a uniformly random code that contains $\allones\in\bF_q^n$. We want to show that with high probability over the choice of $C$, it holds for every $m\in\{1,\dots,r\}$ and every $\alpha$-sparse $m$-dimensional subspace $V\subseteq\bF_q^n$ that $\dim(C\cap V)<m/2$.

  Fix such an $m$ and $V$. We will first show that
  \begin{equation}
    \label{eq:dimintbound}
    \Pr[\dim(C\cap V)\geq m/2] \leq 4q^{-\frac{9}{64}mr}.
  \end{equation}
  Union bounding over all $m$ and $V$ with this inequality will then give the desired result.

  To prove~(\ref{eq:dimintbound}), we consider the the cases of small $m$ and large $m$ separately. In both cases, we will use Lemma~\ref{lem:plantedint} to relate $C\cap V$ to the intersection of a truly random code with $V$, so that we can apply Lemma~\ref{lem:intbound} to bound $\Pr[\dim(C\cap V)]\geq m/2$.
  \begin{enumerate}
  \item Assume that $m<1/\alpha$. As $V$ has a basis consisting of $m$ $\alpha$-sparse vectors, all elements of $V$ are supported within $\leq\alpha n\cdot m<n$ components, so there is some component $i\in[n]$ on which all $x\in V$ have $x_i=0$. That is, letting $\Pi_i:\bF_q^n\rightarrow\bF_q$ denote projection onto component $i$, then $V\subseteq\Pi_i^{-1}(0)$.

    Now as $\allones\notin\Pi_i^{-1}(0)$, Lemma~\ref{lem:plantedint} implies that $C\cap\Pi_i^{-1}(0)$ is a uniformly random $(k-1)$-dimensional subspace of $\Pi_i^{-1}(0)$.

    Letting $p$ be the probability that a random $k$-dimensional subspace of $\bF_q^n$ lies inside $\Pi_i^{-1}(0)$, we may sample
    \begin{equation*}
      c \sim \begin{cases}
        \Unif(\Pi_i^{-1}(0)\setminus(C\cap\Pi_i^{-1}(0))),&\text{ with probability }p\\
        \Unif(\Pi_i^{-1}(\bF_q\setminus\{0\})),&\text{ with probability }1-p,
      \end{cases}
    \end{equation*}
    so that the vector space
    \begin{equation*}
      U := \spn\{c,(C\cap\Pi_i^{-1}(0))\}
    \end{equation*}
    is a uniformly random $k$-dimensional subspace of $\bF_q^n$. Then because $V\subseteq\Pi_i^{-1}(0)$ so that
    \begin{align*}
      C\cap V
      &= C\cap\Pi_i^{-1}(0)\cap V \subseteq U\cap V,
    \end{align*}
    it follows by Lemma~\ref{lem:intbound} that
    \begin{align*}
      \Pr[\dim(C\cap V)\geq m/2]
      &\leq \Pr[\dim(U\cap V)\geq m/2] \\
      &\leq 4q^{-m/2(n+m/2-m-k)} \\
      &\leq 4q^{-mr/4},
    \end{align*}
    where the final inequality above holds because $k=n-r$ and $m\in\{1,\dots,r\}$. Thus~(\ref{eq:dimintbound}) holds in this case where $m<1/\alpha$.
  \item Assume that $m\geq 1/\alpha$. Again sampling $U$ as in the $m<1/\alpha$ case above, then $U,C$ are $k$-dimensional spaces with $\dim(U\cap C)\geq k-1$. Thus $\dim(C\cap V)\leq\dim(U\cap V)+1$, so by Lemma~\ref{lem:intbound},
    \begin{align*}
      \Pr[\dim(C\cap V)\geq m/2]
      &\leq \Pr[\dim(U\cap V)\geq m/2-1] \\
      &\leq 4q^{-(m/2-1)(n+m/2-1-m-k)} \\
      &\leq 4q^{-(1/2-\alpha)m(n+(1/2-\alpha)m-m-(n-r))} \\
      &\leq 4q^{-\frac38m\left(r-\frac58m\right)} \\
      &\leq 4q^{-\frac{9}{64}mr}.
    \end{align*}
    where the third inequality above holds because $m\geq 1/\alpha$ and $k=n-r$, and the fourth inequality holds because $\alpha=H_q^{-1}(r/8n)\leq H_q^{-1}(1/8)\leq 1/8$, and the fifth inequality holds because $m\in\{1,\dots,r\}$. Thus~(\ref{eq:dimintbound}) holds in this case where $m\geq 1/\alpha$.
  \end{enumerate}

  Now it is well-known that there are $|\{x\in\bF_q^n:|x|\leq\alpha n\}| \leq q^{H_q(\alpha)n}$ distinct $\alpha$-sparse vectors (see for instance Proposition~3.3.3 of~\cite{guruswami_essential_2022}), so there are at most $q^{mH_q(\alpha)}=q^{mr/8n}$ distinct $\alpha$-sparse $m$-dimensional subspace $V\subseteq\bF_q^n$, where we are using the definition of $\alpha=H_q^{-1}(r/8n)$. Union bounding over all such $V$, it follows from~(\ref{eq:dimintbound}) that the probability that there exists an $\alpha$-sparse $m$-dimensional subspace $V$ with $\dim(C\cap V)\geq m/2$ is at most
  \begin{equation*}
    q^{mr/8n}\cdot 4q^{-\frac{9}{64}mr}=4q^{-mr/64}.
  \end{equation*}
  Finally union bounding over $m\in\{1,\dots,r\}$, we conclude that the probability that that there exists some $m\in\{1,\dots,r\}$ with some $\alpha$-sparse $m$-dimensional subspace $V$ with $\dim(C\cap V)\geq m/2$ is at most
  \begin{equation*}
    \sum_{m=1}^r 4q^{-mr/64} \leq \frac{4q^{-r/64}}{1-q^{-r/64}}.
  \end{equation*}
  Thus $C$ has property $(\ast)$ with probability $\geq 1-4q^{-r/64}/(1-q^{-r/64})$, as desired.
\end{proof}

\subsection{Construction of Strongly Explicit Expanders}
\label{sec:expconst}
In this section, we present the proof of Theorem~\ref{thm:lwexp}, which follows from the expander construction of \cite{lubotzky_groups_1993} along with the expansion amplification technique of \cite{jeronimo_almost_2022}. However, there are some details we need to verify to prove Theorem~\ref{thm:lwexp}, specifically that the construction of \cite{lubotzky_groups_1993} is strongly explicit.

We begin by describing the Cayley expanders given in Example~3.4 of \cite{lubotzky_groups_1993}, which are notable because the number of vertices equals a power of any desired prime $p$. In particular, by choosing an odd prime, we obtain Cayley expanders on an odd number of vertices, which is important for our planted quantum Tanner codes over $\bF_2$ due to Condition~\ref{it:relprime} in Proposition~\ref{prop:qTanplanted}. In contrast, the Ramanujan Cayley graphs of \cite{lubotzky_ramanujan_1988} and \cite{morgenstern_existence_1994} have an even number of vertices.

\subsubsection{Base Expander Construction}
\label{sec:lwbaseexp}
This section presents the Cayley expanders given in Example~3.4 of \cite{lubotzky_groups_1993}. For simplicity we restrict attention to expanders constructed over $SL_2$, but \cite{lubotzky_groups_1993} shows that a similar construction over $SL_k$ for any $k\geq 2$ also yields Cayley expanders.

Fix a prime $p$. For every $t\in\bN$, define
\begin{equation*}
  G(t) = \ker(SL_2(\bZ)\rightarrow SL_2(\bZ/t\bZ)),
\end{equation*}
where the homomorphism on the right hand side above simply maps all matrix entries to their value$\pmod{t}$. \cite{lubotzky_groups_1993} shows that there exists a finite generating set $S=S(p)$ for $G(p)$. They then define the Cayley graphs
\begin{align}
  \label{eq:lwbaseexp}
  \Gamma_m^0 &= \Cay(G(p)/G(p^{m+1}),S),
\end{align}
for which they show the following:

\begin{theorem}[\cite{lubotzky_groups_1993}]
  \label{thm:lwbaseexp}
  There exists a constant $\lambda<\Delta$ such that each Cayley graph $\Gamma_m^0$ given by~(\ref{eq:lwbaseexp}) has spectral expansion $\leq\lambda$. Furthermore, $\Gamma_m^0$ has $p^{3m}$ vertices and is $\Delta=|S|$-regular.
\end{theorem}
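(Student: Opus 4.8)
The plan is to dispatch the two combinatorial assertions (the vertex count $p^{3m}$ and $\Delta$-regularity) by elementary group theory, and to obtain the uniform spectral bound $\lambda(\Gamma_m^0)\le\lambda<\Delta$ from Selberg's theorem via the fact that property $(\tau)$ is inherited by finite-index subgroups.

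For the vertex count I would argue as follows. The reduction $SL_2(\bZ)\to SL_2(\bZ/p^{m+1}\bZ)$ is surjective (elementary transvections generate $SL_2$ over $\bZ$ and over $\bZ/p^{m+1}\bZ$, and each such transvection in the quotient lifts), and $G(p^{m+1})=\ker(SL_2(\bZ)\to SL_2(\bZ/p^{m+1}\bZ))$ is contained in $G(p)$ since $M\equiv I\pmod{p^{m+1}}$ forces $M\equiv I\pmod p$. Hence $G(p)/G(p^{m+1})$ is isomorphic to the image of $G(p)$ in $SL_2(\bZ/p^{m+1}\bZ)$, which is exactly the kernel of the further reduction $SL_2(\bZ/p^{m+1}\bZ)\to SL_2(\bF_p)$. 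Using the classical formula $|SL_2(\bZ/p^k\bZ)|=p^{3k}(1-p^{-2})$, this kernel has order $p^{3(m+1)}(1-p^{-2})\big/\big(p^{3}(1-p^{-2})\big)=p^{3m}$. For regularity, take $S=S^{-1}$ (replacing $S$ by $S\cup S^{-1}$ if necessary); then $\Cay(H,S)$ is $|S|=\Delta$-regular for any group $H$, and since $S$ generates $G(p)$ it generates $G(p)/G(p^{m+1})$, so $\Gamma_m^0$ is in addition connected.

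For the spectral bound, the key external input is Selberg's $3/16$ theorem, equivalently the statement that $SL_2(\bZ)$ has property $(\tau)$ with respect to the family $\{\Gamma(N)\}_{N\ge 1}$ of principal congruence subgroups; concretely, for any fixed finite symmetric generating set $T$ of $SL_2(\bZ)$ there is $\epsilon_T>0$ with $\lambda_2\big(\Cay(SL_2(\bZ/N\bZ),T)\big)\le(1-\epsilon_T)|T|$ for all $N$. Property $(\tau)$ is independent of the generating set and passes to finite-index subgroups: since $G(p)\le SL_2(\bZ)$ has finite index and each $G(p^{m+1})=\Gamma(p^{m+1})$ is normal in $SL_2(\bZ)$, hence in $G(p)$, with $[G(p):G(p^{m+1})]=p^{3m}<\infty$, the group $G(p)$ has property $(\tau)$ with respect to $\{G(p^{m+1})\}_{m\ge1}$. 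Applying this to the particular finite generating set $S=S(p)$ of $G(p)$ supplied by \cite{lubotzky_groups_1993} yields a single constant $\epsilon_0>0$, independent of $m$, giving a uniform spectral gap at the top of the spectrum of $\Gamma_m^0=\Cay(G(p)/G(p^{m+1}),S)$; the bottom of the spectrum is bounded away from $-\Delta$ by the same argument applied to $S\cdot S^{-1}$ (or, when $p$ is odd, because the $p$-group $G(p)/G(p^{m+1})$ has no index-two subgroup and hence no near-sign-flip characters), so $\lambda(\Gamma_m^0)\le\lambda$ for $\lambda:=(1-\epsilon_0)\Delta<\Delta$ independent of $m$.

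I expect the main obstacle to be not a calculation but a careful citation-and-bookkeeping step: one must (i) invoke Selberg's theorem in the form of property $(\tau)$ for $SL_2(\bZ)$ with respect to congruence subgroups, which we do not reprove, and (ii) verify that $(\tau)$-inheritance under the finite-index inclusion $G(p)\hookrightarrow SL_2(\bZ)$ genuinely applies to the family $\{G(p^{m+1})\}$ and transfers to the generating set $S(p)$ — the loss in the spectral gap from changing groups and changing generators is harmless since $[SL_2(\bZ):G(p)]$ and $|S(p)|$ are fixed constants once $p$ is fixed. Everything else is routine.
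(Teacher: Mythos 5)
The paper does not prove this theorem; it is quoted as a black box from Example~3.4 of \cite{lubotzky_groups_1993}, and the only ingredient re-derived later is the vertex count $p^{3m}$, which the paper recovers as a side note to Lemma~\ref{lem:lwenum} (the explicit bijection with $(\bZ/p^m\bZ)^3$, which is needed anyway for strong explicitness). Your reconstruction is essentially correct and is surely the argument intended by the citation: you use the same isomorphism $G(p)/G(p^{m+1})\cong\ker\big(SL_2(\bZ/p^{m+1}\bZ)\to SL_2(\bZ/p\bZ)\big)$ that the paper establishes in Lemma~\ref{lem:lwgroup}, and then count via the order formula for $SL_2(\bZ/p^k\bZ)$ rather than via the explicit parametrization; the regularity claim is immediate; and the spectral gap is indeed Selberg's theorem packaged as property~$(\tau)$ for $SL_2(\bZ)$ relative to congruence subgroups, restricted to the finite-index subgroup $G(p)$ with respect to the nested normal family $\{G(p^{m+1})\}$ and transferred to the fixed generating set $S(p)$, at a bounded cost once $p$ is fixed.

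One point deserves more care than you gave it. The paper's notion of ``spectral expansion'' is two-sided (largest absolute value of a nontrivial eigenvalue), while property~$(\tau)$ directly supplies only the one-sided gap at the top. Your $S\cdot S^{-1}=S^2$ route does close this, but you should justify that $S^2$ still generates the quotient: the usual parity argument shows that $\langle S^2\rangle$ is the kernel of a (potentially well-defined) surjection $G\to\bZ/2\bZ$, so $[\langle S\rangle:\langle S^2\rangle]\in\{1,2\}$, and since $|G(p)/G(p^{m+1})|=p^{3m}$ is odd for $p>2$ the index must be~$1$; with $S^2$ generating, the adjacency matrix of $\Cay(\cdot,S^2)$ is $A^2$ with simple top eigenvalue $\Delta^2$, and $(\tau)$ applied there controls $\max(\lambda_2^2,\lambda_n^2)$ as you intend. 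Your alternative remark that a $p$-group of odd order has no index-two subgroup rules out the eigenvalue $-\Delta$ exactly but does not by itself give a separation from $-\Delta$ that is uniform in $m$, so it should not be presented as an equally valid replacement. (A cleaner and more standard fix is to add the identity to the generating set, shifting the spectrum by $+1$ at the cost of one unit of degree; the paper itself uses this trick in the proof of Theorem~\ref{thm:lwexp} for an unrelated bookkeeping purpose.)
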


\subsubsection{Strong Explicitness of Base Expander Construction}
In this section, we show that the expanders in Section~\ref{sec:lwbaseexp} are strongly explicit.

We begin with the following well-known lemma. For completeness we present a proof.\footnote{This proof is fairly standard, though our presentation follows ideas sugggested by \cite{julien_answer_2013,eberhard_answer_2013}.}

\begin{lemma}[Well known]
  \label{lem:SLquotsurj}
  The map $SL_2(\bZ)\rightarrow SL_2(\bZ/t\bZ)$ is surjective.
\end{lemma}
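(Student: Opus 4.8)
The goal is to show the reduction map $\pi_t:SL_2(\bZ)\rightarrow SL_2(\bZ/t\bZ)$ is surjective. The plan is to argue in two stages: first handle prime powers, then assemble the general case via the Chinese Remainder Theorem.

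First I would reduce to the case $t=p^e$ a prime power. Since $\bZ/t\bZ \cong \prod_i \bZ/p_i^{e_i}\bZ$ for the prime factorization $t=\prod_i p_i^{e_i}$, we get $SL_2(\bZ/t\bZ)\cong\prod_i SL_2(\bZ/p_i^{e_i}\bZ)$, and $\pi_t$ factors as the product of the maps $\pi_{p_i^{e_i}}$ followed by this isomorphism. By strong approximation for $\bZ$ (or simply by CRT applied entrywise to lift a target matrix one coordinate at a time while correcting the determinant), surjectivity of each $\pi_{p_i^{e_i}}$ implies surjectivity of $\pi_t$. So it suffices to prove surjectivity onto $SL_2(\bZ/p^e\bZ)$.

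Next, for the prime-power case, I would use the standard fact that $SL_2(\bZ/p^e\bZ)$ is generated by elementary matrices $\begin{pmatrix}1&b\\0&1\end{pmatrix}$ and $\begin{pmatrix}1&0\\c&1\end{pmatrix}$ for $b,c\in\bZ/p^e\bZ$. This in turn follows from the fact that any $\begin{pmatrix}a&b\\c&d\end{pmatrix}\in SL_2(\bZ/p^e\bZ)$ can be reduced to the identity by row and column operations of this elementary type: because $ad-bc$ is a unit, at least one of $a,c$ is a unit mod $p$, so after possibly an elementary row operation we may assume $a$ is a unit, then clear $c$, then clear $b$, then the remaining diagonal entry is forced to be $a^{-1}$ and one checks a diagonal matrix $\mathrm{diag}(u,u^{-1})$ is itself a product of elementaries (via the identity $\mathrm{diag}(u,u^{-1})=\begin{pmatrix}1&u\\0&1\end{pmatrix}\begin{pmatrix}1&0\\-u^{-1}&1\end{pmatrix}\begin{pmatrix}1&u\\0&1\end{pmatrix}\begin{pmatrix}0&-1\\1&0\end{pmatrix}$, with the last factor also elementary-generated). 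Each elementary generator of $SL_2(\bZ/p^e\bZ)$ is visibly the image under $\pi_{p^e}$ of the corresponding elementary matrix in $SL_2(\bZ)$ (lift $b,c$ to any integers), so $\pi_{p^e}$ is surjective.

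The only mildly delicate point — the ``main obstacle,'' though it is quite standard — is verifying that one can always arrange for the top-left entry to be a unit mod $p$ using only elementary operations: if $a\equiv 0\pmod p$ but $c\not\equiv 0$, replace the first row by (first row)+(second row) via left multiplication by $\begin{pmatrix}1&1\\0&1\end{pmatrix}$, which changes $a$ to $a+c$, now a unit mod $p$. Since $ad-bc$ being a unit forces $(a,c)\not\equiv(0,0)\pmod p$, this is always possible. The rest is routine bookkeeping, so I would state the prime-power case cleanly and invoke CRT for the general statement.
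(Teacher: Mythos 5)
Your proof is correct in essence and follows the same underlying strategy as the paper's: reduce the target matrix to a diagonal matrix $\mathrm{diag}(u,u^{-1})$ by elementary row/column operations, observe via the displayed identity that this diagonal matrix (and the rotation $\begin{pmatrix}0&-1\\1&0\end{pmatrix}$) is itself a product of elementaries, and then lift each elementary factor directly to $SL_2(\bZ)$. The difference is that you interpose a Chinese Remainder Theorem reduction to prime powers $t=p^e$, while the paper performs the row reduction directly over $\bZ/t\bZ$. Your detour has a genuine pedagogical advantage — over the local ring $\bZ/p^e\bZ$ the ``arrange a unit in the $(1,1)$ slot'' step is transparent (one of $a,c$ must be a unit mod $p$), whereas the paper's direct row reduction over $\bZ/t\bZ$ silently uses the nontrivial but standard fact that $\gcd(a,c,t)=1$ lets one find $k$ with $\gcd(a+kc,t)=1$, which is itself a CRT argument in disguise. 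However, the way you phrase the CRT \emph{recombination} step (``lift a target matrix one coordinate at a time while correcting the determinant,'' or appealing to strong approximation) is the shakiest part: lifting each $B_i\in SL_2(\bZ/p_i^{e_i}\bZ)$ to some $A_i\in SL_2(\bZ)$ and then CRT-combining entries only yields $\det\equiv 1\pmod t$, and fixing the determinant to be exactly $1$ is precisely the delicate content of the lemma. You can bypass this entirely by combining at the level of \emph{generators} rather than matrices: you already establish that $SL_2(\bZ/p^e\bZ)$ is generated by elementaries (your identity shows the rotation is elementary-generated), and elementaries supported in one CRT factor are images of elementaries of $\bZ/t\bZ$, so $SL_2(\bZ/t\bZ)\cong\prod_i SL_2(\bZ/p_i^{e_i}\bZ)$ is generated by elementaries of $\bZ/t\bZ$; these lift to $SL_2(\bZ)$ verbatim with no determinant to correct. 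With that small change your argument is airtight, and arguably more self-contained than the paper's.
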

\begin{proof}
  Given any matrix $A\in SL_2(\bZ/t\bZ)$, we can perform row reduction to obtain a diagonal matrix $A'=\begin{pmatrix}a&0\\0&a^{-1}\end{pmatrix}$ for some $a\in\bZ/t\bZ$, which means that $A$ equals $A'$ times a product of elementary matrices (i.e.~matrices with all $1$s on the diagonal, and with a single nonzero off-diagonal entry). But it also holds that
  \begin{equation*}
    A' = \begin{pmatrix}a&0\\0&a^{-1}\end{pmatrix} = \begin{pmatrix}1&a\\0&1\end{pmatrix}\begin{pmatrix}1&0\\-a^{-1}&1\end{pmatrix}\begin{pmatrix}1&a\\0&1\end{pmatrix}\begin{pmatrix}0&-1\\1&0\end{pmatrix}.
  \end{equation*}
  Thus $A\in SL_2(\bZ/t\bZ)$ is a product of elementary matrices along with (possibly) the matrix $\begin{pmatrix}0&-1\\1&0\end{pmatrix}$. But all of these matrices also belong to $SL_2(\bZ)$, so $A$ is the image under the map $SL_2(\bZ)\rightarrow SL_2(\bZ/t\bZ)$ of the product of these matrices in $SL_2(\bZ)$. Thus this map is surjective, as desired.
\end{proof}

We now show the following lemma, which provides a more tractable characterization of the group $G(p)/G(p^{m+1})$. The proof is fairly standard, but we provide it for completeness.

\begin{lemma}
  \label{lem:lwgroup}
  The natural map $G(p)\rightarrow SL_2(\bZ/p^{m+1})$ induces an isomorphism
  \begin{equation}
    \label{eq:lwgroup}
    G(p)/G(p^{m+1}) \cong \ker(SL_2(\bZ/p^{m+1}\bZ)\rightarrow SL_2(\bZ/p\bZ)).
  \end{equation}
\end{lemma}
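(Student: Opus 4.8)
The plan is to apply the first isomorphism theorem to the restriction of the reduction homomorphism $\pi\colon SL_2(\bZ)\to SL_2(\bZ/p^{m+1}\bZ)$ to the subgroup $G(p)$, after identifying its kernel and image. So there are really three steps: pin down $\ker(\pi|_{G(p)})$, pin down $\im(\pi|_{G(p)})$, and combine.

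First I would handle the kernel: by definition $\ker(\pi)=G(p^{m+1})$, and since every integer matrix congruent to $I$ modulo $p^{m+1}$ is a fortiori congruent to $I$ modulo $p$, we have $G(p^{m+1})\subseteq G(p)$, hence $\ker(\pi|_{G(p)})=G(p)\cap G(p^{m+1})=G(p^{m+1})$. Next I would identify the image with $K:=\ker\bigl(SL_2(\bZ/p^{m+1}\bZ)\to SL_2(\bZ/p\bZ)\bigr)$. The inclusion $\im(\pi|_{G(p)})\subseteq K$ is immediate, since $A\equiv I\pmod p$ forces $\pi(A)$ to reduce to $I$ modulo $p$. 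For the reverse inclusion, given $B\in K$, Lemma~\ref{lem:SLquotsurj} lets me lift $B$ to some $A\in SL_2(\bZ)$ with $\pi(A)=B$; because reduction modulo $p$ factors through reduction modulo $p^{m+1}$ and $B$ already reduces to $I$ modulo $p$, the lift $A$ automatically satisfies $A\equiv I\pmod p$, i.e. $A\in G(p)$, witnessing $B\in\im(\pi|_{G(p)})$.

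Combining, the first isomorphism theorem for $\pi|_{G(p)}$ yields $G(p)/G(p^{m+1})\cong K$, and the isomorphism is visibly the one induced by the natural map $G(p)\to SL_2(\bZ/p^{m+1}\bZ)$, as claimed. I do not expect any genuine obstacle here: every step is elementary group theory, and the one nontrivial external input — surjectivity of $SL_2(\bZ)\to SL_2(\bZ/t\bZ)$ — has already been established as Lemma~\ref{lem:SLquotsurj}. The only point requiring a moment's care is making sure the lift $A$ of $B\in K$ lands in $G(p)$ rather than merely in $SL_2(\bZ)$, which is exactly the compatibility of the two reduction maps noted above.
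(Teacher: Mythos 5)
Your proof is correct and takes essentially the same approach as the paper: both invoke the surjectivity of $SL_2(\bZ)\to SL_2(\bZ/t\bZ)$ to lift elements of $\ker(SL_2(\bZ/p^{m+1}\bZ)\to SL_2(\bZ/p\bZ))$ into $G(p)$, and both then conclude via the first isomorphism theorem. The only difference is cosmetic—you restrict the reduction map directly to $G(p)$ and compute its image, while the paper first builds the full isomorphism $SL_2(\bZ)/\ker\phi_1\cong SL_2(\bZ/p^{m+1}\bZ)$ and then restricts it.
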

\begin{proof}
  Consider the sequence of homormorphisms
  \begin{equation*}
    SL_2(\bZ) \xrightarrow{\phi_1} SL_2(\bZ/p^{m+1}\bZ) \xrightarrow{\phi_2} SL_2(\bZ/p\bZ).
  \end{equation*}
  By Lemma~\ref{lem:SLquotsurj}, $\phi_1$ and $\phi_2\circ\phi_1$ are surjective, so $\phi_2$ is also surjective. By definition $G(p^{m+1})=\ker\phi_1$ and $G(p)=\ker(\phi_2\circ\phi_1)$. Now the surjectivity of $\phi_1$ implies that $\phi_1$ induces an isomorphism
  \begin{equation*}
    \tilde{\phi}_1:SL_2(\bZ)/\ker\phi_1 \xrightarrow{\sim} SL_2(\bZ/p^{m+1}\bZ).
  \end{equation*}
  As $G(p)/G(p^{m+1})=\ker(\phi_2\circ\phi_1)/\ker(\phi_1)$ is a subgroup of $SL_2(\bZ)/\ker\phi_1$, we obtain a restricted isomorphism
  \begin{equation*}
    G(p)/G(p^{m+1}) = \ker(\phi_2\circ\phi_1)/\ker(\phi_1) \xrightarrow{\sim} \tilde{\phi}_1(\ker(\phi_2\circ\phi_1)/\ker(\phi_1)).
  \end{equation*}
  But the right hand side above by definition equals $\ker(\phi_2)$, as it holds that $x\in\tilde{\phi}_1(\ker(\phi_2\circ\phi_1)/\ker(\phi_1))$ if and only if $\phi_2(x)$ is the identity. Thus we have shown that $\phi_1$ induces a natural isomorphism
  \begin{equation*}
    G(p)/G(p^{m+1}) \cong \ker(\phi_2),
  \end{equation*}
  as desired.
\end{proof}

Let $G_m$ denote the group in~(\ref{eq:lwgroup}), so that the expanders of \cite{lubotzky_groups_1993} described in Section~\ref{sec:lwbaseexp} are Cayley graphs $\Cay(G_m,S)$ for $m\in\bN$. The following lemma shows that we can enumerate elements of $G_m$ using 3-tuples of elements of $\bZ/p^m\bZ$.

\begin{lemma}
  \label{lem:lwenum}
  There is a bijection
  \begin{equation*}
    \phi:(\bZ/p^m\bZ)^3\rightarrow G_m=\ker(SL_2(\bZ/p^{m+1}\bZ)\rightarrow SL_2(\bZ/p\bZ))
  \end{equation*}
  given by
  \begin{equation*}
    \phi(a,b,c) = I+p\begin{pmatrix}a&b\\c&d\end{pmatrix}
  \end{equation*}
  for $d\in \bZ/p^m\bZ$ given by
  \begin{equation}
    \label{eq:solveford}
    d = (1+pa)^{-1}(pbc-a).
  \end{equation}
\end{lemma}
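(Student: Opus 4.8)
The plan is to identify $G_m$ with an explicit set of matrices of the shape $I+pN$, impose the determinant condition, and solve it for the last entry of $N$.

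First I would observe that a matrix $M\in SL_2(\bZ/p^{m+1}\bZ)$ lies in $G_m=\ker(SL_2(\bZ/p^{m+1}\bZ)\to SL_2(\bZ/p\bZ))$ exactly when $M\equiv I\pmod p$, i.e.\ when $M=I+pN$ for some matrix $N$ with entries in $\bZ/p^{m+1}\bZ$. Since multiplication by $p$ on $\bZ/p^{m+1}\bZ$ has kernel $p^m\bZ/p^{m+1}\bZ$, the entrywise product $pN\bmod p^{m+1}$ depends only on $N\bmod p^m$, and this correspondence is injective; hence $M\mapsto N\bmod p^m$ is a bijection from $G_m$ onto $\{N\in\Mat_2(\bZ/p^m\bZ):\det(I+pN)=1\text{ in }\bZ/p^{m+1}\bZ\}$. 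Here the determinant condition is well-posed because changing a lift of an entry of $N$ by a multiple of $p^m$ changes $\det(I+pN)$ only by a multiple of $p^{m+1}$, as one sees from the next step.

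Next I would write $N=\begin{pmatrix}a&b\\c&d\end{pmatrix}$ and use the $2\times 2$ identity $\det(I+X)=1+\operatorname{tr}(X)+\det(X)$ to get $\det(I+pN)=1+p(a+d)+p^2(ad-bc)$. Thus $\det(I+pN)=1$ in $\bZ/p^{m+1}\bZ$ iff $p\bigl((a+d)+p(ad-bc)\bigr)\equiv 0\pmod{p^{m+1}}$, iff $(a+d)+p(ad-bc)\equiv 0\pmod{p^m}$. Collecting the terms involving $d$, this rearranges to $d(1+pa)\equiv pbc-a\pmod{p^m}$. Since $1+pa\equiv 1\pmod p$ it is a unit in $\bZ/p^m\bZ$, so this equation has the unique solution $d=(1+pa)^{-1}(pbc-a)$, which is precisely~(\ref{eq:solveford}).

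Finally I would conclude: for each triple $(a,b,c)\in(\bZ/p^m\bZ)^3$ there is exactly one $d$ making $N$ satisfy the determinant constraint, namely the one above, and conversely every $N$ in the constrained set has its $d$-entry of this form; combining this with the bijection from the first step shows that $\phi$ is a well-defined bijection from $(\bZ/p^m\bZ)^3$ onto $G_m$. I do not expect a genuine obstacle here; the only points requiring a little care are the well-definedness of $\det(I+pN)\bmod p^{m+1}$ in terms of the entries of $N$ taken mod $p^m$, and the invertibility of $1+pa$ modulo $p^m$, both of which are immediate.
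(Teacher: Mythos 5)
Your proof is correct and takes essentially the same approach as the paper's: compute $\det(I+pN)$ and solve the resulting linear constraint uniquely for $d$, using the fact that $1+pa$ is a unit mod $p^m$. Your setup is marginally more careful than the paper's—you explicitly justify the bijection between $G_m$ and matrices $N\bmod p^m$ via the kernel of multiplication by $p$, whereas the paper verifies injectivity and surjectivity of $\phi$ directly by inspection—but the core argument is the same.
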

\begin{proof}
  By definition $1+pa$ is invertible in $\bZ/p^m\bZ$, so $d$ is well defined, and we have
  \begin{align*}
    \phi(a,b,c) = \det\left(I+p\begin{pmatrix}a&b\\c&d\end{pmatrix}\right) &= (1+pa)(1+pd)-p^2bc = 1,
  \end{align*}
  so indeed $\phi$ maps $(a,b,c)$ to an element of $G_m$. We must verify that $\phi$ is injective and surjective.

  To see that $\phi$ is injective, observe that if $(a,b,c)\neq(a',b',c')$, then $(pa,pb,pc)$ and $(pa',pb',pc')$ are distinct tuples in $(\bZ/p^{m+1})^3$, so $\phi(a,b,c)$ and $\phi(a',b',c')$ are distinct matrices in $(\bZ/p^{m+1}\bZ)^{2\times 2}$.

  To see that $\phi$ is surjective, consider that every matrix $M\in G_m=\ker(SL_2(\bZ/p^{m+1}\bZ)\rightarrow SL_2(\bZ/p\bZ))$ is by definition of the form $M=I+p\begin{pmatrix}a&b\\c&d\end{pmatrix}$ for some $a,b,c,d\in\bZ/p^m\bZ$. Now because this matrix has determinant $1$, we have $(1+a)(1+d)-bc=1$, which simplifies to~(\ref{eq:solveford}), so $M=\phi(a,b,c)$. Thus $\phi$ is surjective, as desired.
\end{proof}

As a side note, Lemma~\ref{lem:lwenum} also recovers the fact that $|G_m|=|\bZ/p^m\bZ|^3=p^{3m}$.

We are now ready to show that the expanders described in Section~\ref{sec:lwbaseexp} are strongly explicit.

\begin{proposition}
  \label{prop:lwstrongexplicit}
  For every fixed $p$, the family of Cayley graphs $(\Gamma_m^0=\Cay(G_m,S))_{m\in\bN}$ presented in Section~\ref{sec:lwbaseexp} is strongly explicit.
\end{proposition}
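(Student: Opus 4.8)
The plan is to exhibit, for each fixed prime $p$, an algorithm that on input a vertex index $i$ and a generator index $j\in[\Delta]$ (where $\Delta=|S|$) outputs the index of the $j$th neighbour of $i$ in $\Gamma_m^0$, running in time $\poly(m,\log p)$; since $n:=|V(\Gamma_m^0)|=p^{3m}$ and $\Delta=O(1)$, this is $\poly(\log n)$ for fixed $p$, which is exactly what strong explicitness of the adjacency matrix demands. I fix once and for all an enumeration $S=\{s_1,\dots,s_\Delta\}$ of the generating set and, as is standard for Cayley graphs, assume $S=S^{-1}$ (replacing $S$ by $S\cup S^{-1}$ if necessary), so that $\Gamma_m^0$ is undirected and there is a fixed involution $\sigma$ of $[\Delta]$ with $s_{\sigma(j)}=s_j^{-1}$. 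Since $S$ depends only on the fixed prime $p$, I treat the $s_j$ as hardcoded elements of $\SL_2(\bZ)$; I will also note that $S$ is in fact effectively computable from $p$, since by Lemma~\ref{lem:SLquotsurj} the group $G(p)$ has finite index $|\SL_2(\bZ/p\bZ)|=p(p^2-1)$ in the finitely presented group $\SL_2(\bZ)$, so the Reidemeister--Schreier procedure produces a finite generating set in time depending only on $p$.

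The core of the argument is to carry out the computation entirely on the combinatorial encoding of $G_m$ supplied by Lemma~\ref{lem:lwenum}. I identify the index set $\{0,\dots,p^{3m}-1\}$ with $(\bZ/p^m\bZ)^3$ by base-$p^m$ digit expansion, and then with $G_m$ via the bijection $\phi$. Given $i$, I first recover $a,b,c$ with $i=a+bp^m+cp^{2m}$; then I form the matrix $M=\phi(a,b,c)=I+p\begin{psmallmatrix}a&b\\c&d\end{psmallmatrix}\in\SL_2(\bZ/p^{m+1}\bZ)$, where $d\equiv(1+pa)^{-1}(pbc-a)\pmod{p^m}$ is obtained by inverting the unit $1+pa$ (note $1+pa\equiv 1\pmod p$) in $\bZ/p^m\bZ$ with the extended Euclidean algorithm. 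Next I compute the matrix product $M'=s_jM$ with entries reduced mod $p^{m+1}$; since $s_j\in G(p)$ and $M\in G_m$ both reduce to the identity mod $p$, so does $M'$, whence $M'=I+p\begin{psmallmatrix}a'&b'\\c'&d'\end{psmallmatrix}$, and reading off $a',b',c'$ (by dividing the off-identity entries of $M'-I$ by $p$ and reducing mod $p^m$) and reassembling $i'=a'+b'p^m+c'p^{2m}$ gives the index of the $j$th neighbour of $i$, with adjacency value $1$ (or, counting multiplicity, $|\{k:s_k=s_j\}|$). For the $j$th nonzero entry of the $i$th \emph{column} one observes that the nonzero rows are the $u$ with $i=s_ku$, i.e.\ $u=s_{\sigma(k)}i$, so the same routine run with $s_{\sigma(j)}$ in place of $s_j$ handles columns.

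For the runtime analysis I would check that every step above is elementary integer arithmetic on integers of bit-length $O(m\log p)$: the base-$p^m$ conversions are $O(1)$ divisions with remainder; the generator matrices $s_j$ have entries of size $O_p(1)$, so building $M$, forming $s_jM$, and dividing $M'-I$ by $p$ are each $O(1)$ arithmetic operations on $O(m\log p)$-bit integers; and the single modular inversion of $1+pa$ in $\bZ/p^m\bZ$ is one run of the extended Euclidean algorithm. Hence the total running time is $\poly(m,\log p)=\poly(\log n)$ for fixed $p$. Correctness is immediate from Lemma~\ref{lem:lwenum}, which guarantees that $\phi$ is the stated bijection, so that each step transports the encoding of a group element to the encoding of the prescribed product; and by definition of $\Cay(G_m,S)$ with edge set $\{(g,sg)\}$, these products are precisely the neighbours in $\Gamma_m^0$.

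The one point requiring a little care --- and the closest thing to an obstacle --- is ensuring that the algorithm actually possesses the generating set $S$, which \cite{lubotzky_groups_1993} only asserts to exist. Since $p$ is fixed throughout, $S$ is a constant object that may simply be hardcoded, and as noted it is moreover computable from $p$ via Reidemeister--Schreier, so this is not a genuine difficulty. Beyond that, the whole proof is bookkeeping around the explicit bijection $\phi$ of Lemma~\ref{lem:lwenum} together with the fact that reduction mod $p^{m+1}$ is a ring homomorphism compatible with matrix multiplication.
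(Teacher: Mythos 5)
Your proposal is correct and follows essentially the same route as the paper: both use the bijection $\phi$ from Lemma~\ref{lem:lwenum} to index elements of $G_m$ by triples in $(\bZ/p^m\bZ)^3$, convert between this index and the matrix representation in $\poly(m)$ time (including the modular inversion of $1+pa$), and then implement the Cayley neighbour map by matrix multiplication mod $p^{m+1}$. Your added remarks on handling columns via $S=S^{-1}$ and on obtaining $S$ from $p$ via Reidemeister--Schreier are sensible elaborations of the same argument.
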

\begin{proof}
  By Lemma~\ref{lem:lwgroup}, we have $G_m\cong\ker(SL_2(\bZ/p^{m+1}\bZ)\rightarrow SL_2(\bZ/p\bZ))$, and the generating set $S$ consists of a fixed finite set of matrices in $\bZ^{n\times n}$, which can be viewed as matrices in $G_m$ by replacing each entry with its value $\pmod{p^{m+1}}$. Note that $S$ depends on $p$, but here we assume $p$ is fixed, and $S$ does not depend on $m$ (except for our interpretation of its entries as integers$\pmod{p^{m+1}}$).

  Now Lemma~\ref{lem:lwenum} implies that the elements of $G_m$ are index by tuples $(a,b,c)\in(\bZ/p^m\bZ)^3$, or equivalently, by tuples $(a,b,c)\in\{0,1,\dots,p^m-1\}^3$. Furthermore, we can go between the tuple representation $(a,b,c)$ and the matrix representation $\phi(a,b,c)$ defined in Lemma~\ref{lem:lwenum} in $\poly(\log p^m)=\poly(m)$ time, as the conversion simply requires computing $d=(1+pa)^{-1}(pbc-a)$ and then multiplying $\begin{pmatrix}a&b\\c&d\end{pmatrix}$ by $p$ and adding $I$; the reverse conversion (from matrix to tuple) is similarly efficient. Thus we can perform group multiplication and inversion in time $\poly(m)$, as these operations are simply given by matrix multiplication and inversion.

  Thus we have shown that the group operations of $G_m$ run in $\poly(\log|G_m|)$ time, and the generating set $S$ can be computed in constant time, so the family of Cayley graphs $\Gamma_m^0=\Cay(G_m,S)$ for $m\in\bN$ is strongly explicit, as desired.
\end{proof}

\subsubsection{Amplifying the Expansion to Almost-Ramanujan}
In this section, we prove Theorem~\ref{thm:lwexp} by applying the expansion amplification technique of \cite{jeronimo_almost_2022}. In particular, \cite{jeronimo_almost_2022} show the following (see for instance their Theorem~1.2).

\begin{theorem}[\cite{jeronimo_almost_2022}]
  \label{thm:expamp}
  Let $(\Gamma_m^0=\Cay(G_m,S_m))_{m\in\bN}$ be a strongly explicit family of $\Delta_0$-regular Cayley graphs with spectral expansion bounded by some constant $\lambda<\Delta_0$. Then there exists an infinite set $\mathbf{\Delta}\subseteq\bN$ for which there is a strongly explicit family $(\Gamma_{m,\Delta}=\Cay(G_m,S_{m,\Delta}))_{m\in\bN,\Delta\in\mathbf{\Delta}}$ of almost-Ramanujan Cayley graphs, where $\Gamma_{m,\Delta}$ has degree $|S_{m,\Delta}|=\Delta$.
\end{theorem}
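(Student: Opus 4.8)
The plan is to prove Theorem~\ref{thm:expamp}, which takes an arbitrary strongly explicit family of $\Delta_0$-regular Cayley expanders with nontrivial constant spectral gap and produces, for a rich infinite set of target degrees $\Delta$, a strongly explicit family of almost-Ramanujan Cayley graphs on the \emph{same} vertex groups $G_m$. The source is the iterated-replacement / derandomized-squaring-type amplification of Jeronimo et al.; the substance of the proof is to (i) recall the amplification operation in a form that preserves the Cayley structure, (ii) verify strong explicitness is preserved, and (iii) extract the claimed degree flexibility.

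\textit{Step 1: The amplified generating set.} Jeronimo et al.\ (their Theorem~1.2) show that given a base $\Delta_0$-regular Cayley graph $\Cay(G,S_0)$ with $\lambda(\Cay(G,S_0))\le\lambda<\Delta_0$, one obtains a new generating (multi)set $S$ of $G$ by taking a suitable ``pseudorandom walk'' combination: roughly, one fixes an auxiliary explicit strongly explicit expander $H$ on $\Delta_0$ vertices of degree $d$, and lets $S$ consist of all products $s_{i_1}s_{i_2}\cdots s_{i_k}$ indexed by length-$k$ walks $(i_1,\dots,i_k)$ in $H$ (equivalently by the edge set of an iterated replacement product), for a parameter $k$. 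The key spectral fact, which I take as a black box from their analysis, is that $\lambda(\Cay(G,S))\le |S|^{1/2+o_\Delta(1)}$ where the degree is $|S|=\Delta_0\cdot d^{k-1}$ and the $o(1)$ term tends to $0$ as $k\to\infty$ (hence as $\Delta=|S|\to\infty$). Crucially this construction never changes the group: $S\subseteq G$, so $\Cay(G,S)$ is again a Cayley graph on $G$. Applying this uniformly to the family $(\Cay(G_m,S_m))_{m\in\bN}$ with the \emph{same} walk length $k$ and auxiliary graph $H$ yields the family $(\Gamma_{m,\Delta}=\Cay(G_m,S_{m,\Delta}))_{m\in\bN}$ of degree $\Delta=|S_{m,\Delta}|$, and by the spectral fact each such family is almost-Ramanujan once $\Delta$ (i.e.\ $k$) is large enough.

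\textit{Step 2: Strong explicitness.} Here I would argue that each $S_{m,\Delta}$ can be generated, and its adjacency structure queried, in time $\poly(\log|G_m|,\log\Delta)$. The base generators $S_m$ are strongly explicit by hypothesis (in our application, by Proposition~\ref{prop:lwstrongexplicit}); the auxiliary expander $H$ on $\Delta_0=|S_m|=O(1)$ vertices has constant size, so enumerating all length-$k$ walks in $H$ takes time $\poly(d^k)=\poly(\Delta)$, and each walk product $s_{i_1}\cdots s_{i_k}$ is a product of $k\le\poly(\log\Delta)$ constant-size group elements, computable in time $\poly(\log\Delta,\log|G_m|)$ since group operations in $G_m$ are strongly explicit. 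To answer a query for the $j$th neighbor of vertex $g$ in $\Gamma_{m,\Delta}$, we enumerate the $j$th walk in lexicographic order, compute the corresponding product $w\in G_m$, and output $gw$; this is $\poly(\log\Delta,\log|G_m|)$ time. The reverse query (given an edge, recover its index) is handled symmetrically using that the walk-indexing is an explicit bijection. Thus the amplified family inherits strong explicitness from the base family and from $H$.

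\textit{Step 3: Degree flexibility and the set $\mathbf{\Delta}$.} The attainable degrees are exactly $\Delta=\Delta_0\cdot d^{k-1}$ ranging over $k\in\bN$ (with $\Delta_0,d$ fixed once and for all), which gives the infinite set $\mathbf{\Delta}$; for each, the corresponding family over $m\in\bN$ is almost-Ramanujan. \textbf{The main obstacle} is not the construction but the bookkeeping: ensuring the $o(1)$ in $\lambda\le\Delta^{1/2+o(1)}$ is genuinely a function of $\Delta$ alone (uniform in $m$ and in the base graph's gap $\lambda/\Delta_0$), which is precisely what Theorem~1.2 of \cite{jeronimo_almost_2022} delivers, and ensuring that the walk-enumeration bijection used for neighbor queries is itself strongly explicit rather than merely computable in $\poly(\Delta)$ time --- since we need $\poly(\log\Delta)$, we must index walks so that the $j$th walk's vertex sequence can be read off from the base-$d$ digits of $j$ without enumerating all predecessors. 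This is routine once $H$ is taken to be, say, a fixed explicit Cayley graph on $\Delta_0$ vertices with an easily indexed edge set, but it is the one place the argument requires care beyond citing \cite{jeronimo_almost_2022}.
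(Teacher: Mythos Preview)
The paper does not prove Theorem~\ref{thm:expamp}; it is stated as a black-box citation of Theorem~1.2 in \cite{jeronimo_almost_2022} and then applied directly in the proof of Theorem~\ref{thm:lwexp}. So there is no ``paper's own proof'' to compare your proposal against. Your sketch is a reasonable reconstruction of the ideas behind the Jeronimo et al.\ construction (derandomized walk products on the generating set, preserving the Cayley structure, with strong explicitness inherited from the base graph and the constant-size auxiliary expander), but for the purposes of this paper you would simply cite the result, as the authors do.
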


Theorem~\ref{thm:lwexp} now follows almost immediately from Proposition~\ref{prop:lwstrongexplicit} and Theorem~\ref{thm:expamp}:

\begin{proof}[Proof of Theorem~\ref{thm:lwexp}]
  We apply Theorem~\ref{thm:expamp} to the Cayley graphs $(\Gamma_m^0=\Cay(G_m,S))_{m\in\bN}$ presented in Section~\ref{sec:lwbaseexp}. These graphs have constant degree by definition, have constant spectral expansion $\lambda<\Delta$ by Theorem~\ref{thm:lwbaseexp}, and are strongly explicit by Proposition~\ref{prop:lwstrongexplicit}. Therefore Theorem~\ref{thm:expamp} gives a strongly explicit family $(\Gamma_{m,\Delta}=\Cay(G_m,S_{m,\Delta}))_{m\in\bN,\Delta\in\mathbf{\Delta}}$ of almost-Ramanujan Cayley graphs over the groups $G_m$ given by~(\ref{eq:lwgroup}), which have order $p^{3m}$.

  It only remains to ensure that $\mathbf{\Delta}$ has the property that if $\Delta\in\mathbf{\Delta}$, then either $\Delta-1\in\mathbf{\Delta}$ or $\Delta+1\in\mathbf{\Delta}$. For this purpose, for each graph $\Gamma_{m,\Delta}=\Cay(G_m,S_{m,\Delta})$ in our family such that $\Delta\in\mathbf{\Delta}$ but $\Delta+1\notin\mathbf{\Delta}$, we may also add the graph $\Gamma_{m,\Delta+1}:=\Cay(G_m,S_{m,\Delta}\cup\{\text{id}\})$ obtained by adding the identity element as a Cayley generator. The resulting family of graphs $\Gamma_{m,\Delta}$ ranges over the possible degrees $\Delta\in\mathbf{\Delta'}:=\mathbf{\Delta}\cup\{\Delta+1:\Delta\in\mathbf{\Delta}\}$, which has the desired property that if $\Delta\in\mathbf{\Delta'}$, then either $\Delta-1\in\mathbf{\Delta'}$ or $\Delta+1\in\mathbf{\Delta'}$.

  We must verify that this larger family $(\Gamma_{m,\Delta})_{m\in\bN,\Delta\in\mathbf{\Delta'}}$ is still almost-Ramanujan. Adding the identity element as a Cayley generator increases the graph degree by $1$, and increases the spectral expansion by at most $1$. This latter claim holds because adding the identity to the Cayley generating set has the effect of adding the identity matrix to the adjacency matrix, which increases all eigenvalues by $1$, and thus increases the spectral expansion by at most $1$. Therefore if the original degree-$\Delta$ graphs $\Gamma_{m,\Delta}$ have spectral expansion at most $\lambda(\Delta)=\Delta^{1/2+o(1)}$, then the added degree-$(\Delta+1)$ graphs have spectral expansion at most $\lambda(\Delta+1):=\lambda(\Delta)+1$, which still grows as $(\Delta+1)^{1/2+o(1)}$. Thus our final augmented family $(\Gamma_{m,\Delta})_{m\in\bN,\Delta\in\mathbf{\Delta'}}$ is almost-Ramanujan, as desired. Note that the strong explicitness of this augmented family is immediate from the strong explicitness of the original family $(\Gamma_{m,\Delta})_{m\in\bN,\Delta\in\mathbf{\Delta}}$.
\end{proof}

\begin{remark}
  The expansion amplification of \cite{jeronimo_almost_2022} in Theorem~\ref{thm:expamp}, along with our technique of adding the identity to to the Cayley generating set in the proof above, assume that the Cayley generating sets of our graphs are actually \textit{multisets}. That is, we allow repeated Cayley generators, so our Cayley graphs are actually multigraphs, meaning there can be multiple distinct edges between the same two vertices.

  Fortunately, all of our applications of these graphs apply equally well to multigraphs and simple (non-multi) graphs. Indeed, just as there are no complications in defining a classical Tanner code on a multigraph, there are no complications in defining a quantum Tanner code using multigraphs; the edge and face sets simply become multisets.
\end{remark}

\subsection{Application to Strongly Explicit Sum-of-Squares Lower Bounds}
In this section, we describe how we use our planted quantum Tanner codes to obtain \textit{strongly} explicit lower bounds against a linear number of levels of the SoS hierarchy, thereby improving upon the weakly explicit SoS lower bounds of Hopkins and Lin~\cite{hopkins_explicit_2022-1}.

Hopkins and Lin~\cite{hopkins_explicit_2022-1} show that quantum LDPC codes with small-set boundary and coboundary expansion yield CSPs that are hard for a linear number of levels of the Sum-of-Squares SDP hierarchy. Specifically, the CSPs they use are instances of $\ell$-LIN over $\bF_2$ (or equivalently, $\ell$-XOR) given in Definition~\ref{def:hlcsp} below.

Recall that in general, an instance of $\ell$-LIN consists of a vector of $m$ variables $y=(y_1,\dots,y_m)$ along with a set of $n$ (affine) linear constraints over $\bF_q$, each of which involves $\leq\ell$ variables. Formally, such a system of equations can be expressed in matrix notation as $Ay=\beta$ for some $A\in\bF_q^{n\times m}$ and some $\beta\in\bF_q^n$, where each row of $A$ has $\leq\ell$ nonzero entries.

Below, we let the \textit{locality} of a CSS code $\cC=\CSS(C_X=\ker H_X,C_Z=\ker H_Z)$ refer to the maximum Hamming weight of any row or column of $H_X$ or $H_Z$. The qLDPC codes we consider by definition have locality $\ell=O(1)$ as $n\rightarrow\infty$.

\begin{definition}[$\ell$-LIN instances from qLDPC codes \cite{hopkins_explicit_2022-1}]
  \label{def:hlcsp}
  Let $\cC=\CSS(C_X=\ker H_X,C_Z=\ker H_Z)$ be a CSS code of locality $\ell$. Also fix any $\beta\in C_X\setminus C_Z^\perp$. Then define the associated $\ell$-LIN instance $\cI_{\cC,\beta}$ to have $m=m_Z$ variables $y_1,\dots,y_m\in\bF_q$ and $n$ linear constraints over $\bF_q$ given by the system of equations $H_Z^\top y=\beta$, where $y=(y_1,\dots,y_m)$.
\end{definition}

\cite{hopkins_explicit_2022-1} instantiates this definition with quantum Tanner codes. Although quantum Tanner codes are strongly explicit, meaning that the matrices $H_X,H_Z$ are strongly explicit, any $\ell$-LIN instance $\cI_{\cC,\beta}$ from these codes requires a description of some $\beta\in C_X\setminus C_Z^\perp$. Previously, the only known method for finding such a codword was via Gaussian elimination, which runs in $\poly(n)$ time, and thus only yields a (weakly) explicit construction of $\beta$ and of $\cI_{\cC,\beta}$.

In contrast, our planted quantum Tanner codes in Theorem~\ref{thm:plantedresult} are guaranteed to have the all-1s vector $\allones\in C_X\setminus C_Z^\perp$, which is by definition strongly explicit. As such, we immediately obtain the following.

\begin{lemma}
  \label{lem:secsp}
  If $\cC$ is chosen from a family of planted quantum Tanner codes from Theorem~\ref{thm:plantedresult} and $\beta=\allones$, then $\cI_{\cC,\beta}$ gives a family of strongly explicit $\ell$-LIN instances for a constant $\ell=O(1)$.
\end{lemma}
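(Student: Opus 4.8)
The plan is to combine Definition~\ref{def:hlcsp} directly with the guarantees of Theorem~\ref{thm:plantedresult}; the lemma is an immediate corollary, so the ``proof'' is really just a verification of explicitness bookkeeping. First I would invoke Theorem~\ref{thm:plantedresult} to obtain a strongly explicit infinite family of planted quantum Tanner codes $(\cC^{(n)})_{n\to\infty}$, each with $\cC^{(n)}=\CSS(C_X^{(n)}=\ker H_X^{(n)},C_Z^{(n)}=\ker H_Z^{(n)})$, constant locality $\ell=\Delta^2=O(1)$ (as established in Section~\ref{sec:locdim}), and with the planted all-1s vector $\allones\in\bF_q^n$ lying in $C_X^{(n)}\setminus{C_Z^{(n)}}^\perp$. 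The latter property shows that $\beta=\allones$ is a legal choice in Definition~\ref{def:hlcsp}, so $\cI_{\cC^{(n)},\allones}$ is a well-defined $\ell$-LIN instance with $m=m_Z$ variables and $n$ linear constraints $H_Z^\top y=\allones$ over $\bF_q$.

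Next I would check strong explicitness in the sense of Definition~\ref{def:weakvstrong}, i.e.\ that the $i$th linear equation of $\cI_{\cC^{(n)},\allones}$ is computable in time $\poly(\log n,\log m)$. The $i$th equation is specified by the $i$th row of $H_Z^\top$ (equivalently, the $i$th column of $H_Z$) together with the scalar $\beta_i$. Since the family $(\cC^{(n)})$ is strongly explicit, by definition the nonzero entries of the $i$th column of $H_Z$ can each be computed in $\poly(\log n,\log m_Z)$ time; as $H_Z$ has locality $\ell=O(1)$, there are at most $\ell$ such entries, so the entire coefficient vector of the $i$th constraint is obtained within this time bound. Finally, because $\beta=\allones$, every right-hand side entry $\beta_i$ equals $1$, which is computable in constant time. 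Since constant locality forces $m=m_Z=\Theta(n)$, the bound $\poly(\log n,\log m)$ is simply $\poly(\log n)$, and hence $(\cI_{\cC^{(n)},\allones})_{n\to\infty}$ is a strongly explicit family of $\ell$-LIN instances.

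I do not expect any real obstacle here: the entire nontrivial content is already packaged into Theorem~\ref{thm:plantedresult}, whose planted all-1s codeword is what eliminates the $\poly(n)$-time Gaussian elimination step that rendered the construction of Hopkins and Lin~\cite{hopkins_explicit_2022-1} only weakly explicit. The one point requiring a moment of care is the choice of representation: one must confirm that ``outputting the $i$th linear equation'' is correctly reduced to reading off a constant number of nonzero entries of a strongly explicit sparse parity-check matrix together with a trivially explicit right-hand side, which is immediate from our conventions on strongly explicit matrices and $\ell$-LIN instances in Section~\ref{sec:sosinf}.
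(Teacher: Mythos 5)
Your proposal is correct and matches the paper's reasoning: the paper states Lemma~\ref{lem:secsp} as an immediate consequence of Theorem~\ref{thm:plantedresult} and the strong explicitness conventions of Section~\ref{sec:sosinf}, without writing out a separate proof, and your bookkeeping (constant locality $\ell=\Delta^2$, strongly explicit columns of $H_Z$, trivially explicit right-hand side $\allones$) is precisely what that immediate deduction unpacks to.
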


Formally, \cite{hopkins_explicit_2022-1} obtain their SoS lower bounds by showing the following result, which they applied to quantum Tanner codes. Below, recall that an $\ell$-LIN instance is \textit{$\mu$-satisfiable} if there exists an assignment of the variables satisfying $\geq\mu$-fraction of the linear constraints. We refer to \cite{hopkins_explicit_2022-1} and the references within for background on the SoS SDP hierarchy.

\begin{theorem}[\cite{hopkins_explicit_2022-1}]
  \label{thm:exptosos}
  Let $\cC=\CSS(C_X=\ker H_X,C_Z=\ker H_Z)$ be a quantum LDPC code of locality $\ell$ with $(c_1,c_2)$-small-set boundary and coboundary expansion over a prime-sized alphabet $\bF_p$. Then for every $\beta\in C_X\setminus C_Z^\perp$, the $\ell$-LIN instance $\cI_{\cC,\beta}$ with $m=m_Z$ variables and $n$ constraints satisfies the following:
  \begin{enumerate}
  \item Soundness: $\cI_{\cC,\beta}$ is at most $(1-c_1)$-satisfiable.
  \item Completeness: $\cI_{\cC,\beta}$ cannot be refuted by $c_1c_2m/4\ell$ levels of the SoS hierarchy.
  \end{enumerate}
\end{theorem}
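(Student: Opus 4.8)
The plan is to handle the two claims separately: soundness is a short coding-theoretic argument that only uses the distance bound, while the completeness (SoS lower bound) part is an instance of the standard machinery for proving SoS lower bounds against linear CSPs, fed with the small-set boundary expansion of $\cC$.

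\emph{Soundness.} I would argue the contrapositive. Suppose an assignment $y\in\bF_p^{m_Z}$ satisfied strictly more than $(1-c_1)n$ of the constraints $H_Z^\top y=\beta$; then $H_Z^\top y-\beta$ has Hamming weight $<c_1n$. Since $H_Z^\top y$ ranges over the column space of $H_Z^\top$, i.e.\ the row space of $H_Z$, which is $C_Z^\perp$, this says $|\beta|_{C_Z^\perp}<c_1n$. Let $\beta'\in\beta+C_Z^\perp$ be a minimum-weight representative. Because $C_X^\perp\subseteq C_Z$ gives $C_Z^\perp\subseteq C_X$, and $\beta\in C_X$, we get $\beta'\in C_X$ with $|\beta'|<c_1n$; but $\cC$ has distance $\geq c_1n$ (Lemma~\ref{lem:ssexptodis}) and every nonzero element of $C_X\setminus C_Z^\perp$ has weight $\geq d\geq c_1n$, so $\beta'$ must lie in $C_Z^\perp$, forcing $\beta\in C_Z^\perp$ and contradicting $\beta\in C_X\setminus C_Z^\perp$. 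Hence no assignment beats $(1-c_1)n$ constraints, which is Claim~1.

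\emph{SoS lower bound.} Since $\beta\notin C_Z^\perp=\mathrm{im}(H_Z^\top)$, the instance $\cI_{\cC,\beta}$ has no satisfying assignment at all, so there is no honest distribution to relax; the degree-$D$ pseudo-expectation must be assembled from local data, following the Grigoriev--Schoenebeck recipe for linear CSPs as implemented in the explicit setting by Dinur et al.~\cite{dinur_explicit_2021} and used by \cite{hopkins_explicit_2022-1}. Two inputs suffice. First, \emph{every subset of at most $D$ constraints of $\cI_{\cC,\beta}$ is simultaneously satisfiable} whenever $D<d$: a sub-system on a constraint set $S$ is unsatisfiable only if some $w$ with $\supp(w)\subseteq S$ is a linear dependency among the selected rows with $w\cdot\beta\neq0$; such $w$ lies in $C_Z$ and has $|w|\leq|S|<d$, hence $w\in C_X^\perp$ (else $w\in C_Z\setminus C_X^\perp$ would have weight $\geq d$), and then $w\cdot\beta=0$ because $\beta\in C_X=(C_X^\perp)^\perp$, a contradiction. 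Second, the constraint--variable incidence structure must expand well enough on sets of size up to $\sim D$ that the local distributions over sub-system solutions glue into a genuine degree-$\Omega(D)$ pseudo-distribution satisfying the SoS positivity axioms; this is exactly where the small-set \emph{boundary} expansion hypothesis $|H_Zy|\geq(c_2m_Z/n)|y|_{C_X^\perp}$ enters, and it is why $c_2$ appears in the degree bound. Since each constraint involves $\leq\ell$ variables we have $m=m_Z\leq\ell n$, so $d\geq c_1n\geq c_1m/\ell$; feeding $D$ proportional to $c_1n$ and expansion ratio proportional to $c_2$ into the construction, and tracking constants, yields a valid pseudo-expectation of degree $c_1c_2m/(4\ell)$, which is Claim~2.

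\emph{Main obstacle.} The delicate part is the second input: translating the algebraic small-set boundary expansion inequality (comparing $|H_Zy|$ to the coset weight $|y|_{C_X^\perp}$ over $\bF_p$) into the combinatorial unique-neighbour expansion of the $\leq\ell$-uniform constraint hypergraph that the Schoenebeck-style construction actually consumes, and then checking that the affine term $\beta$ does not obstruct local consistency --- this last point is where the first input, namely that all low-weight row-dependencies are orthogonal to $\beta$, is reused. Carrying out the $\bF_p$ version for $p>2$ rather than $\bF_2$-XOR requires only routine care in how $\bF_p$-valued variables are encoded in the SoS relaxation.
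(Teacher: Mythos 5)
The paper does not actually prove Theorem~\ref{thm:exptosos}: the theorem header credits \cite{hopkins_explicit_2022-1}, and the only remark the paper adds is that the extension from $\bF_2$ to $\bF_p$ goes through Tulsiani \cite{tulsiani_csp_2009} in place of Schoenebeck \cite{schoenebeck_linear_2008}. So there is no in-paper proof to compare your attempt against; I can only assess it on its own terms.

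Your soundness paragraph is a complete and correct argument. You correctly reduce the claim to a coset-weight question, use $C_Z^\perp\subseteq C_X$ (from $C_X^\perp\subseteq C_Z$) and $\beta\in C_X$, and then invoke only the distance bound $d\geq c_1 n$ (Lemma~\ref{lem:ssexptodis}); the strict inequality bookkeeping (``more than $(1-c_1)n$ satisfied'' giving weight $<c_1 n$) is handled correctly, which is why the conclusion is exactly ``at most $(1-c_1)$-satisfiable.'' Your observation that only $c_1$, not $c_2$, enters here is a nice sanity check. Similarly, the linear-algebra step in the completeness paragraph --- any dependency $w$ supported on $\leq D<d$ selected rows lies in $C_Z$, hence in $C_X^\perp$ by the distance, hence is orthogonal to $\beta\in C_X=(C_X^\perp)^\perp$ --- is correct and cleanly stated.

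What you do not have is a proof of the completeness claim. The step from $(c_1,c_2)$-small-set boundary expansion to the combinatorial expansion/width property that the Grigoriev--Schoenebeck--Tulsiani construction actually needs, and the derivation of the specific degree $c_1c_2 m/(4\ell)$, are asserted (``tracking constants'') rather than carried out; you flag this yourself as the ``main obstacle.'' That flag is accurate: this is precisely where the substance of the Hopkins--Lin argument lies (their bound on refutation complexity from small-set expansion), and without it the constant $c_1c_2m/(4\ell)$ is unjustified. As a reconstruction of a black-boxed result your high-level structure matches the paper's one-line summary of the route (small-set expansion $\to$ refutation complexity $\to$ Schoenebeck/Tulsiani $\to$ SoS degree), but it is a sketch, not a proof.
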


Although Hopkins and Lin \cite{hopkins_explicit_2022-1} only showed Theorem~\ref{thm:exptosos} for the binary alphabet $\bF_2$, their same proof extends to arbitrary fields $\bF_p$ for prime $p$. Specifically, their proof uses small-set (co)boundary expansion to establish a bound on \textit{refutation complexity}, which was then shown to imply an SoS bound for the binary alphabet $\bF_2$ by Schoenebeck \cite{schoenebeck_linear_2008}, and for prime-sized alphabets $\bF_p$ by Tulsiani \cite{tulsiani_csp_2009}.

Thus as described above, \cite{hopkins_explicit_2022-1} obtained (weakly) explicit, but not strongly explicit, lower bounds against $\Omega(n)$ levels of SoS by taking $\cC$ to be a quantum Tanner code in Theorem~\ref{thm:exptosos}. Meanwhile, applying our planted quantum Tanner codes in Theorem~\ref{thm:plantedresult} with Lemma~\ref{lem:secsp}, we immediately obtain the following corollary to Theorem~\ref{thm:exptosos}.

\begin{corollary}[Strongly explicit SoS lower bounds for $\ell$-LIN]
  \label{cor:sosresult}
  The $\ell$-LIN instances $\cI_{\cC,\allones}$ for planted quantum Tanner codes $\cC$ over any fixed prime-sized alphabet $\bF_p$ provide a family of strongly explicit instances with satisfiability $\leq(1-\Omega(1))$, such that no instance can be refuted by $cn$ levels of the SoS hierarchy for a sufficiently small constant $c>0$.
\end{corollary}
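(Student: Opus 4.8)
The plan is to assemble the corollary directly from Theorem~\ref{thm:plantedresult}, Lemma~\ref{lem:secsp}, and Theorem~\ref{thm:exptosos}; essentially no new argument is needed. First I would fix a prime $p$ and invoke Theorem~\ref{thm:plantedresult} with $\bF_q=\bF_p$ to obtain constants $c_1,c_2>0$ and a strongly explicit family $(\cC^{(n)})_{n\rightarrow\infty}$ of quantum LDPC CSS codes over $\bF_p$, each $\cC^{(n)}=\CSS(C_X^{(n)},C_Z^{(n)})$ having $(c_1,c_2)$-small-set boundary and coboundary expansion and satisfying $\allones\in C_X^{(n)}\setminus{C_Z^{(n)}}^\perp$. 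Since $\allones\in C_X^{(n)}\setminus{C_Z^{(n)}}^\perp$, the $\ell$-LIN instance $\cI^{(n)}:=\cI_{\cC^{(n)},\allones}$ of Definition~\ref{def:hlcsp} is well-defined, with $\ell=\Delta^2=O(1)$, and Lemma~\ref{lem:secsp} shows the family $(\cI^{(n)})_{n\rightarrow\infty}$ is strongly explicit.

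Next I would apply Theorem~\ref{thm:exptosos} to each $\cC^{(n)}$ with $\beta=\allones$. Its soundness conclusion gives that $\cI^{(n)}$ is at most $(1-c_1)$-satisfiable, which is the claimed satisfiability $\leq 1-\Omega(1)$; its completeness conclusion gives that $\cI^{(n)}$ cannot be refuted by $c_1c_2m/4\ell$ levels of the SoS hierarchy, where $m=m_Z$ is the number of variables. The only point requiring a short check is that $m=m_Z$ is linear in $n$, so that a lower bound against $\Omega(m)$ SoS levels is also one against $\Omega(n)$ levels. This follows from the construction in Section~\ref{sec:qTanconstruct}: the block length is $n=|Q|=|G|\Delta^2$, whereas the $Z$-checks are indexed by the $2|G|$ vertices of $V_{01}\sqcup V_{10}$ together with a basis of $C_A^\perp\otimes C_B^\perp$ at each such vertex, so $m_Z=2|G|\dim(C_A^\perp\otimes C_B^\perp)=2|G|(\Delta-\dim C_A)(\Delta-\dim C_B)$. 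Since $\Delta=O(1)$ and $C_A,C_B$ are proper subcodes of constant rate strictly between $0$ and $1$ (as chosen in Theorem~\ref{thm:plantedresult}), this is $\Theta(|G|)=\Theta(n)$, whence $c_1c_2m/4\ell\geq cn$ for a sufficiently small constant $c>0$. Combining these three properties — strong explicitness, satisfiability $\leq 1-\Omega(1)$, and resistance to $cn$ SoS levels — completes the proof.

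There is no real obstacle here beyond this bookkeeping: the substance lives entirely in Theorem~\ref{thm:plantedresult} (the planted-code construction of Section~\ref{sec:plant}) and in Theorem~\ref{thm:exptosos} (the reduction of \cite{hopkins_explicit_2022-1}). The one subtlety worth flagging is the alphabet: Theorem~\ref{thm:plantedresult} supplies planted codes over any finite field, but Theorem~\ref{thm:exptosos}, which invokes the SoS-versus-refutation equivalence of Schoenebeck and Tulsiani, is available only for prime-sized alphabets, so the statement is naturally restricted to $\bF_p$ with $p$ prime, exactly as in the corollary.
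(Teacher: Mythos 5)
Your proposal is correct and matches the paper's own approach, which presents the corollary as an immediate consequence of Theorem~\ref{thm:plantedresult}, Lemma~\ref{lem:secsp}, and Theorem~\ref{thm:exptosos}. The extra bookkeeping you include — checking that $m_Z=2|G|(\Delta-\dim C_A)(\Delta-\dim C_B)=\Theta(|G|)=\Theta(n)$ so that the $\Omega(m)$-level bound of Theorem~\ref{thm:exptosos} translates into an $\Omega(n)$-level bound — is a correct and reasonable detail that the paper leaves implicit.
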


\cite{hopkins_explicit_2022-1} also showed a reduction that used their $\ell$-XOR (i.e.~$\ell$-LIN over $\bF_2$) SoS lower bounds to obtain 3-XOR SoS lower bounds, as stated below. Intuitively, the reduction works by introducing dummy variables to reduce the sizes of constraints.

\begin{proposition}[Follows from Claim~6.5 in \cite{hopkins_explicit_2022-1}]
  \label{prop:lto3}
  Let $(\cI_n)_{n\rightarrow\infty}$ be a strongly explicit family of $\ell$-XOR instances such that each $\cI_n$:
  \begin{enumerate}
  \item has $\Theta(n)$ variables and constraints,
  \item has satisfiability $\leq(1-\Omega(1))$,
  \item cannot be refuted by $cn$ levels of the SoS hierarchy for a sufficiently small constant $c>0$.
  \end{enumerate}
  Then there exists a strongly explicit family $(\cI_n')_{n\rightarrow\infty}$ of 3-XOR instances that also satisfies the three properties above.
\end{proposition}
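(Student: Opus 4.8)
The plan is to use the standard ``chain-breaking'' reduction that replaces each long XOR constraint by a chain of width-$3$ XOR constraints tied together with fresh auxiliary variables, and then to check that this reduction preserves all four relevant features --- the $\Theta(n)$ size bound, the $(1-\Omega(1))$ satisfiability bound, the $\Omega(n)$-level SoS lower bound, and strong explicitness --- each with only a constant-factor loss, using that $\ell = O(1)$. This is in essence Claim~6.5 of \cite{hopkins_explicit_2022-1}; the one point that their treatment does not already address is that the reduction is \emph{local} enough to send strongly explicit inputs to strongly explicit outputs, so that is where the extra care goes.

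For the reduction itself, given an XOR constraint $x_{j_1} + \dots + x_{j_w} = b$ with $3 \le w \le \ell$, introduce $w-3$ fresh variables $z_1,\dots,z_{w-3}$ private to that constraint and replace it by
\begin{equation*}
  x_{j_1} + x_{j_2} + z_1 = 0,\qquad z_t + x_{j_{t+2}} + z_{t+1} = 0 \ (1 \le t \le w-4),\qquad z_{w-3} + x_{j_{w-1}} + x_{j_w} = b;
\end{equation*}
summing these $w-2$ width-$3$ equations over $\bF_2$ cancels the $z_t$ and returns the original constraint, and conversely any values of the $z_t$ extend any satisfying assignment of the original. Constraints of width $\le 2$ are kept unchanged. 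To keep the index arithmetic simple, I would first pad every original constraint to a common width by duplicating one of its literals an even number of times, introducing if necessary a single private variable per constraint together with $O(1)$ auxiliary width-$3$ constraints pinning that variable to $0$; this alters neither the asymptotic number of variables and constraints nor the satisfiability by more than constant factors.

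The first three properties are then routine. Each original constraint produces $O(\ell) = O(1)$ new constraints and $O(1)$ new variables, so the new $3$-XOR instance has $\Theta(n)$ variables and constraints. For soundness, given any assignment of the new instance, restrict it to the original variables: if all $w-2$ children of some original constraint are satisfied then so is that constraint, hence every violated original constraint contributes at least one violated child, and children of distinct original constraints are disjoint; therefore the violated fraction of the new instance is at least an $\Omega(1/\ell) = \Omega(1)$ multiple of the violated fraction of $\cI_n$, giving satisfiability $\le 1-\Omega(1)$. For strong explicitness, after the padding the children of the $c$-th original constraint occupy a contiguous, easily computed block of indices, so to output a given constraint of the new instance one recovers by a single division which original constraint and which chain position it corresponds to, queries the strongly explicit family $(\cI_n)$ for that constraint, and prints the associated width-$3$ constraint, all in $\poly(\log n, \log m)$ time.

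The substantive step, and the main obstacle, is transferring the SoS lower bound, which I would do by lifting a pseudoexpectation. Work in the $\{0,1\}$-over-$\bR$ encoding with the Boolean relations $x_i^2 = x_i$, and let $\sigma$ be the substitution sending each auxiliary variable $z_t$ to the fixed polynomial (of degree $\le \ell$) in the original variables that equals the iterated parity $x_{j_1}\oplus\dots\oplus x_{j_{t+1}}$, and each padding variable to $0$. By construction every constraint polynomial of the new instance lies, after applying $\sigma$, in the Boolean ideal. Now if $\cI_n$ cannot be refuted by $cn$ levels of SoS there is a degree-$\Theta(n)$ pseudoexpectation $\tilde{\bE}$ on the original variables that is normalized, positive semidefinite, and annihilates the constraint ideal of $\cI_n$; define $\tilde{\bE}'[p] := \tilde{\bE}[p\circ\sigma]$. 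Since $\sigma$ multiplies degrees by at most $\ell = O(1)$, $\tilde{\bE}'$ is a well-defined linear functional on polynomials of degree $\le \Theta(n)/\ell = \Theta(n)$; it is normalized, it is positive semidefinite because $\tilde{\bE}'[p^2] = \tilde{\bE}[(p\circ\sigma)^2] \ge 0$, and it annihilates the constraint ideal of the new instance because each new constraint composed with $\sigma$ lies in the Boolean ideal, which $\tilde{\bE}$ annihilates. Hence the new instance cannot be refuted by $c'n$ levels of SoS for a suitable constant $c'>0$, i.e.\ by a linear number of levels in its own size. The care here is in matching the precise ``level versus degree'' conventions of \cite{hopkins_explicit_2022-1} and the refutation-complexity-to-SoS translations of \cite{schoenebeck_linear_2008,tulsiani_csp_2009}, and in checking that the three pseudoexpectation axioms genuinely survive the composition with $\sigma$ with only the stated constant-factor degree loss --- both of which are exactly what Claim~6.5 of \cite{hopkins_explicit_2022-1} carries out.
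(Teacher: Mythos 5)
Your proposal is correct and follows the same route as the paper, whose proof is essentially one sentence: cite Claim~6.5 of Hopkins and Lin for the chain-breaking reduction (and its preservation of size, soundness, and SoS completeness) and observe that the reduction is local enough that it sends strongly explicit inputs to strongly explicit outputs; you re-derive the content of Claim~6.5 and then make the same locality observation. One minor slip in the pseudoexpectation-lifting step: the \emph{last} constraint in each chain pulls back under $\sigma$ to the corresponding original constraint of $\cI_n$ (modulo the Boolean ideal), not to the Boolean ideal itself, so the reason $\tilde{\bE}$ annihilates it is that it lies in the constraint ideal of $\cI_n$; the conclusion is unaffected since $\tilde{\bE}$ annihilates that larger ideal, but the sentence as written overstates what lands in the Boolean ideal.
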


While \cite{hopkins_explicit_2022-1} only showed that the reduction behind Proposition~\ref{prop:lto3} preserves weak explicitness, it by definition also preserves strong explicitness, so Proposition~\ref{prop:lto3} holds. As a corollary of Proposition~\ref{prop:lto3} and Corollary~\ref{cor:sosresult}, we immediately obtain the following.

\begin{corollary}[Strongly explicit SoS lower bounds for 3-XOR]
  \label{cor:3xor}
  There exists a strongly explicit family $(\cI_n)_{n\rightarrow\infty}$ of 3-XOR instances such that each $\cI_n$ has $\Theta(n)$ variables and constraints, has satisfiability $\leq(1-\Omega(1))$, and cannot be refuted by $cn$ levels of the SoS hierarchy for a sufficiently small constant $c>0$.
\end{corollary}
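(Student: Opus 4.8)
The plan is to obtain Corollary~\ref{cor:3xor} by feeding the strongly explicit $\ell$-XOR lower bounds of Corollary~\ref{cor:sosresult} into the reduction of Proposition~\ref{prop:lto3}. First I would specialize Corollary~\ref{cor:sosresult} to the binary alphabet $\bF_2$: letting $\cC$ range over the planted quantum Tanner codes of Theorem~\ref{thm:plantedresult} over $\bF_2$, the instances $\cI_{\cC,\allones}$ form a strongly explicit family of $\ell$-XOR instances for a constant $\ell=\Delta^2=O(1)$, and by Theorem~\ref{thm:exptosos} together with Lemma~\ref{lem:secsp} (using that $\allones\in C_X\setminus C_Z^\perp$ is strongly explicit) each $\cI_{\cC,\allones}$ has satisfiability at most $1-c_1=1-\Omega(1)$ and cannot be refuted by $c_1c_2m/4\ell=\Omega(n)$ levels of the SoS hierarchy.

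Next I would verify that this family satisfies the three hypotheses of Proposition~\ref{prop:lto3}. Hypotheses (2) and (3) are precisely the soundness and completeness statements just recalled. For hypothesis (1), the instance $\cI_{\cC,\allones}$ has exactly $n$ constraints and $m=m_Z$ variables, so it suffices to note $m_Z=\Theta(n)$. This follows from the description of quantum Tanner codes in Section~\ref{sec:locdim}: the code has block length $n=|Q|=|G|\Delta^2$, and each of the $2|G|$ vertices of $\Gamma_1$ contributes $\dim(C_A^\perp\otimes C_B^\perp)=\Delta^2(1-R_A)(1-R_B)$ parity checks for $C_Z$, so $m_Z=2n(1-R_A)(1-R_B)=\Theta(n)$ since $R_A,R_B\in(0,1)$ are fixed constants. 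Hence $\cI_{\cC,\allones}$ has $\Theta(n)$ variables and constraints.

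Finally, applying Proposition~\ref{prop:lto3} to the family $(\cI_{\cC,\allones})_{n\to\infty}$ produces a strongly explicit family of 3-XOR instances inheriting all three properties, which is exactly the conclusion of Corollary~\ref{cor:3xor}. The only points requiring care are the two verifications just described, and the fact that the reduction behind Proposition~\ref{prop:lto3} preserves \emph{strong} (not merely weak) explicitness; the latter holds because the dummy-variable substitution it performs is a local, index-computable transformation of the constraint list, as already observed in the paragraph preceding Corollary~\ref{cor:3xor}. Since everything rests on results established earlier in the excerpt, there is no substantial obstacle beyond these routine checks.
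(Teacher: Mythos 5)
Your proposal is correct and takes essentially the same route as the paper: apply the reduction of Proposition~\ref{prop:lto3} to the strongly explicit $\ell$-XOR family from Corollary~\ref{cor:sosresult} (instantiated over $\bF_2$), noting that the reduction preserves strong explicitness. The extra check that $m_Z=\Theta(n)$ is a reasonable verification that the paper leaves implicit, and your counting of parity checks is accurate.
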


We suspect a similar reduction should work for $\ell$-LIN over arbitrary fields $\bF_p$, but for conciseness we will not pursue this direction.

\section{Acknowledgments}
We thank Max Hopkins for numerous helpful discussions, and for bringing the problem of strongly explicit SoS lower bounds to our attention. We also thank Venkat Guruswami for helping to improve the exposition.

L.~Golowich is supported by a National Science Foundation Graduate Research Fellowship under Grant No.~DGE 2146752, and in part by V.~Guruswami's Simons Investigator award and UC Noyce Initiative Award award. This work was done in part while the authors were visiting the Simons Institute for the Theory of Computing.

\bibliographystyle{alpha}
\bibliography{library}

\begin{thebibliography}{DDHRZ20}

\bibitem[ABN23]{anshu_nlts_2023}
Anurag Anshu, Nikolas~P. Breuckmann, and Chinmay Nirkhe.
\newblock {NLTS} {Hamiltonians} from {Good} {Quantum} {Codes}.
\newblock In {\em Proceedings of the 55th {Annual} {ACM} {Symposium} on
  {Theory} of {Computing}}, {STOC} 2023, pages 1090--1096, New York, NY, USA,
  June 2023. Association for Computing Machinery.

\bibitem[DDHRZ20]{dikstein_locally_2020}
Yotam Dikstein, Irit Dinur, Prahladh Harsha, and Noga Ron-Zewi.
\newblock Locally testable codes via high-dimensional expanders.
\newblock {\em arXiv:2005.01045 [cs]}, May 2020.
\newblock arXiv: 2005.01045.

\bibitem[DFHT21]{dinur_explicit_2021}
Irit Dinur, Yuval Filmus, Prahladh Harsha, and Madhur Tulsiani.
\newblock Explicit {SoS} {Lower} {Bounds} from {High}-{Dimensional}
  {Expanders}.
\newblock In James~R. Lee, editor, {\em 12th {Innovations} in {Theoretical}
  {Computer} {Science} {Conference} ({ITCS} 2021)}, volume 185 of {\em Leibniz
  {International} {Proceedings} in {Informatics} ({LIPIcs})}, pages
  38:1--38:16, Dagstuhl, Germany, 2021. Schloss Dagstuhl–Leibniz-Zentrum für
  Informatik.
\newblock ISSN: 1868-8969.

\bibitem[DHLV23]{dinur_good_2023}
Irit Dinur, Min-Hsiu Hsieh, Ting-Chun Lin, and Thomas Vidick.
\newblock Good {Quantum} {LDPC} {Codes} with {Linear} {Time} {Decoders}.
\newblock In {\em Proceedings of the 55th {Annual} {ACM} {Symposium} on
  {Theory} of {Computing}}, {STOC} 2023, pages 905--918, New York, NY, USA,
  June 2023. Association for Computing Machinery.

\bibitem[DLZ23]{dinur_new_2023}
Irit Dinur, Siqi Liu, and Rachel Zhang.
\newblock New {Codes} on {High} {Dimensional} {Expanders}, August 2023.
\newblock ISSN: 1433-8092.

\bibitem[Ebe13]{eberhard_answer_2013}
Sean Eberhard.
\newblock Answer to "{Idea} behind the factorization of the matrix
  \${\textbackslash}operatorname\{diag\}(a,a{\textasciicircum}\{-1\})\$ in
  algebraic {K}-{Theory}", April 2013.

\bibitem[EH17]{eldar_local_2017}
Lior Eldar and Aram~W. Harrow.
\newblock Local {Hamiltonians} {Whose} {Ground} {States} are {Hard} to
  {Approximate}.
\newblock {\em 2017 IEEE 58th Annual Symposium on Foundations of Computer
  Science (FOCS)}, pages 427--438, October 2017.
\newblock arXiv: 1510.02082.

\bibitem[FKP19]{fleming_semialgebraic_2019}
Noah Fleming, Pravesh Kothari, and Toniann Pitassi.
\newblock Semialgebraic {Proofs} and {Efficient} {Algorithm} {Design}.
\newblock {\em Foundations and Trends® in Theoretical Computer Science},
  14(1-2):1--221, December 2019.
\newblock Publisher: Now Publishers, Inc.

\bibitem[GRS22]{guruswami_essential_2022}
Venkatesan Guruswami, Atri Rudra, and Madhu Sudan.
\newblock Essential coding theory.
\newblock {\em Draft available at http://www. cse. buffalo. edu/
  atri/courses/coding-theory/book}, 2022.

\bibitem[HL22]{hopkins_explicit_2022-1}
Max Hopkins and Ting-Chun Lin.
\newblock Explicit {Lower} {Bounds} {Against} {Omega}(n)-{Rounds} of
  {Sum}-of-{Squares}.
\newblock In {\em 2022 {IEEE} 63rd {Annual} {Symposium} on {Foundations} of
  {Computer} {Science} ({FOCS})}, pages 662--673. IEEE Computer Society,
  October 2022.

\bibitem[Hop23]{hopkins_personal_2023}
Max Hopkins.
\newblock Personal {Communication}, 2023.

\bibitem[HT03]{hofmann_violation_2003}
Holger~F. Hofmann and Shigeki Takeuchi.
\newblock Violation of local uncertainty relations as a signature of
  entanglement.
\newblock {\em Physical Review A}, 68(3):032103, September 2003.
\newblock Publisher: American Physical Society.

\bibitem[JMRW22]{jeronimo_almost_2022}
Fernando~Granha Jeronimo, Tushant Mittal, Sourya Roy, and Avi Wigderson.
\newblock Almost {Ramanujan} {Expanders} from {Arbitrary} {Expanders} via
  {Operator} {Amplification}.
\newblock pages 378--388. IEEE Computer Society, October 2022.

\bibitem[Jul13]{julien_answer_2013}
Julien.
\newblock Answer to "{Why} is the quotient map
  \${SL}\_n({\textbackslash}mathbb\{{Z}\})\$ to
  \${SL}\_n({\textbackslash}mathbb\{{Z}\}/p{\textbackslash}mathbb {Z})\$ is
  surjective?", March 2013.

\bibitem[KP23]{kalachev_two-sided_2023}
Gleb Kalachev and Pavel Panteleev.
\newblock Two-sided {Robustly} {Testable} {Codes}, August 2023.
\newblock arXiv:2206.09973 [cs, math].

\bibitem[LPS88]{lubotzky_ramanujan_1988}
A.~Lubotzky, R.~Phillips, and P.~Sarnak.
\newblock Ramanujan graphs.
\newblock {\em Combinatorica}, 8(3):261--277, September 1988.

\bibitem[LW93]{lubotzky_groups_1993}
A.~Lubotzky and B.~Weiss.
\newblock Groups and {Expanders}.
\newblock {\em DIMACS Series in Discrete Mathematics and Theoretical Computer
  Science}, 10:95--109, 1993.

\bibitem[LZ22]{leverrier_quantum_2022-1}
Anthony Leverrier and Gilles Zémor.
\newblock Quantum {Tanner} codes.
\newblock In {\em 2022 {IEEE} 63rd {Annual} {Symposium} on {Foundations} of
  {Computer} {Science} ({FOCS})}, pages 872--883. IEEE Computer Society,
  October 2022.

\bibitem[LZ23a]{leverrier_decoding_2023}
Anthony Leverrier and Gilles Zémor.
\newblock Decoding {Quantum} {Tanner} {Codes}.
\newblock {\em IEEE Transactions on Information Theory}, 69(8):5100--5115,
  August 2023.
\newblock Conference Name: IEEE Transactions on Information Theory.

\bibitem[LZ23b]{leverrier_efficient_2023}
Anthony Leverrier and Gilles Zémor.
\newblock Efficient decoding up to a constant fraction of the code length for
  asymptotically good quantum codes.
\newblock In {\em Proceedings of the 2023 {Annual} {ACM}-{SIAM} {Symposium} on
  {Discrete} {Algorithms} ({SODA})}, Proceedings, pages 1216--1244. Society for
  Industrial and Applied Mathematics, January 2023.

\bibitem[Mor94]{morgenstern_existence_1994}
M.~Morgenstern.
\newblock Existence and {Explicit} {Constructions} of q + 1 {Regular}
  {Ramanujan} {Graphs} for {Every} {Prime} {Power} q.
\newblock {\em Journal of combinatorial theory. Series B}, 62(1):44--62, 1994.
\newblock Place: SAN DIEGO Publisher: Elsevier Inc.

\bibitem[Nir23]{nirkhe_making_2023}
Chinmay Nirkhe.
\newblock Making the {Leap} to {Quantum} {PCPs}, July 2023.

\bibitem[PK22]{panteleev_asymptotically_2022}
Pavel Panteleev and Gleb Kalachev.
\newblock Asymptotically good {Quantum} and locally testable classical {LDPC}
  codes.
\newblock In {\em Proceedings of the 54th {Annual} {ACM} {SIGACT} {Symposium}
  on {Theory} of {Computing}}, {STOC} 2022, pages 375--388, New York, NY, USA,
  June 2022. Association for Computing Machinery.

\bibitem[Sch08]{schoenebeck_linear_2008}
Grant Schoenebeck.
\newblock Linear {Level} {Lasserre} {Lower} {Bounds} for {Certain} k-{CSPs}.
\newblock In {\em 2008 49th {Annual} {IEEE} {Symposium} on {Foundations} of
  {Computer} {Science}}, pages 593--602, October 2008.
\newblock ISSN: 0272-5428.

\bibitem[Tul09]{tulsiani_csp_2009}
Madhur Tulsiani.
\newblock {CSP} gaps and reductions in the lasserre hierarchy.
\newblock In {\em Proceedings of the forty-first annual {ACM} symposium on
  {Theory} of computing}, {STOC} '09, pages 303--312, New York, NY, USA, May
  2009. Association for Computing Machinery.

\end{thebibliography}

\end{document}